\definecolor{darkred}{rgb}{0.5,0,0}
\definecolor{darkgreen}{rgb}{0,0.5,0}
\definecolor{darkblue}{rgb}{0,0,0.5}
\newtheorem{theorem}{Theorem}[section]
\newtheorem{lemma}[theorem]{Lemma}
\newtheorem{corollary}[theorem]{Corollary}
\newtheorem{proposition}[theorem]{Proposition}
\newtheorem{remark}[theorem]{Remark}
\newtheorem{definition}[theorem]{Definition}
\newtheorem{observation}[theorem]{Observation}
\newtheorem{claim}[theorem]{Claim}
\Crefname{theorem}{Theorem}{Theorems}
\Crefname{lemma}{Lemma}{Lemmas}
\Crefname{claim}{Claim}{Claims}
\Crefname{fact}{Fact}{Facts}
\Crefname{corollary}{Corollary}{Corollaries}
\Crefname{proposition}{Proposition}{Propositions}
\newcommand{\bigO}{\mathcal{O}}
\newcommand\eps{\varepsilon}
\renewcommand{\paragraph}[1]{\vspace{0.15cm}\noindent {\bf #1}:}
\mathchardef\mhyphen="2D
\newcommand{\poly}{\operatorname{\text{{\rm poly}}}}
\newcommand{\polylog}{\operatorname{\text{{\rm polylog}}}}
\newcommand{\depend}{depend }
\newcommand{\bad}{bad }
\newcommand{\unfortunate}{unfortunate }
\newcommand{\unmotivated}{unmotivated }
\newcommand{\necessary}{necessary }
\newcommand{\set}[1]{\left\{#1\right\}}
\newcommand{\paren}[1]{\mathopen{}\left(#1\right)\mathclose{}}
\newcommand{\ceil}[1]{\mathopen{}\left\lceil#1\right\rceil\mathclose{}}
\newcommand{\card}[1]{\left|#1\right|}
\newcommand{\betterColouringPaletteSize}{\mathopen{} O(\min\{\alpha \log \alpha , \alpha \log \log \log n \}) \mathclose{}}
\newlength{\commentWidth}
\newcommand{\atcp}[1]{\tcp*[r]{\makebox[\commentWidth]{#1\hfill}}}
\newcommand{\knocomment}[1]{{}}
\newcommand{\abgccomment}[1]{{}}
\newcommand{\ercomment}[1]{{}}
\begin{document}
\thispagestyle{empty}

\DontPrintSemicolon 

\date{}

\title{Improved Dynamic Colouring of Sparse Graphs}


\author[1]{Aleksander B. G. Christiansen}
\author[2]{Krzysztof D. Nowicki}
\author[1]{Eva Rotenberg}
\affil[1]{Technical University of Denmark\thanks{Partially supported by the VILLUM Foundation grant 37507 ``Efficient Recomputations for Changeful Problems''.}}
\affil[2]{University of Copenhagen\thanks{Partially supported by the VILLUM Foundation grant 16582, ``BARC".}; pathway.com}

\setcounter{page}{0}

\maketitle

\begin{abstract}
Given a dynamic graph subject to edge insertions and deletions, we show how to update an implicit representation of a proper vertex colouring, such that colours of vertices are computable upon query time. We give a deterministic algorithm that uses $\mathcal{O}(\alpha ^2)$ colours for a dynamic graph of arboricity $\alpha$, and a randomised algorithms that uses $O(\min\{\alpha \log \alpha, \alpha \log \log \log n\})$ colours in the oblivious adversary model. Our deterministic algorithm has update- and query times polynomial in $\alpha$ and $\log n$, and our randomised algorithm has amortised update- and query time that with high probability is polynomial in $\log n$ with no dependency on the arboricity. 

Thus, we improve the number of colours exponentially compared to the state-of-the art for implicit colouring, namely from $O(2^\alpha)$ colours, and we approach the theoretical lower bound of $\Omega(\alpha)$ for this arboricity-parameterised approach. Simultaneously, our randomised algorithm improves the update- and query time to run in time solely polynomial in $\log n$ with no dependency on $\alpha$. Our algorithms are fully adaptive to the current value of the dynamic arboricity at query or update time.

\end{abstract}

\thispagestyle{empty}

\newpage

\section{Introduction and Related Work}
In the vertex colouring problem we are asked to assign colours to the vertices of a graph in such a way that no neighbouring vertices are assigned the same colour. More precisely, we are given an undirected and unweighted $n$-vertex graph G = (V,E), in which the vertices have distinct labels from some range polynomial  in $n$
; the goal is to assign to each vertex $v \in V$ a colour $C(v)$ from set $\set{1, 2, \dots, c}$, in a way that does not assign the same colour to any pair of neighbours, i.e. $\forall u,v \in V$, $uv \in E \Rightarrow C(u) \neq C(v)$. The smallest number of colours one needs to vertex colour a graph $G$ is called the \emph{chromatic number} of $G$. Vertex colouring is one of the basic graph problems that has been studied in one way or another for over 150 years \footnote{The question whether every planar graph can be coloured with 4 colours was raised already in 1852; in 1879, it was brought up by Arthur Cayley at a meeting of the London Mathematical Society.}. 

Aside from being a fundamental and well-motivated graph problem, colouring also has ties to other important graph problems such as computing matchings and independent sets, in the sense that techniques for one of these problems not seldomly have implications for the others, especially in modern computing models such as distributed or dynamic algorithms.


\paragraph{Upper bounds on chromatic number}
Given a graph of max degree $\Delta$, one can colour it with $\Delta+1$ colours quite simply. Indeed, consider the vertices in an arbitrary order and assign to each vertex one of the free colours that were not assigned to the neighbours earlier in the ordering. In 1940, Brooks observed that the only graphs that need $\Delta+1$ colours are cliques and odd cycles \cite{brooks_1941}. This hints to the fact that the maximum degree of a graph might not necessarily be a good parameter, when it comes to colourings. In particular, trees can be 2-coloured, but they can have arbitrarily high maximum degree. In this paper, we consider colourings that use palettes of size that depends on the \emph{arboricity} of the input graph, to which we sometimes refer as \emph{arboricity dependent colourings}. 

In a natural language, the arboricity of a graph is the smallest number of forests one needs to use to cover all the edges of a graph. Note that arboricity $\alpha(G) \leq \Delta(G)$ always. It is known, that graphs of arboricity $\alpha$ can be coloured with $2\alpha$ colours. 

As such, arboricity seems to be slightly better parameter to consider than the maximum degree in the context of vertex colouring, as it captures the fact that sparse graphs can be coloured with few colours.
Still, it is still not perfect, as there are graphs with low chromatic number and high arboricity (e.g., $K_{n,n}$). However, in general, determining -- or even approximating -- the chromatic number of a graph is an NP-hard problem~\cite{10.1145/1132516.1132612,DBLP:conf/icalp/KhotP06}, as opposed to determining the maximum degree or approximating the arboricity~\cite{GabowWestermann,Gabow1995,blumenstock2019constructive}. 

\paragraph{Vertex colouring in the LOCAL model} 
The LOCAL model of computation was first introduced by Linial in \cite{DBLP:journals/siamcomp/Linial92}. In this model, we focus mostly on graph problems where the input graph is viewed as a communication network, and the computation is performed by the vertices of the network in synchronous rounds. In each round, each pair of neighbours in the network is allowed to exchange messages of unbounded size, and perform unbounded local computation. As such, any $R$-round LOCAL algorithms is essentially a function that map $R$-hop neighbourhood of a vertex to a label that is part of the collective output. 

Vertex colouring is one of the central problems in the area of distributed computing, and it has been widely studied in this model of computation~\cite{DBLP:journals/siamcomp/Linial92,SzegedyV93,BarenboimEK14,Barenboim16,HarrisSS16,ChangLP18,HalldorssonKMT21,Balliu0KO22}.

\paragraph{Dynamic graphs} In this paper, we consider the vertex colouring problem of low arboricity graphs in the dynamic setting, in which graphs are subject to local changes in the form of edge-insertions and edge-deletions.
In many practical applications, one is working with a sequence of slowly evolving data sets, rather than with a set of independent inputs. Thus, if even linear time algorithms are too slow to be feasible, we can yet devise algorithms that adapt to local changes faster than recomputing everything from scratch. Such algorithms are called \emph{dynamic (graph) algorithms}.

In this paper, we consider a variant of dynamic graph algorithm that allows edge updates. That is, each update can either insert an edge into the graph or delete an edge from the graph. The input of a dynamic algorithm is a sequence of updates $u_1, u_2, \dots$, and as a result, we require that the algorithm maintains some access to the output. This access can be given in multiple ways.
The first way of providing access to the output is to maintain it explicitly. That is, the output to the problem is stored explicitly, and for each update $u_i$ the algorithm outputs a sequence of updates $o_{(i,1)}, o_{(i,2)}, \dots$ to the output. The second way of providing output is via a query data structure. That is, after each update $u_i$, the algorithm may receive a sequence of queries, one by one, $q_{(i,1)}, q_{(i,2)}, \dots$, for each of which it needs to provide an answer. The format of a query depends on the problem, and sometimes even for one problem one might consider several types of queries. E.g., for the $c$-colouring problem the most natural query consists of one vertex, and the answer to such a query is the colour from the maintained colouring, $\{1,2,\ldots,c\}$. If multiple vertices are queried, the partial colouring returned must be extendable to a proper $c$-colouring of all vertices.

Similar \emph{implicit} models are often necessary for dynamic graphs where the recourse after an update is higher than the desired update time. For example, when maintaining two-edge connectivity~\cite{HolmLT98,Thorup00,HolmRT18} or a planar embedding of a graph~\cite{Poutre94,HolmR20soda,HolmR20stoc}; after just one update, the number of two-edge-connected components may change linearly, or linearly many vertices must change coordinates in the plane. In these cases, the state-of-the-art algorithms facilitate queries about relationships between vertices or rotational systems of edges, and an implicit representation of the structure facilitates such queries.

In the case of arboricity dependent vertex colouring, by the results of \cite{DBLP:journals/algorithmica/BarbaCKLRRV19}, we know that any explicit dynamic algorithm for $f(\alpha)$ colouring of a graph of arboricity $\alpha$, for any computable function $f$, needs $n^{\Omega(1)}$ time to process a single update. Hence, the only way of obtaining $\poly(\log(n))$ time algorithms for a colouring that uses a number of colours that depends only on $\alpha$ is to consider implicit dynamic algorithms, which is what we do in this paper. 
We call an algorithm that maintains the output explicitly an \emph{explicit dynamic algorithm}, and an algorithm that provides the access to the output via queries an \emph{implicit dynamic algorithm}. For most problems, one can easily derive implicit algorithms from explicit algorithms, and designing implicit algorithms is usually slightly simpler. 

When considering randomised dynamic algorithms, it is important to distinguish between the scenario in which the update and query sequences can be altered depending on the answer to earlier queries, and the scenario where the answer to previous queries does not change the next query asked. In the first setting, we say that the algorithm works against an \emph{adaptive adversary}, where as in the second scenario, we say that the adversary is \emph{oblivious}. Clearly, any deterministic algorithm also holds against an adaptive adversary.

\paragraph{The Simulation Framework}\label{simulation-framework}
The starting point of this paper is the observation that in order to determine an answer to a query in the implicit dynamic setting, one can simulate a \textbf{L}ocal \textbf{C}omputation \textbf{A}lgorithm \cite{DBLP:conf/innovations/RubinfeldTVX11}. While the complexity of the query-response of a dynamic algorithm might be larger than its LCA counterpart (as the analysis of an LCA algorithm does not include computation complexity), quite often the LCA algorithms are computationally simple, and the query complexity of an LCA algorithm is the deciding factor.
One way of obtaining LCA algorithms is to simulate distributed algorithms~\cite{DBLP:journals/tcs/ParnasR07}. The query resulting from a simulation of an $R$-round algorithm is just the size of the largest $R$ hop neighbourhood i.e.\ we obtain an LCA algorithm with $\bigO(\Delta^R)$ probe (query) complexity, where $\Delta$ is the maximum degree of the input graph. However, as observed in~\cite{DBLP:conf/wdag/GhaffariL17}, one can go one step further and design the distributed algorithm in a way, that makes the simulation less expensive. That is, one can design algorithms that make the outcome for a single vertex depend on some small subset of its $R$-hop neighbourhood, rather than whole $R$-hop neighbourhood. We will refer to size of this small subset as the \emph{volume} of the distributed algorithm. The volume of distributed algorithms has been studied before in~\cite{DBLP:conf/podc/RosenbaumS20}. Provided that a LOCAL or LCA algorithm has \emph{both} low volume and low time-complexity, we can use this simulation technique to obtain dynamic algorithms.

While we are not aware of any small volume algorithms that could be directly simulated, some of the papers on the subject present techniques that are useful here. The first relevant technique is a deterministic colour reduction technique~\cite{DBLP:journals/siamcomp/Linial92}, that can be also adapted to low arboricity graphs~\cite{DBLP:journals/dc/BarenboimE10}. The second relevant technique is a randomised colour assignment - while it does not guarantee that we colour all the vertices, it allows us to obtain some partial colouring that is easily extendable to a proper full colouring (see e.g. \cite{DBLP:conf/wdag/GhaffariL17}). Nevertheless, arboricity dependent colouring is a problem that in a distributed setting requires $\Omega(\log n)$ rounds, and that seems to have high volume (i.e., to our knowledge there are no results on the volume of arboricity dependent colouring, and there are also no reasons to believe it is significantly smaller than $\Delta^{\bigO(\log n)}$). Hence, in order to follow a string of reductions somewhat similar to the one described above, we need to make some modifications. 

One key observation is that some information is possible to maintain dynamically efficiently, even though that same information would  
require a large volume to compute in a distributed setting. In particular, we can maintain a \emph{bounded out-orientation} i.e.\ an orientation of the edges of a graph such that each vertex has bounded out-degree in $\bigO(\poly(\alpha, \log{n}))$ time dynamically. Then, we may consider a distributed model of computation that has access to this orientation, which allows us to design algorithms with drastically lower volume. Combined with other observations of a similar flavour and efficient sequential implementations of some subroutines, we arrive at two efficient implicit dynamic colouring algorithms matching the colouring guarantees of the best distributed algorithms for arboricity dependent colouring. 
Finally, we exploit additional symmetry breaking derived from the query sequence to provide an efficient implicit dynamic colouring algorithm that uses even fewer colours. We believe that some of the techniques developed to do so might be of independent interest to researchers studying dynamic as well as distributed models of computation.

\vspace{-5pt}
\subsection{Our Results} 
In this paper we propose three kinds of implicit algorithms for arboricity dependent colouring:
\begin{itemize}[leftmargin=5pt]
\item deterministic $\bigO(\alpha ^2)$ colouring algorithms with:
\begin{itemize}
	\item amortised $\bigO(\poly(\log n))$ update and query time, or 
	\item worst-case $\bigO(\poly(\alpha, \log n))$ update and query time, 
\end{itemize} 
\item a randomised algorithm providing $\bigO(\alpha \log \alpha)$ colouring w.h.p. in 
\begin{itemize}
	\item $\bigO(\poly \log n)$ worst-case update and query time, against an oblivious adversary, or
	\item $\bigO(\poly(\alpha, \log n ))$ worst-case update and query time, against an adaptive adversary	
\end{itemize}
\item a randomised algorithm providing $\betterColouringPaletteSize$ colouring w.h.p. in $\bigO(\poly \log n)$ amortised update and query time, against an oblivious adversary.
\end{itemize}

Our randomised algorithms hold under the oblivious adversary assumption, and their stated bounds on the running times hold with high probability.

\paragraph{Strengths}
All of those results provide implicit colourings with the number of colours polynomial in $\alpha$, which greatly improves over the current best algorithms for implicit colouring%
~\cite{DBLP:journals/corr/abs-2002-10142,DBLP:conf/icalp/CR22} that need a number of colours that is exponential in $\alpha$. Furthermore, the query- and update time complexities of our randomised algorithms do not depend on $\alpha$, and are $\polylog n$, even if $\alpha$ is large.

Thus, on one hand, in the small-arboricity realm, we provide an exponential improvement over state of the art. On the other hand, for arboricity $\alpha = \Omega(\log ^{\varepsilon} n)$, we reduce the problem to subproblems of arboricity $O(\log\log n)$, and this type of reduction does not translate directly into a distributed algorithm -- in spite of that it has distributed algorithms as its main inspiration. 
Curiously, even if, hypothetically, the distributed graph would come equipped with an asymptotically optimal out-orientation, our techniques would not directly translate.

\subsection{Overview of Techniques}
We will mainly design implicit dynamic colouring algorithms with a polynomial (in $\alpha$) amount of colours and an update and query time in $\bigO(\poly(\alpha, \log n))$. This is motivated by two things: first of all the explicit lower bound from~\cite{DBLP:journals/algorithmica/BarbaCKLRRV19} seem to hint at the fact that it is the most difficult to design such algorithms for $\alpha = \bigO(\log n)$ which in itself is a rather large class of graphs containing for example planar graphs. Second of all, and perhaps more importantly, because we will give a (rather standard) randomised reduction from the general case to the case where $\alpha = \bigO(\log n)$. The idea is that a random partition of the vertices will form subgraphs of low arboricity with high probability. We may then apply the algorithms to these low-arboricity subgraphs. We describe this reduction in more detail in~\cref{a:to_low_arboricity}. 
In light of this reduction, we lose nothing from limiting ourselves to the regime $\alpha = \bigO(\log n)$. Hence, we will focus on these low arboricity graphs for the rest of the paper, and then derive corollaries using the randomised reductions. 

As already hinted to in the introduction the starting point is the observation that we may simulate distributed algorithms to answer colouring queries, provided that the distributed algorithms have both low volume and low time-complexity. The first, fundamental observation is that in the dynamic setting, we can maintain some additional information that is difficult to compute in a distributed way -- i.e.\ a form of dynamic advice. In particular, the distributed arboricity dependent colouring indeed requires $\Omega(\log n)$ rounds~\cite{DBLP:journals/algorithmica/BarbaCKLRRV19}, but given a $d$ out-degree orientation, one can for instance compute an $\bigO(d^2)$ vertex colouring in $\bigO(\log^* n)$ rounds \cite{DBLP:journals/dc/BarenboimE10}. Hence, since the low out-degree orientation can be maintained in polylogarithmic time, for the purpose of simulation, we may assume that the distributed algorithm starts with knowledge of such an orientation. This already reduces the volume of the algorithm from $\Delta^{\Omega(\log n)}$ to $\Delta ^{\bigO(\log^* n)}$.
This observation is a cornerstone in all of our algorithms. 

For our deterministic algorithm, we also give the distributed algorithm implicit access to a colouring with $2^{\bigO(\alpha)}$ colours. This allows us to adapt the algorithm by \cite{DBLP:journals/dc/BarenboimE10} and further reduce this round complexity to be constant, independent of $\alpha$, resulting in low volume. The final step is then to show that we can implement all of the computational steps efficiently. To this end, we have to rely on explicit constructions in lieu of randomised existential proofs, which are normally sufficient for distributed algorithms. Luckily~\cite{erdos} provides an explicit construction that we can apply efficiently in a dynamic setting.

For the first randomised algorithm, we adapt the algorithm from \cite{DBLP:conf/wdag/GhaffariL17}, based on something similar to the graph shattering technique \cite{DBLP:conf/focs/BarenboimEPS12} which reduces the problem to many smaller subproblems. In the context of colourings, shattering typically works by running a simple colour-proposition experiment that leaves some vertices coloured and some vertices uncoloured. The idea is then that the uncoloured vertices form smaller subgraphs that corresponds to the smaller subproblems.
In the distributed setting, it is enough that these smaller subproblems have low diameter, and hence allow for very efficient distributed algorithms. In case of dynamic graph algorithms, in order to solve problem on the smaller subproblem in some trivial way we require not only that they  have low diameter, but also small volume. Hence, we have to consider directed shattering, and show that we get not only low diameter, but also low volume. Luckily, for our first randomised algorithm a fairly straight-forward modification of the argument from distributed setting also applies to shattering in our dynamic setting.  As for the small shattered components left behind - we can use a simple sequential algorithm to handle them. 

In the third and final algorithm, we go beyond this simulation with dynamic advice framework, and design a dynamic algorithm using fewer colours than what is obtainable by simulating known distributed algorithms (even if assuming advice in the form of edge-orientations). We arrive at this algorithm by addressing some of the limitations of the simple randomised algorithm described above. Again, we want to use the shattering technique~\cite{DBLP:conf/focs/BarenboimEPS12}, and therefore we briefly describe the general idea behind it and some of the problems that arise when trying to go below $\bigO(\alpha \log \alpha)$ colours, when $\alpha = o(\poly n)$. 
In the original setting, the shattering technique tries to assign some labels to vertices at random; a label can be either good (a feasible output / something that allows the algorithm to progress) or bad. The shattering technique essentially relies on the observation that if there is a bounded probability of a particular label being bad, and the random outcomes of the vertices are loosely dependent (or independent), then the connected components induced by the vertices with bad labels are small, and can be handled in some naive way.

For our purpose, this technique can't be applied directly, as we either have too high dependency between the outcomes, or too high probability of a label being bad. As such, we propose another approach, based on a random partition into graphs with low arboricity i.e.\ we want to construct an arb-defective colouring, as introduced and used  in~\cite{FraigniaudHK16,Barenboim16,BarenboimEG22,MausT20,Maus21}.  Given such an arb-defective colouring, we can further reduce the arboricity of the graphs we are trying to colour to the case where $\alpha = o(\poly \log n)$. It is however not clear how to construct such an arb-defective colouring. The reason is that the following trade-off should be satisfied:
suppose we wish to construct an arb-defective colouring that allows $g(\alpha)$ colouring conflicts while picking colours uniformly at random from a palette of size $h(\alpha)$. The colouring strategy has to satisfy two properties:
\begin{enumerate}
    \item The probability that a specific vertex fails in receiving a colour should be at most $1/\alpha^2$ in order for standard shattering arguments to be directly applied. 
    \item At the same time $h(\alpha)g(\alpha) = o(\alpha \log \alpha)$ in order for the defective colouring to be useful. 
\end{enumerate}
This combination leads to the following problem: On one hand, 
in order to satisfy 1), we have to use palettes large enough so that the probability that $g(\alpha)$ neighbours pick the same colour becomes smaller than $1/\alpha$ i.e.\ in order to directly apply a Chernoff bound, it seems difficult to avoid either $g(\alpha) = \Omega(\log \alpha)$, or $h(\alpha)$ to be so large that the probability that a vertex avoids $\Omega(\alpha)$ colours is smaller than $1/\alpha$. Again, in order to apply Chernoff bounds in a standard way, we need $h(\alpha) = \Omega(\alpha \log \alpha)$ which contradicts $2)$. On the other hand, consider the following graph: 
\begin{center}
\includegraphics[width=8cm]{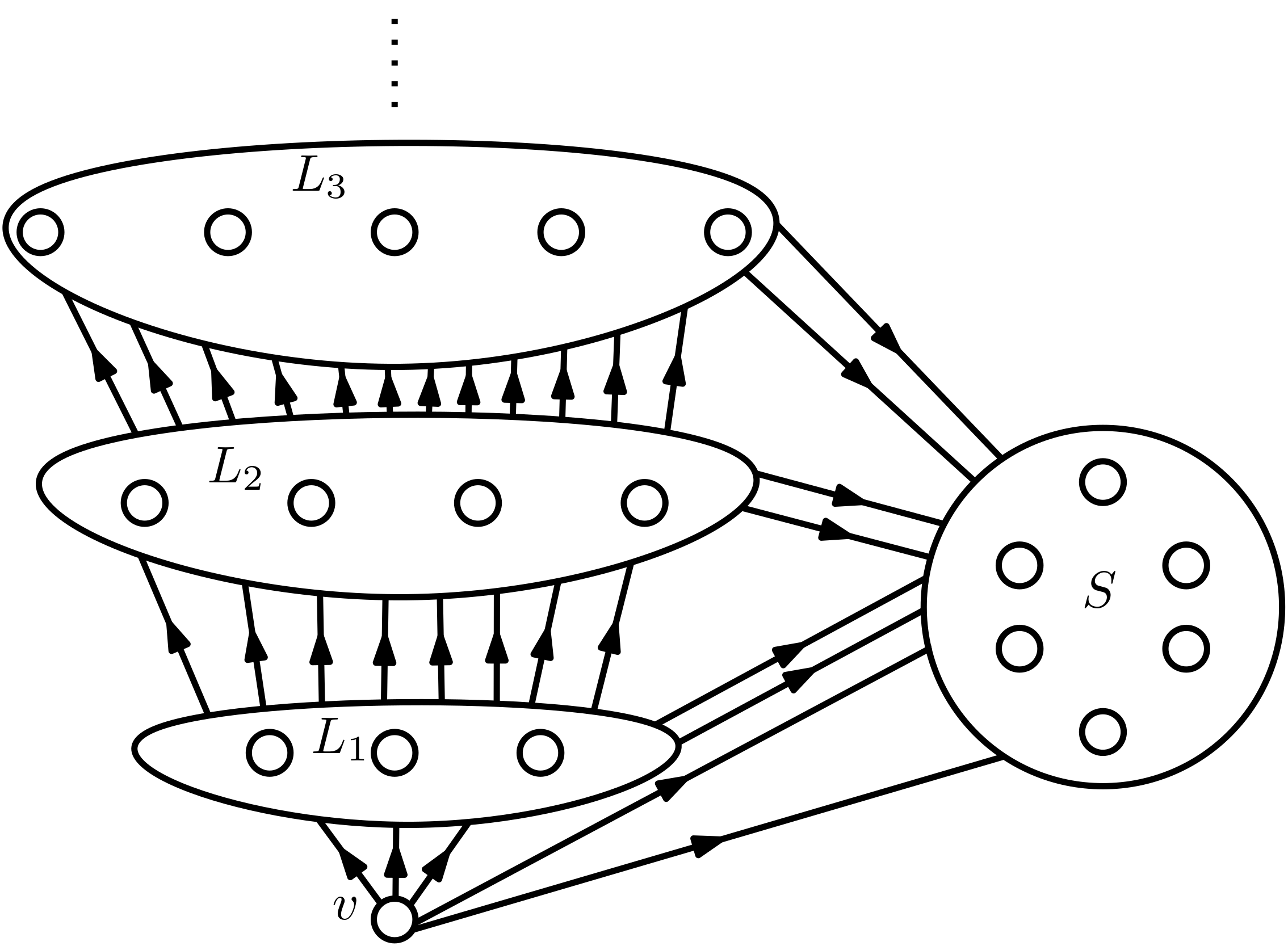}\\
\end{center}
The probability that all of the vertices in the set $S$ of size $g(\alpha)$ receives the same colours depends only on $\alpha$. So if $\alpha = O(1)$, and we consider many of such copies, then w.h.p. one will have $S$ be a monochromatic set. 
Now the probability that each vertex in $L_{i}$ chooses the same colour will be at least $1/h(\alpha)$. In particular, the expected size of $L_{i+1}$ would be roughly $\Omega(\alpha)/h(\alpha) = g(\alpha)/o(\log \alpha)$ times the size of $L_{i}$ i.e.\ the volume in front of $v$ would grow exponentially, and we would not get any shattering.\\
In Section~\ref{s:better_colouring}, we address this problem in steps. First we observe that we can get rid of some of these dependencies by using the query ordering to break symmetry. An amortised analysis then shows that we can propose colours so as to get very bounded dependence. However, this comes at the cost that now outcomes of colour experiments become correlated with the dependence between other experiments. Hence, we cannot apply standard shattering directly, as these techniques usually assume that all dependencies are given up front. 
Instead we can consider random subgraphs and show that each one has little probability of being part of a shattered component. We elaborate on this approach in Section~\ref{s:better_colouring}.

\subsection{Further Related Work}
\paragraph{Dynamic Graph Algorithms}
Graph colouring being NP-hard in general, we only have hope for polylogarithmic update-time dynamic algorithms in restricted cases for which a (near-) linear time algorithm is known. Such restricted graph classes include graphs of bounded max degree, interval graphs, planar graphs, minor-free graph classes, forests, and graphs of bounded arboricity.

While forests are straightforward to vertex-colour in the static setting, they surprisingly do not admit explicit dynamic recolouring algorithms~\cite{DBLP:journals/algorithmica/BarbaCKLRRV19}. As the authors of \cite{DBLP:journals/corr/abs-2002-10142} also point out, \cite{DBLP:journals/algorithmica/BarbaCKLRRV19} shows that one cannot hope for an algorithm that maintains an $f(\alpha)$ colouring explicitly, for any function $f$ parameterised only by the arboricity; there must be some dependency on $n$. This is the main reason to study the implicit colouring problem, in which we need to report a colour only as an answer to a query. 
In \cite{DBLP:journals/corr/abs-2002-10142} the authors provide an implicit dynamic $2^{O(\alpha)}$-colouring algorithm with polylogarithmic update time, surpassing the lower bound for explicit algorithms.
For explicit colouring, Solomon and Wein give an algorithm for $\tilde{O}(C/\beta \cdot \log ^2 n)$ colouring with $O(\beta)$ recourse, and for $O(\alpha\log^2 n)$ colouring in amortised constant time.


For graphs of bounded maximum degree, there has been rapid improvement during the past years~\cite{DBLP:conf/soda/BhattacharyaCHN18,DBLP:conf/stacs/Henzinger020,DBLP:journals/talg/HenzingerP22,DBLP:journals/talg/BhattacharyaGKL22}, and currently there exist randomised dynamic algorithms with an update time that is constant with high probability~\cite{DBLP:journals/talg/BhattacharyaGKL22}. 
Another class of easily colourable graphs is \emph{interval graphs}, for which colouring is highly motivated due to the applications in scheduling. Here, already decades ago it was shown that efficient dynamic updates to a dynamic $k$-colourable interval graph are impossible to obtain~\cite{KiersteadTrotter1981extremal,DBLP:journals/ita/ChrobakS88}. Recently, the topic of dynamic interval graphs has been studied both in the approximate~\cite{DBLP:conf/swat/BosekDFPZ20} setting and the setting restricted to unit intervals~\cite{BosekZ22}.

For the dynamic \emph{bounded out-degree orientation} problem, Brodal and Fagerberg achieve an $O(\alpha)$ out-orientation in $O(\alpha + \log n)$ (amortised) update time~\cite{Brodal99dynamicrepresentations}. 
Henzinger, Neumann, and Wiese~\cite{DBLP:journals/corr/abs-2002-10142} give a different $O(\alpha)$ out-orienting algorithm with $O(\log ^2 n)$ amortised update time. 
For worst-case update time arboricity decompositions, the current best is by Christiansen et al~\cite{AOOA-CHHRS} who decompose into $O(\alpha)$ forests having $O(\log^2 n \log \alpha)$ worst-case update time. 
Other trade-offs include: $O(\alpha \log n)$ out-edges in constant amortised update time~\cite{10.1016/j.ipl.2006.12.006}, $O(\alpha + \log n)$ out-edges in $O(\log n)$ worst-case update time~\cite{berglinetal:LIPIcs:2017:8263}, and $(1+\varepsilon)\alpha +2$ out-edges in $O(\log^3 (n)\alpha^2/\varepsilon^6)$ worst-case update time~\cite{DBLP:conf/icalp/CR22}. See also~\cite{He2014OrientingDG,kopelowitz2013orienting,benderESA21}.


\section{Preliminaries} \label{sec:prelim}
Here, we state well-known results on maintaining out-orientations on dynamic graphs. The first result states that we can maintain an acyclic out-orientation with low out-degree:

\begin{theorem}[\cite{DBLP:conf/icalp/CR22}]\label{thm:acyclic}
 Let $G$ be a dynamic graph on $n$ vertices undergoing insertions and deletions such that the arboricity of $G$ never exceeds some maximum 
$\alpha_{max}$. Then there exists an algorithm maintaining an acyclic $2(\alpha_{max}+1)$ out-orientation of $G$ with an amortised update time of $\bigO{(\alpha_{max}^2 \log n)}$.
\end{theorem}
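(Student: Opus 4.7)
My plan is to maintain, at every moment, labels $\ell(v) \in \mathbb{R}$ on the vertices, and orient each edge from the lower-labelled endpoint to the higher-labelled endpoint (breaking ties by vertex ID). Orientation-by-total-order is automatically acyclic, so the acyclicity part of the invariant is free, and all algorithmic effort goes into keeping the maximum out-degree bounded by $d := 2(\alpha_{max}+1)$.

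The first step is to prove the combinatorial claim that whenever some vertex $w$ has out-degree exceeding $d$ in a current acyclic orientation of $G$, there is a directed out-path from $w$ to some vertex $x$ with out-degree strictly less than $d$, so that reversing this path (i) leaves the orientation acyclic, (ii) decreases $\mathrm{outdeg}(w)$ by $1$, (iii) increases $\mathrm{outdeg}(x)$ by $1$, and (iv) does not change any other out-degree. The existence argument goes as follows: let $R$ be the set of vertices reachable from $w$ by a directed out-path and having out-degree $\ge d-1 = 2\alpha+1$; suppose for contradiction no such $x$ exists, i.e.\ every vertex reachable from $w$ has out-degree $\ge d-1$, so in particular every vertex in $R$ has all its out-neighbours in $R$. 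Then $|E(G[R])| \ge (d-1)|R| + 1 = (2\alpha+1)|R| + 1$, contradicting the mad-bound $|E(G[R])| \le \alpha(|R|-1)$ that arboricity $\alpha$ imposes on every induced subgraph. Hence a reversible path exists.

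The algorithm itself is then: on a deletion, remove the edge (preserving both invariants trivially); on an insertion of $uv$, orient it according to labels, and while some endpoint has out-degree $>d$, run a BFS/DFS in the out-direction from that endpoint to locate the closest vertex of out-degree $\le d-1$ and reverse the path. Implementationally, reversing a path is realised by slightly incrementing the label of the start vertex of the path so that the first edge flips; for each internal vertex of the reversed path one replaces one in-edge by one out-edge, which can be done in $O(\alpha)$ local work per vertex on the path (since one must scan its out-list to reinsert the incoming edge). Thus each BFS layer of the search touches $O(\alpha)$ new out-edges per vertex discovered, and a full round of fixup costs $O(\alpha)$ per vertex visited.

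The hard part, and the place where I would expect to really borrow from the CR22 analysis rather than reinvent it, is the amortised bound $O(\alpha_{max}^2 \log n)$. The natural handle is a potential function of the form $\Phi = \sum_v \phi(\mathrm{outdeg}(v))$ plus a term that charges the cumulative growth of labels $\sum_v \ell(v)$; the first term must be tuned so that one reversal releases enough potential to pay for the $O(\alpha)$ work done at each vertex of the path, while the second term controls how many times a given vertex can be the endpoint of a reversal before either its out-degree bottoms out or its label rises. The $\log n$ factor presumably enters via a bound on how often a vertex can have its label increased before a global relabeling is justified, analogous to the list-labelling amortisation used in known arboricity-based out-orientation data structures. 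Laying out this potential and verifying that insertions pay for both the search and the reversal is the chief technical obstacle; deletions are handled for free since they only decrease out-degrees and cannot break acyclicity.
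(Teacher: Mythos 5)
This theorem is quoted from \cite{DBLP:conf/icalp/CR22} and is not reproved in the present paper, so there is no internal argument to compare against; I can only assess your sketch on its own terms. The existence lemma for a reversible directed out-path is essentially sound (modulo an off-by-one: the contradiction hypothesis should be that every reachable vertex has out-degree at least $d$, not $d-1$, though either version contradicts $|E(G[R])|\le\alpha(|R|-1)$). The real problem is the corrective mechanism you propose. You want acyclicity for free by orienting along a total order $\ell$, but then invoke path reversal as the repair step, and these two do not coexist. Incrementing $\ell(w)$ flips only the edges incident to $w$, not the downstream path; if instead you physically reverse the path $w\to v_1\to\cdots\to x$ while leaving the rest of the orientation fixed, acyclicity can fail: with edges $w\to v_1$, $w\to u$, $v_1\to x$, $u\to x$, reversing the two-edge BFS path $w\to v_1\to x$ yields $x\to v_1\to w$, and now $w\to u\to x\to v_1\to w$ is a directed cycle. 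Relabelling the whole path so that its order reverses would preserve acyclicity, but then edges \emph{between} path vertices flip as well, and the bookkeeping ``only $w$ and $x$ change out-degree'' no longer holds. Some additional structural argument, or a different reorientation primitive, is needed before the invariant-restoration step is even well defined.

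The second gap you flag yourself: the amortised bound $\bigO(\alpha_{max}^2\log n)$ \emph{is} the theorem, and your sketch replaces its proof with a promissory note about a potential $\Phi=\sum_v\phi(\mathrm{outdeg}(v))$ plus a label term. Without actually instantiating $\phi$, bounding how much potential a single insertion injects, showing that each unit of BFS and reversal work during a cascade is paid for by a matching drop in $\Phi$, and explaining concretely where the $\log n$ factor arises, you have argued only that the invariant $\mathrm{outdeg}\le 2(\alpha_{max}+1)$ can be restored after every update, not that it can be restored in the claimed amortised time. As written, the proposal establishes feasibility of the invariant but not the update-time bound.
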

Note that even if we do not know an upper bound the arboricity $\alpha_{\max}$, as explained in~\cite{DBLP:conf/icalp/CR22} we can use the ideas of~\cite{DBLP:conf/stoc/SawlaniW20} in order to maintain $\bigO{(\log n)}$
 copies of the graph -- each with different upper bounds on the arboricity -- as well as a pointer to a copy with an upper bound close to the actual arboricity of the graph in poly-logarithmic update time. 

The above data structure~\cite{DBLP:conf/icalp/CR22} maintains a list of out-neighbours for each vertex, and supports access to the $k$ out-neighbours in $O(k)$ time.

However, if one is only interested in constant factor approximations of the arboricity, more recent work gives improved running times.

\begin{theorem}[\cite{AOOA-CHHRS}] \label{thm:approximatealpha}
	Let $G$ be a dynamic graph on $n$ vertices undergoing edge insertions and deletions. 
	Then there exists an algorithm that:
	\begin{itemize}
		\item explicitly maintains an arboricity decomposition into $O(\alpha)$ forests,
		\item explicitly maintains a $(1+\epsilon)$-approximation to the fractional arboricity, and thus, a constant-factor approximation to the arboricity
		\item explicitly maintains an out-orientation with $O(\alpha)$ out-edges,
	\end{itemize}  
Its worst-case update time is $O(\log^2 n \log \alpha)$, where $\alpha$ is the arboricity at the update time.
%
\end{theorem}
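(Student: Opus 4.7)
The plan is to maintain a dynamic out-orientation of $G$ together with an LP-duality certificate of near-optimality, following the primal-dual framework of Sawlani--Wang~\cite{DBLP:conf/stoc/SawlaniW20}. Upon each edge insertion, the new edge would be oriented from the endpoint with smaller out-degree; if this pushes its out-degree above an allowed threshold related to the current arboricity, the algorithm would search for a \emph{flip path} --- a directed path from the overloaded vertex to some vertex of low out-degree --- and reverse its orientation. Reversing a flip path lowers the out-degree at the overloaded vertex by one without creating a new overload. Whenever no such path exists, LP duality produces an explicit dense subgraph witnessing that the arboricity is at least the current maximum out-degree, so the same certificate simultaneously yields the $(1+\epsilon)$-approximation to the fractional arboricity. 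The $O(\alpha)$-forest decomposition can then be read off by a standard layering argument on the out-edges of the orientation.

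To reach $O(\log^2 n \log \alpha)$ worst-case update time, I would use top trees (or link-cut trees) to support path searches and flips in $O(\log n)$ per operation, bound the number of flips per update by $O(\log n)$ via a potential argument on the total out-degree surplus over optimum, and maintain $O(\log \alpha)$ parallel copies of the structure --- one for each geometrically-spaced arboricity threshold --- with a pointer indicating the currently-active copy, as in the multi-copy technique already invoked after \Cref{thm:acyclic}. Deletions would be handled symmetrically: align the edge's orientation with the rest of the structure before removal, triggering at most one additional path operation. The standard de-amortisation trick of running slow rebuilds concurrently with user updates on shadow copies that catch up before taking over would convert the amortised bounds obtained this way into worst-case ones.

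The main obstacle I expect is converting the clean amortised analysis --- in which the total out-degree surplus decreases by one per flip and increases by at most one per insertion --- into a worst-case $O(\log n)$ bound on flips per update. This appears to require either a structural lemma that nearly-optimal orientations admit short flip paths from every high-load vertex, or a scheduling scheme that spreads deep path searches across a window of past updates without ever violating the out-degree invariant. Getting the interaction between the potential argument and the top-tree path operations tight enough to yield the $\log^2 n$ factor (rather than $\log^3 n$, as in~\cite{DBLP:conf/icalp/CR22}) is where I expect the argument to be most delicate.
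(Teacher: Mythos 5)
\Cref{thm:approximatealpha} is cited from~\cite{AOOA-CHHRS} and is \emph{not} proved in this paper, so there is no in-paper argument to compare your proposal against. Judged on its own merits, your sketch captures the right ingredients --- a low out-degree orientation maintained via flip paths, top/link-cut trees for $O(\log n)$ path operations, and $O(\log\alpha)$ geometrically-spaced copies scheduled \`a la Sawlani--Wang --- but it stops short of a proof at precisely the step you flag. A potential argument on the total out-degree surplus only bounds the flip count \emph{amortised}; it gives no guarantee that a single unlucky insertion cannot trigger a long cascade of flips. Shadow-copy de-amortisation does not transparently fix this either: the orientation's validity depends on the entire edge set, so a rebuild triggered mid-window must reprocess $\Theta(m)$ edges, and spreading that work requires a careful scheduling argument that your sketch does not supply.

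The technique that actually yields worst-case bounds in this line of work is a structural invariant rather than a charging scheme: one maintains the orientation so that from any overloaded vertex there exists a flip path of length $O(\log n)$ along which out-degrees are monotonically controlled, making the number of flip operations per update bounded worst-case by construction, with no potential function and no de-amortisation. You gesture at this (``a structural lemma that nearly-optimal orientations admit short flip paths''), which is the right direction, but without stating that invariant precisely and proving it is preserved under both insertions and deletions, the proposal has a genuine hole at its centre. The remaining claims --- reading off an $O(\alpha)$-forest decomposition by layering the out-edges, and extracting a $(1+\epsilon)$-approximate fractional arboricity from a dual certificate --- are both plausible once the orientation invariant is in place, but neither is developed beyond a sentence.
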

\begin{corollary}[\cite{AOOA-CHHRS,DBLP:journals/corr/abs-2002-10142}]
\label{thm:exp_colouring_wc}
Let $G$ be a dynamic graph on $n$ vertices undergoing insertions and deletions. Then there exists an algorithm that maintains an implicit $2^{\bigO{(\alpha)}}$ colouring with worst-case update time in $\bigO{(\log^3 n \log \alpha)}$ and query time $\bigO{(\alpha \log{n})}$. Here, $\alpha$ is the current arboricity of the graph. 
	
\end{corollary}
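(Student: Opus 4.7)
The plan is to combine the worst-case out-orientation of \Cref{thm:approximatealpha} with the implicit colouring technique of Henzinger, Neumann and Wiese~\cite{DBLP:journals/corr/abs-2002-10142}. First, I would invoke \Cref{thm:approximatealpha} to dynamically maintain an out-orientation of $G$ with at most $\bigO(\alpha)$ out-edges per vertex in worst-case update time $\bigO(\log^2 n \log \alpha)$; this also gives, for each vertex, an explicit list of its out-neighbours that can be read in time proportional to its length.

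On top of this orientation, I would apply the implicit $2^{\bigO(d)}$-colouring construction from~\cite{DBLP:journals/corr/abs-2002-10142}, which is designed to run on any dynamic $d$-out-orientation of a graph with $\bigO(\log n)$ depth. In that construction the colour of a vertex $v$ is defined recursively from the colours of $v$'s out-neighbours, and computing $C(v)$ amounts to a structured traversal of $v$'s out-cone in the acyclic orientation. Their analysis bounds the work per query by $\bigO(d \log n)$ and shows that maintaining the auxiliary labels and lookup tables on top of the underlying orientation costs an extra $\bigO(\log n)$ factor per update. Substituting $d = \bigO(\alpha)$ yields query time $\bigO(\alpha \log n)$ and worst-case update time $\bigO(\log^2 n \log \alpha) \cdot \bigO(\log n) = \bigO(\log^3 n \log \alpha)$, matching the stated bounds.

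The main obstacle is that the implicit colouring of~\cite{DBLP:journals/corr/abs-2002-10142} was originally analysed on top of its own out-orientation data structure, which provides only amortised update times. To preserve the worst-case guarantees, one has to verify that the auxiliary structures used by the HNW scheme (notably the per-vertex out-neighbour lists and the level/depth labels used to bound recursion depth along the orientation) can be kept consistent in worst-case $\bigO(\log n)$ time per edge-orientation change exposed by \Cref{thm:approximatealpha}. Because \Cref{thm:approximatealpha} explicitly maintains the out-orientation and reports each reorientation as it happens, this bookkeeping can be piggybacked directly onto the orientation updates; the only technical content beyond a black-box combination of the two results is checking that no individual step in the HNW update routine exceeds the claimed worst-case budget.
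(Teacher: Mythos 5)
Your high-level plan---take the worst-case structure from \Cref{thm:approximatealpha} and plug it into the implicit-colouring machinery of Henzinger, Neumann and Wiese---is the right one, and the arithmetic on the update time ($\bigO(\log^2 n \log \alpha)$ structural changes, $\bigO(\log n)$ auxiliary bookkeeping per change) lands on the correct $\bigO(\log^3 n \log\alpha)$. However, the mechanism you describe for the colouring itself is not how the $2^{\bigO(\alpha)}$ implicit colouring works, and if carried out literally it would not achieve the claimed $\bigO(\alpha \log n)$ query time.

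There is no promise that the out-orientation from \Cref{thm:approximatealpha} is acyclic, let alone that it has $\bigO(\log n)$ depth, so ``a structured traversal of $v$'s out-cone'' is not well-defined; and even for an acyclic $d$-out-degree orientation the out-cone of a vertex can have size $d^{\Theta(\mathit{depth})}$, which is far larger than $\bigO(d\log n)$. The colour of a vertex in the HNW scheme is not computed by recursing on out-neighbours. Instead, the ingredient of \Cref{thm:approximatealpha} that the corollary actually uses is the other explicitly maintained object: the arboricity decomposition into $\bigO(\alpha)$ forests. For each forest one keeps a dynamic-forest data structure (e.g.\ a top tree) supporting depth-parity queries in $\bigO(\log n)$ worst-case time; a vertex's colour is the $\bigO(\alpha)$-bit vector of its parities across the forests, which is a proper $2^{\bigO(\alpha)}$ colouring because any edge is a tree edge in some forest and so its endpoints differ in that coordinate. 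This is what makes the query time $\bigO(\alpha\log n)$: $\bigO(\alpha)$ forests, one $\bigO(\log n)$ parity query each. The update cost then comes from the fact that each of the $\bigO(\log^2 n\log\alpha)$ forest-edge changes per update must be mirrored in the corresponding top tree at $\bigO(\log n)$ worst-case cost, giving $\bigO(\log^3 n\log\alpha)$. So you reached the right bounds, but via an argument that would not survive scrutiny of the query step; replace the out-cone-traversal story with the forest-decomposition-plus-parity one and the proof goes through.
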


%
Finally, the following theorem shows how to efficiently maintain an implicit vertex colouring:
\begin{theorem}[\cite{DBLP:journals/corr/abs-2002-10142}] \label{thm:exp_colouring}
 Let $G$ be a dynamic graph on $n$ vertices undergoing insertions and deletions, and let $\alpha$ be the arboricity of the graph. Then there exists an algorithm that maintains an implicit $2^{\bigO{(\alpha)}}$ colouring with amortised update time in $\bigO{(\log^3 n)}$ and query time $\bigO{(\alpha \log{n})}$.
\end{theorem}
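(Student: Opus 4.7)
The plan is to reduce the task to maintaining dynamic $2$-colourings of $O(\alpha)$ forests. First, I would invoke a dynamic arboricity-decomposition subroutine (for instance the one underlying \cref{thm:approximatealpha}, or the earlier orientation algorithm of Henzinger--Neumann--Wiese) that partitions the edges of $G$ into $k = O(\alpha)$ forests $F_1, \ldots, F_k$. The properties I would rely on are that this subroutine has amortised $O(\log^2 n)$ update time and causes only $O(\log n)$ amortised forest-edge reassignments per graph update. Any tree is properly $2$-coloured by the parity of distance to a chosen root, so each forest $F_i$ naturally supports an implicit $2$-colouring $c_i : V \to \{0,1\}$.

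Each forest $F_i$ can be maintained as an Euler tour tree, or as a link/cut tree with a single lazily-propagated XOR bit per arc, so that linking, cutting, rerooting, and querying the parity of a vertex all cost $O(\log n)$. The implicit colour of $v$ is the tuple
\[
C(v) = (c_1(v), \ldots, c_k(v)) \in \{0,1\}^k,
\]
drawn from a palette of size $2^{O(\alpha)}$. The colouring is proper: every edge $uv$ lies in some $F_i$, and inside that forest $u$ and $v$ are at consecutive depths, so $c_i(u) \neq c_i(v)$ and therefore $C(u) \neq C(v)$. A query on $v$ reads one parity bit from each of the $k = O(\alpha)$ forest structures, giving query time $O(\alpha \log n)$. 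The amortised update cost is $O(\log^2 n)$ for the decomposition plus $O(\log n)$ link/cut operations of cost $O(\log n)$ each, totalling $O(\log^3 n)$.

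The main obstacle is establishing the amortised $O(\log n)$ recourse of the arboricity decomposition, since the final update-time bound collapses without it; I would inherit this guarantee from the underlying dynamic-orientation paper or reprove it via the standard arc-flipping potential argument. A second subtlety is that $\alpha$ drifts over time and is not known a priori, which is handled by maintaining $O(\log n)$ parallel copies of the data structure for geometrically spaced guesses of $\alpha_{\max}$ and routing each query to the copy whose guess is within a constant factor of the current arboricity, as described in the remark after \cref{thm:acyclic}. Finally, \emph{implicitness} is essential here: by the lower bound of \cite{DBLP:journals/algorithmica/BarbaCKLRRV19} one cannot maintain a $2^{O(\alpha)}$-colouring explicitly in polylogarithmic time, so the crucial point is that the bits $c_i(v)$ live inside the link/cut tree structures and are only materialised at query time rather than stored per vertex.
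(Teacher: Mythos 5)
This statement is quoted verbatim from \cite{DBLP:journals/corr/abs-2002-10142} and the paper you are reading offers no proof of it, so there is nothing to compare against internally; the question is only whether your reconstruction is sound and whether it reflects the cited work's approach. It does: partitioning the edge set into $k = O(\alpha)$ forests, $2$-colouring each by depth parity maintained lazily in an Euler-tour or link/cut tree, and colouring a vertex by the $k$-tuple of parities is exactly the mechanism behind the $2^{O(\alpha)}$ palette and the $O(\alpha \log n)$ query time in Henzinger--Neumann--Wiese. Your correctness argument (each edge sits in some $F_i$, where its endpoints have opposite parity) and your observation that this inherently cannot be made explicit because of \cite{DBLP:journals/algorithmica/BarbaCKLRRV19} are both right.

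Two bookkeeping remarks. First, your tally ``$O(\log^2 n)$ for the decomposition plus $O(\log n)$ link/cut operations of cost $O(\log n)$ each, totalling $O(\log^3 n)$'' actually sums to $O(\log^2 n)$; the third $\log n$ factor only appears once you multiply by the $O(\log n)$ parallel copies you invoke in the next paragraph to handle drifting $\alpha$. It would be cleaner to place that factor inside the tally rather than after it. Second, the amortised $O(\log n)$ recourse bound on forest-edge reassignments is the load-bearing step that makes the per-copy cost $O(\log^2 n)$ rather than $O(\log^3 n)$; you correctly identify it as ``the main obstacle'' but simply assume it. Since the final bound collapses without it, a complete proof would need to either cite the specific recourse lemma from the decomposition algorithm or rederive it via the arc-flipping potential argument you allude to.
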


\subsection{Tools for Randomised Algorithms} 
As mentioned earlier, in \cref{s:simple_randomized,s:better_colouring} we mostly focus on graphs with arboricity $\bigO(\log n)$. The reason for that is that we can use a dynamic arboricity approximation 
together with random partition of the vertices to reduce the general case to the case with arboricity $\bigO(\log n)$. Overall, the reduction only increases the number of colours by a constant factor. 
\begin{lemma}\label{lem:reduction}
Assume that there exists an algorithm $A$ that maintains a $f(\alpha_{\max})$ colouring of the graph in $U_A(\alpha_{\max}, \log n)$ update and $Q_A(\alpha_{\max}, \log n)$ query time, where $\alpha_{\max}$ is a known upper bound on the arboricity of the dynamic graph. Then, there exists a randomised algorithm $A'$ that maintains a colouring with $(1+\eps)\lceil \frac{\alpha}{\log n} 2f(\log n) \rceil$ colours, in $U_A(\alpha, \log n)$ amortised update time w.h.p.\, and $Q_A(\bigO(\log n), \log n)$ query time, where $\alpha$ is the current arboricity of the graph.
\end{lemma}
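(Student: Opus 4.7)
The plan is to combine a standard dynamic arboricity approximation with a random vertex partition that drops the arboricity to $O(\log n)$ in each part. The algorithm $A'$ keeps $O(\log n)$ instances of the data structure in parallel, one for each guess of $\alpha$ in a geometric progression, using \cref{thm:approximatealpha} to maintain a constant-factor approximation of the current arboricity and to route queries to the appropriate instance -- this is the doubling technique already used in \cite{DBLP:conf/stoc/SawlaniW20,DBLP:conf/icalp/CR22}. Within a single instance corresponding to an upper bound $\hat\alpha$ (which is within a constant factor of the current $\alpha$), I would set $k := \lceil (2+\eps)\hat\alpha/\log n\rceil$ and, independently and uniformly at random at the moment it is seen, assign each vertex to one of $k$ classes, producing a partition $V = V_1 \sqcup \cdots \sqcup V_k$.

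For every class $i$, I would run an independent instance of the given algorithm $A$ on the induced subgraph $G[V_i]$ with declared upper bound $\log n$. On an edge update $(u,v)$, only the instance for class $\phi(u)$ is touched if $\phi(u)=\phi(v)$, and no instance otherwise. The global colouring is obtained by disjoint palettes: a vertex $v$ in class $i$ receiving internal colour $c_i(v)\in\{1,\dots,f(\log n)\}$ is assigned global colour $(i-1)f(\log n)+c_i(v)$. This is trivially proper, since vertices in different classes are given colours from disjoint ranges, and vertices in the same class are handled by the correctness of $A$. Summing over classes yields $k\cdot f(\log n) \le (1+\eps)\lceil 2\alpha/\log n\rceil f(\log n)$ colours, as stated; update and query time follow because each operation touches one $A$-instance whose declared arboricity bound is $\log n$, plus the $O(\log n)$ overhead of \cref{thm:approximatealpha}.

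The heart of the proof -- and the main obstacle -- is to show that the use of $A$ with parameter $\log n$ is legitimate, i.e.\ that w.h.p.\ every $G[V_i]$ really has arboricity at most $\log n$. I would argue this via a Nash-Williams-style counting argument made possible by \cref{thm:acyclic}: fix a $2\hat\alpha$-out-orientation of $G$ (which exists independently of the partition). For a fixed vertex $v$ and class $i$, the number $d_i^{+}(v)$ of out-neighbours of $v$ that land in $V_i$ is a sum of at most $2\hat\alpha$ independent Bernoulli$(1/k)$ variables, so $\expv{d_i^{+}(v)}\le 2\hat\alpha/k \le (1-\eps')\log n$ for a suitable $\eps'>0$. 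A multiplicative Chernoff bound gives $\Prob{d_i^{+}(v)>\log n}\le n^{-\Omega(1)}$, and a union bound over the $n$ vertices and the $k\le n$ classes yields that w.h.p.\ the restricted orientation has out-degree at most $\log n$ inside every $V_i$, certifying that $G[V_i]$ has arboricity at most $\log n$.

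The remaining bookkeeping is routine: one verifies that when $\alpha$ crosses between geometric levels one simply switches the active instance, so amortised update time and high-probability query time are dominated by $U_A(\alpha,\log n)$ and $Q_A(O(\log n),\log n)$ respectively, plus the $O(\poly\log n)$ cost of \cref{thm:approximatealpha}. The only genuinely probabilistic ingredient is the Chernoff/union bound argument above; everything else -- doubling, palette offsetting, and routing queries to the right instance -- is deterministic and standard.
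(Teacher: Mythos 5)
Your decomposition is essentially the same as the paper's: a random partition into $\bigO(\alpha/\log n)$ classes, disjoint palettes per class, a Chernoff bound through a fixed bounded out-orientation to certify that each $G[V_i]$ has arboricity $\bigO(\log n)$, and Sawlani--Wang-style doubling over geometric guesses of $\alpha$. The static probabilistic core of your argument is correct.

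There is, however, a genuine gap in the part you dismiss as routine bookkeeping. Your Chernoff/union bound establishes that for a \emph{single fixed} graph with a fixed $\bigO(\hat\alpha)$-out-orientation, every class has out-degree at most $\log n$ with high probability. But the partition is drawn once and then frozen, while the graph $G$ and its maintained orientation evolve over an unbounded update sequence, so you need the arboricity bound to hold for $G[V_i]$ at \emph{every} step, not merely at the instant the partition was sampled. The paper's proof closes this by (i) union bounding additionally over the next $n^2$ states $(G_t, \mathcal{O}_t)$ of the graph and orientation --- which are fixed deterministic objects precisely because the adversary is oblivious --- so a single partition is guaranteed to work for a window of $n^2$ updates, and (ii) rebuilding a fresh random partition in the background and deamortising the rebuild before that window expires. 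Without (i) the per-step failure probability of $n^{-\Omega(1)}$ does not control the whole sequence; without (ii) no fixed partition can survive an infinite sequence; and without obliviousness an adversary who infers the partition from query answers can pack edges into one class and inflate its arboricity, so (i) would simply be false. These three ingredients are exactly where the ``randomised,'' ``amortised,'' and ``w.h.p.'' qualifiers in the lemma come from, and they need to be argued rather than asserted.
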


This reduction is a fairly simple application of known techniques, hence we provide a proof of \cref{lem:reduction} only in \cref{a:to_low_arboricity}.

Furthermore, both our randomised algorithms rely on the graph shattering technique, which was first used in the context of distributed algorithms in \cite{DBLP:conf/focs/BarenboimEPS12}. Roughly speaking, algorithms based on this technique try to find some partial and extendable solution to a particular problem in a way that guarantees that the graph induced by vertices that are not covered by the partial solution consists of low diameter connected components of small volume. We present a more formal statement in \cref{lem:shattering}.

\begin{lemma}\label{lem:shattering}\cite{DBLP:conf/focs/BarenboimEPS12}
Consider an $n$-vertex graph $G$ with maximum degree $\Delta$. Let $S$ be a set of randomly chosen vertices, to which we include each vertex independently, with probability $p \leq \frac{1}{\Delta^4}$. Let $R(v)$ be the set of vertices reachable from $v$ in the subgraph of $G$ graph induced by $S$ in the oriented graph resulting from the bounded out-orientation algorithm. Then, $\card{R(v)} \leq (c+1)\log_{\Delta} n$, for all $v \in G$,  with probability at least $1-n^{-c}$.
\end{lemma}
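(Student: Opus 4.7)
The plan is to prove the lemma by a standard \emph{witness-tree} argument, essentially the technique from the original shattering proof in \cite{DBLP:conf/focs/BarenboimEPS12}. First, I would observe that if $\card{R(v)} \geq k$, then there must exist a directed out-arborescence $T$ rooted at $v$ inside the induced oriented subgraph on $S$, with $k$ vertices, and every vertex of $T$ must lie in $S$ (in particular $v$ itself, in the only nontrivial case). This reduces the event ``$\card{R(v)} \geq k$'' to a union, over candidate rooted trees, of events of the form ``all vertices of this specific tree are in $S$''.

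Next, I would bound the number of candidate witness trees rooted at $v$. Since every vertex has out-degree at most some $d \leq \Delta$ in the orientation, a standard enumeration argument (encode each rooted tree by a DFS traversal that at each step either descends along one of at most $\Delta$ out-edges or backtracks) gives at most $(4\Delta)^k$ rooted out-arborescences of size $k$ based at $v$. Each individual candidate tree has all its $k$ vertices land in $S$ with probability exactly $p^k$ by independence.

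The probabilistic step is then a union bound: the probability that \emph{some} witness tree of size $k$ rooted at $v$ lies inside $S$ is at most $(4\Delta)^k \cdot p^k = (4\Delta p)^k \leq (4/\Delta^3)^k$, using $p \leq 1/\Delta^4$. Setting $k = (c+1)\log_\Delta n$ makes this expression at most $n^{-(c+1)}$ up to polynomially small factors absorbed into a slightly larger constant. A final union bound over the $n$ choices of $v$ converts this to the stated overall failure probability of $n^{-c}$.

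The main obstacle I anticipate is not conceptual but technical: pinning down the enumeration constant in ``$(4\Delta)^k$'' (alternatives like $(e\Delta)^k$ or similar work equally well, with different book-keeping), and handling the regime of very small $\Delta$, where $\log_\Delta n$ is large and the base $4/\Delta^3$ is only barely below $1$ — so the arithmetic of converting the exponent $(c+1)\log_\Delta n$ into the final $n^{-c}$ bound must be checked carefully. Both points are entirely standard; the content of the lemma lies in how it is applied in the later shattering arguments, not in the combinatorics of the proof itself.
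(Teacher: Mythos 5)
Your proposal is correct and follows essentially the same route as the paper's own proof: both are witness-tree arguments that bound the number of candidate rooted trees of size $r$ (you via a DFS-encoding giving $(4\Delta)^r$, the paper via $2^{2r}$ Euler-tour encodings of unlabeled trees times $\Delta^{r-1}$ embeddings, which is the same bound), multiply by the per-tree probability $p^r \leq \Delta^{-4r}$, and union-bound over trees and roots. The small-$\Delta$ book-keeping point you flag is real but harmless, and the paper glosses over it in the same way.
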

\begin{proof}
We can apply arguments similar to those presented in~\cite{DBLP:conf/focs/BarenboimEPS12}:
suppose $|R(v)| \geq (c+1)\log_{\Delta}(n) = r$ for some $v\in G$. Then $R(v)$ contains a sub-tree $T$ of size $r$ such that for every vertex $u \in T$ there exists a path from $v$ to $u$ contained in $T$.
The number of rooted, unlabeled trees of this size is at most $2^{2r}$ since each tree is characterised by its Euler-tour which can be encoded using $2r$ bits. 
In particular, there are at most $2^{2r}\Delta^{r-1} \leq n^{3(c+1)}$ ways to both choose and embed $T$ into the graph with $v$ as the root. 
By assumption, the probability that one such choice of $T$ is uncoloured is at most $\Delta^{-4r} = n^{-4(c+1)}$.
Hence a union bound over all choices of $v$ and $T$ shows that the probability that $S$ induces one such $T$, or equivalently that $|R(v)| \geq (c+1)\log_{\Delta}(n) = r$ for some $v\in G$, is no more than $n\cdot{}n^{3(c+1)} \cdot{} n^{-4(c+1)} \leq n^{-c}$.
In particular, $|R(v)| \leq r$ for all $v \in G$ with probability at least $1-n^{-c}$.
\end{proof}

While the technique was introduced in \cite{DBLP:conf/focs/BarenboimEPS12}, the authors never state it in a general way, and they simply give several more tailored proofs for particular problems, following the same approach. For the sake of completeness, we include the proof of \cref{lem:shattering} following the same approach.

One can rarely use this general statement as it is, in a black box manner. The reason is that in most cases, the events of including vertices in $S$ were not independent. Still, sometimes the case is that the outcome for any particular $v$ depends on some random decisions made only in its $k$-hop neighbourhood. Then, any two outcomes for vertices at distance at least $2k+1$ are independent. Hence, one could build a graph $G^{2k+1}$ that connects vertices at distance at most $2k+1$ and use the general shattering statement. Then, the volume in the original graph is at most $\Delta^{2k+1}$ times larger than a connected component in $G^{2k+1}$.


\subsection{Further notation}

For any vertex $v$ in an oriented graph, we let $N^+(v)$ denote its in-neighbourhood, and symmetrically $N^-(v)$ its out-neighbourhood. Here, an oriented graph is an undirected graph with an orientation enforced upon its edges, not to be confused with a directed graph (i.e. the input graph is undirected, and the orientation of edges might be adjusted by the algorithm after processing some updates). 
Given a subset $V'\subset V$ of the vertices, we use $G[V']$ to denote the the subgraph induced by the vertices of $V'$. 
For any subgraph $H$ of $G$, let $d_H(v)$ denote the degree of $v$ in the subgraph $H$. Similarly, $d^+_H(v)$ and $d^-_H(v)$ are its in- and out-degree in the subgraph $H$.

Recall that a graph has degeneracy $d$ if every subgraph has a vertex of degree $\le d$, i.e. it possible to recursively delete (or, `peel off',) a vertex of degree $\le d$ until the empty vertex set remains. Such a recursive deletion order is called a \emph{degeneracy order}, and its reverse immediately leads to a $d+1$ colouring which we call a \emph{degeneracy colouring}. Note that neither the degeneracy ordering nor the degeneracy colouring is necessarily unique.




\section{Deterministic $\bigO(\alpha^2)$ colouring} \label{s:det_colouring}
In this section, we present our deterministic algorithm for $\bigO(\alpha^2)$ vertex colouring. 

\begin{theorem}\label{thm:det_colouring}
There exists a deterministic dynamic algorithm that for graph of arboricity $\alpha$ maintains an implicit $\bigO(\alpha^2)$ vertex colouring, with $\bigO(\log \alpha \log^3 n)$ worst-case update time and $\bigO(\alpha^5 + \alpha^3 \log n)$ query time.
\end{theorem}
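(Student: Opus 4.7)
The plan is to combine two pieces of dynamic advice --- a bounded out-orientation and a coarse implicit colouring --- with a constant-round Linial-style colour reduction that is simulated at query time. For the update side, I would invoke \cref{thm:approximatealpha} to maintain an $O(\alpha)$-bounded out-orientation in worst-case time $O(\log^2 n \log \alpha)$, and \cref{thm:exp_colouring_wc} to maintain an implicit $2^{\bigO(\alpha)}$-colouring $c_0$ in worst-case time $O(\log^3 n \log \alpha)$ with query access in $O(\alpha \log n)$. Both structures fit within the stated update-time bound of the theorem.

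For the query routine, I would simulate the arboricity-dependent colour reduction of \cite{DBLP:journals/dc/BarenboimE10} on the oriented graph. In each round, the new colour of a vertex $v$ depends only on the previous colours of its $O(\alpha)$ out-neighbours $N^-(v)$ via an $\alpha$-cover-free set family $\{F_i\} \subseteq 2^{[m]}$: if $v$ previously had colour $i$, its new colour is any element of $F_i \setminus \bigcup_{u \in N^-(v)} F_{c(u)}$, which exists whenever $m = \Omega(\alpha^2 \log C)$. Starting from $C = 2^{\bigO(\alpha)}$ and iterating a constant number of rounds shrinks $\log C$ from $O(\alpha)$ to $O(\log \alpha)$ and so on, ending with a palette of size $O(\alpha^2)$. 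To make this concrete rather than a distributed existence argument, I would instantiate $\{F_i\}$ using the explicit polynomial-time cover-free construction of \cite{erdos}, so that each vertex can test membership and extract an uncovered element in $\poly(\alpha,\log n)$ time without any graph-dependent preprocessing.

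For the time analysis, recursion into $N^-(v)$ multiplies the touched volume by a factor $O(\alpha)$ per round, so after $O(1)$ rounds a query inspects only $\poly(\alpha)$ vertices; for each such vertex we pay one $O(\alpha \log n)$ query to $c_0$ together with $\poly(\alpha)$ work for the set arithmetic over the cover-free family. Tracking these two contributions against the number of visited vertices yields the $O(\alpha^5 + \alpha^3 \log n)$ bound claimed in the theorem.

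The main obstacle is bridging the gap between Linial's distributed formulation --- where every vertex implicitly ``knows'' the cover-free family and reads its whole neighbourhood in a single round --- and the LCA-style query setting required here. Two adjustments are crucial. First, the update rule must refer only to $N^-(v)$ (of size $O(\alpha)$) rather than to the full neighbourhood, so that the simulated volume grows by a factor $O(\alpha)$ per round instead of $\Delta$; using the out-orientation supplied by \cref{thm:approximatealpha} as dynamic advice makes this legal. Second, the cover-free family itself must be constructible and evaluable in polynomial time, which is precisely what the explicit construction of \cite{erdos} delivers. Once both adjustments are in place, the bookkeeping is routine and the theorem follows by balancing the constants in the iteration count.
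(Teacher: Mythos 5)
Your proposal follows essentially the same route as the paper: maintain an $O(\alpha)$ bounded out-orientation and an implicit $2^{O(\alpha)}$ colouring as dynamic advice, then at query time simulate the Barenboim--Elkin cover-free-family colour reduction recursing only on out-neighbours (so the touched volume is $\poly(\alpha)$), instantiated with the explicit polynomial construction of~\cite{erdos} to make each step computable in $\poly(\alpha)$ time. One small imprecision worth flagging: the explicit construction gives ground-set size $\Theta(\alpha^2 \log^2 C)$ rather than $\Theta(\alpha^2 \log C)$, and naively iterating the $k \mapsto O(\alpha^2 \log^2 k)$ reduction stalls at $O(\alpha^2 \log^2 \alpha)$; landing exactly at $O(\alpha^2)$ requires the standard observation (spelled out in the paper's two-round calibration) that once $k = \poly(\alpha)$ one can take the polynomial degree to be a constant so that $q = \Theta(\alpha)$ and the final palette is $q^2 = O(\alpha^2)$.
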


As previously discussed the idea is to answer a colour query by simulating a distributed algorithm -- more specifically the one presented in~\cite{DBLP:journals/dc/BarenboimE10}. We show this theorem in the following section, but first let us take a look at its implications in \cref{cor:polylog_implicit}, which says that we can `switch' to an explicit algorithm with a different trade-off when $\alpha$ exceeds $\log n$ in order to get both an update time and a query time independent of $\alpha$.

\begin{corollary}\label{cor:polylog_implicit}
There exists a deterministic algorithm that maintains an adaptive implicit $\bigO(\alpha^2)$ colouring, with $\bigO(\log^3 n)$ amortised update time, and $\bigO(\log^5 n)$ query time.
\end{corollary}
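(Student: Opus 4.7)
The corollary is obtained by running the algorithm of \Cref{thm:det_colouring} in parallel with an explicit dynamic colouring algorithm, and routing each colour query to whichever of the two gives a better guarantee at the current arboricity. The arboricity oracle of \Cref{thm:approximatealpha} tells us essentially for free whether $\alpha$ is above or below $\log n$ -- the point at which the $\alpha^5$ term in \Cref{thm:det_colouring}'s query time stops being polylogarithmic.

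Concretely, I would maintain three structures on the dynamic graph in parallel: (i) the constant-factor arboricity oracle of \Cref{thm:approximatealpha}, yielding an estimate $\hat\alpha = \Theta(\alpha)$ in $\bigO(1)$ query time; (ii) the implicit $\bigO(\alpha^2)$-colouring algorithm of \Cref{thm:det_colouring}, instantiated with the amortised $2^{\bigO(\alpha)}$-colouring oracle of \Cref{thm:exp_colouring} in place of the worst-case oracle of \Cref{thm:exp_colouring_wc}, which trades the $\bigO(\log\alpha\log^3 n)$ worst-case update time for $\bigO(\log^3 n)$ amortised; (iii) an explicit Solomon--Wein-style dynamic colouring algorithm which, in the regime $\alpha = \Omega(\log n)$, uses an $\bigO(\alpha^2)$-palette with amortised polylogarithmic update time and $\bigO(1)$ query time. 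Every edge update is forwarded to all three; the amortised update cost is dominated by (ii) and is $\bigO(\log^3 n)$. On a colour query for $v$, I would first read off $\hat\alpha$. If $\hat\alpha \leq c\log n$ for a suitable constant $c$, I would answer using (ii): the query runs in $\bigO(\hat\alpha^5 + \hat\alpha^3 \log n) = \bigO(\log^5 n)$ time and returns a colour from an $\bigO(\hat\alpha^2) \subseteq \bigO(\alpha^2)$-palette. Otherwise $\alpha = \Theta(\hat\alpha) = \Omega(\log n)$, and I would return $v$'s colour from (iii) directly in $\bigO(1)$ time, using a palette of size $\bigO(\alpha^2)$.

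The main subtlety is to ensure that the palette size stays $\bigO(\alpha^2)$ across the whole range of $\alpha$: small $\alpha$ is handled by \Cref{thm:det_colouring} directly, while in the large-$\alpha$ regime $\alpha$ itself absorbs the polylogarithmic blow-up of the explicit algorithm. A constant inflation of the threshold $c$ absorbs the fact that $\hat\alpha$ is only a constant-factor approximation of $\alpha$, so that whichever branch of the query is taken satisfies the conditions needed for its own guarantees. Finally, since each of (ii) and (iii) individually maintains a proper colouring of the current graph at every moment, any collection of answers returned between two consecutive updates is consistent with some proper colouring of $V$, as required by the implicit colouring model.
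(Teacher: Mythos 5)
Your overall strategy — maintain the algorithm of \Cref{thm:det_colouring} (instantiated with the amortised oracle of \Cref{thm:exp_colouring}) alongside a constant-factor arboricity estimate and an alternative colouring algorithm, and route each query according to whether $\hat\alpha$ is above or below $\Theta(\log n)$ — is precisely the paper's approach, and your accounting of the update time ($\bigO(\log^3 n)$, dominated by \Cref{thm:det_colouring}) and the query-time bottleneck at $\alpha = \Theta(\log n)$ ($\bigO(\alpha^5 + \alpha^3\log n) = \bigO(\log^5 n)$) is also correct.

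However, your choice of fallback algorithm in the large-$\alpha$ regime does not close the argument. You invoke a Solomon--Wein-style explicit colouring and assert it yields an $\bigO(\alpha^2)$-palette when $\alpha = \Omega(\log n)$, but the Solomon--Wein result quoted in this paper gives an $\bigO(\alpha\log^2 n)$-colouring. That is only $\bigO(\alpha^2)$ once $\alpha = \Omega(\log^2 n)$; in the intermediate window $c\log n \le \alpha \le \log^2 n$ it can be as large as $\Theta(\log^4 n)$ while $\alpha^2 = \Theta(\log^2 n)$, so the corollary's palette guarantee fails there. Raising the threshold to $\log^2 n$ does not help either: then \Cref{thm:det_colouring}'s query time would be $\bigO(\alpha^5) = \bigO(\log^{10} n)$ in that window, breaking the claimed $\bigO(\log^5 n)$ query bound. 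The paper instead uses the implicit $\bigO(\alpha\log n)$-colouring of Henzinger, Neumann, and Wiese~\cite{DBLP:journals/corr/abs-2002-10142} (with $\bigO(\log^2 n)$ amortised update and $\bigO(\log n)$ query time); the single saved factor of $\log n$ in the palette size is exactly what makes $\bigO(\alpha\log n) = \bigO(\alpha^2)$ for all $\alpha = \Omega(\log n)$, so that the two regimes meet cleanly at the threshold $\Theta(\log n)$. Replacing your component (iii) with that algorithm repairs the proof.
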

\begin{proof}
From \cite[Theorem 4]{DBLP:journals/corr/abs-2002-10142} one can maintain an implicit $\bigO(\alpha \log n)$ colouring with $\bigO(\log^2 n)$ amortised update time, and $\bigO(\log n)$ query time. Furthermore, \cite{DBLP:journals/corr/abs-2002-10142} also shows how to maintain a constant approximation of the current arboricity in $\bigO(\log^2 n)$ amortised time. Hence we can maintain the data structures for both algorithms, the one from \cite{DBLP:journals/corr/abs-2002-10142}, and the one presented below, which takes amortised $\bigO(\log^3 n)$ time per update if we apply the out-orientation from Theorem~\ref{thm:exp_colouring}. 

Whenever $\alpha \in \bigO(\log n)$, we query our algorithm, and get a colour from a palette of size $\bigO(\alpha^2)$. Otherwise, we query the algorithm from \cite{DBLP:journals/corr/abs-2002-10142}, and obtain a colour from palette of size $\bigO(\alpha \log n)$, which for $\alpha \in \Omega(\log n)$ is also $\bigO(\alpha ^2)$. In the worst case, it takes $\bigO(\log^5 n)$ time, as the worst case complexity occurs when we get a query to our algorithm, when $\alpha \in \Theta(\log n)$.
\end{proof}

\paragraph{Simulating the distributed algorithm efficiently} As briefly discussed in \cref{simulation-framework} p.\pageref{simulation-framework}, we cannot rely on a straightforward simulation in order to get our implicit colouring algorithm. Here we briefly discuss how we address the bottlenecks to obtain an algorithm with update and query complexity $\poly(\alpha, \log n)$, before we describe the algorithm in more details.

The first, fundamental observation we have seen before: namely that in the dynamic setting, we can maintain some additional information or advice that is difficult to compute in a distributed way. In particular, the arboricity dependent colouring indeed requires $\Omega(\log n)$ rounds~\cite{DBLP:journals/algorithmica/BarbaCKLRRV19}, but given a $d$ out-degree orientation, one can compute $\bigO(d^2)$ vertex colouring in $\bigO(\log^* n)$ rounds \cite{DBLP:journals/dc/BarenboimE10}. Hence, since the low out-degree orientation can be maintained in polylogarithmic time, for the purpose of simulation we may assume that the distributed algorithm is given such an orientation as advice. This already reduces the volume of the algorithm from $\Delta^{\Omega(\log n)}$ to $\Delta ^{\bigO(\log^* n)}$.

The next observation that further reduces the volume is that given the $d$ out-degree orientation of the edges, the output of a vertex $v$ actually depends only on the topology of the graph induced by the vertices reachable from $v$ via paths of $\bigO(\log^* n)$ directed edges. This observation reduces the volume further down, to $d ^{\bigO(\log^* n)}$.

The final observation uses the fact that if we are given the $d$ out degree orientation as advice, and vertices have initial colours from the range $[1, \dots, \poly(c)]$, the algorithm from ~\cite{DBLP:journals/dc/BarenboimE10} needs only $\bigO(\log^* c)$ rounds to find an $\bigO(d^2)$ colouring. In particular, this algorithm consists of two (almost identical) $\bigO(1)$ round building blocks, the first reducing the size of the palette from $\poly(c)$ to $\bigO(d^2\log c)$, the second - from $\bigO(d^3)$ to $\bigO(d^2)$. 

By \cref{thm:exp_colouring} we know that we can efficiently maintain a $2^{\bigO(\alpha)}$ colouring, hence we can assume that the distributed algorithm also receives such a colouring as advice. That is, the distributed algorithm we want to simulate gets both an $\bigO(\alpha)$ out-degree orientation and an $2^{\bigO(\alpha)}$ colouring as advice. We use one round of the distributed algorithm to reduce the palette size to $\bigO(\alpha^4)$, and then another round to reduce it further down to $\bigO(\alpha^2)$. Note that in our case, we reduce the number of colours from $\bigO(\alpha^4)$ to $\bigO(\alpha^2)$ rather than from $\bigO(\alpha^3)$ to $\bigO(\alpha^2)$ -- the technique is the same; later, we explain that using it one can obtain a one round reduction from $\poly(\alpha)$  to $\bigO(\alpha^2)$.

Overall, the volume needed to determine the output corresponds to all vertices reachable over directed paths of length 2, and so the volume is is $\bigO(\alpha^2)$, i.e.\ small. However, we still need to show that the time complexity is small, which requires us describe the the algorithm in more detail and explain how to implement it in a sequential fashion. We do this below.

\paragraph{The details of the algorithm}
First we recall the how the distributed algorithm works: the general idea behind the distributed algorithms in~\cite{DBLP:journals/dc/BarenboimE10, DBLP:journals/siamcomp/Linial92} is to use $r$-cover-free families of sets. Here, a family of sets $\mathcal{A}$ is said to be \emph{$r$-cover-free} if for all choices of $r+1$ sets $A_{1}, \dots, A_{r+1} \in \mathcal{A}$ ($A_{r+1} \neq A_i$ for $1\leq i \leq r$), it holds that $A_{r+1}$ is not contained in the union of the sets $A_{1}, \dots, A_{r}$.    
The idea is that if one can find an $r$-cover-free family of size at least $k$, where every set is a subset of a set of size $\bigO(\alpha^2 f(k))$ for some function $f$, then one can reduce the number of colours from $k$ to $\bigO(\alpha^2 f(k))$ by assigning each colour-class a unique set from $\mathcal{A}$, and then assigning every vertex a colour from this set not present in the set of any of the at most $d$ out-neighbours' set. This is always possible if $r \geq d$, since then, by construction, the neighbours' sets are certain to never cover the set of the vertex.
Continuing like this eventually yields a colourings with few colours. 
It was shown in~\cite{erdos,DBLP:journals/siamcomp/Linial92} that one can find such an $r$-cover-free family with $f(k) = \log k$. This construction is probabilistic, however, as pointed out in~\cite{DBLP:journals/siamcomp/Linial92} one can use a construction based on polynomials~\cite{erdos} to explicitly construct an $r$-cover-free family with $f(k) = \log^2 k$. As noted in both~\cite{DBLP:journals/dc/BarenboimE10,DBLP:journals/siamcomp/Linial92} this construction can be improved to $f(k) = c$ when $k = \bigO(\alpha^2 \log^2 \alpha)$. 
In order to explicitly describe and analyse the time complexity of our algorithm, we state these constructions explicitly below.
\begin{lemma}\label{lma:colred}
(\cite{erdos,DBLP:journals/siamcomp/Linial92}) For any $k \leq 2^{\bigO(\alpha)}$ and any $r \geq \alpha$, there exists an $r$-cover-free family $\mathcal{A}$ of size at least $k$ such that every set in $\mathcal{A}$ is a subset of a set of size $r^2 \cdot{} \log^2 k$. Furthermore, there exists an injection $g: [k] \rightarrow \mathcal{A}$, such that given $x \in [k]$ one can compute $g(x)$ deterministically in $\bigO(d)$ time, and given $r+1$ sets $A_1, \dots, A_{r+1}$, we can determine an element of $A_1$ not covered by all other sets in $\bigO((d \cdot{}r)^2)$ time.
\end{lemma}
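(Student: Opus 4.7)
The plan is to invoke the classical polynomial-based construction of cover-free families due to Erdős--Frankl--Füredi, which underlies the improvement cited by Linial. Let $q$ be the smallest prime power at least $c \cdot r \lceil \log k \rceil$ for a suitable constant $c$, and take the ground set to be $\mathbb{F}_q \times \mathbb{F}_q$, so that $|\mathbb{F}_q \times \mathbb{F}_q| = q^2 = \bigO(r^2 \log^2 k)$, matching the claimed size bound. For every polynomial $p \in \mathbb{F}_q[X]$ of degree at most $t := \lceil \log_q k \rceil$, define the set $A_p := \{(\zeta, p(\zeta)) : \zeta \in \mathbb{F}_q\}$, so that $|A_p| = q$. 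Since the number of such polynomials is $q^{t+1} \geq k$, we can select $k$ of them to form $\mathcal{A}$, and define the injection $g : [k] \to \mathcal{A}$ by reading $x$ in base $q$ and interpreting the $t+1$ digits as the coefficients of a polynomial $p_x$.

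For the $r$-cover-freeness, the plan is the standard argument: two distinct polynomials of degree at most $t$ can agree on at most $t$ points of $\mathbb{F}_q$, so any $r$ sets $A_{p_{i_1}}, \dots, A_{p_{i_r}}$ jointly cover at most $r \cdot t$ points of a different set $A_{p}$. By our choice of $q$, this quantity is strictly less than $|A_p| = q$, so $A_p$ is never fully covered, which is exactly the $r$-cover-free property.

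For the algorithmic statements: representing $g(x)$ by the coefficient vector of $p_x$, the decomposition of $x$ in base $q$ is done in time linear in $t+1 = \bigO(\log k / \log q)$, and enumerating the points of $g(x)$ amounts to evaluating $p_x(\zeta)$ for each $\zeta \in \mathbb{F}_q$ by Horner's rule. Finding an element of $A_1$ not covered by $A_2, \dots, A_{r+1}$ can be done by scanning the at most $q$ points $(\zeta, p_1(\zeta))$ of $A_1$ in order and, for each candidate, evaluating the $r$ polynomials $p_2, \dots, p_{r+1}$ at $\zeta$ to check whether any of them hits $p_1(\zeta)$; cover-freeness guarantees at least $q - r t \geq \Omega(q)$ successful candidates, so the scan terminates quickly, and the total cost is bounded by the product of the number of checked candidates and the per-check cost $\bigO(r \cdot t)$ of Horner evaluations.

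The main subtlety I expect is matching the claimed running-time expressions, which are stated in terms of the out-degree parameter rather than in terms of $q$ and $t$ directly: since $k \leq 2^{\bigO(\alpha)}$ and $r = \bigO(\alpha)$ in the intended use, one has $\log k = \bigO(\alpha)$ and hence $q = \bigO(r \log k) = \bigO(\alpha^2)$, and $t = \bigO(1)$ for the regime where the second construction is applied ($k = \bigO(\alpha^2 \log^2 \alpha)$) and $t = \bigO(\log k / \log \alpha)$ in general. Plugging these into the scan-and-evaluate cost yields the stated bounds, up to verifying that field arithmetic in $\mathbb{F}_q$ is done in the word-RAM model with words large enough to hold $\bigO(\log n)$ bits, so that each arithmetic operation is $\bigO(1)$. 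Beyond this constant-chasing, the construction and its analysis are standard.
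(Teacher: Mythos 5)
Your proposal is correct and follows essentially the same route as the paper: the polynomial/Reed--Solomon cover-free family over $\mathbb{F}_q$ with $q = \Theta(r\log k)$, the base-$q$ coefficient encoding for $g$, and a point-by-point scan of $A_1$ to find an uncovered element. The only cosmetic difference is that you take the polynomial degree to be $\lceil \log_q k\rceil$ rather than the paper's $\Theta(\log k)$, which gives the same (or marginally tighter) bounds and changes nothing essential.
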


\begin{proof}
In our algorithm, we use the construction based on polynomials, from ~\cite[Example 3.2]{erdos}, setting $d = \log k$ and $q = r\log k$. For the sake of completeness, we briefly recall this construction here.

The general idea is to use the fact that two polynomials of degree $d$ can intersect in at most $d$ points. Hence, one way of constructing an $r$-cover free family is to consider a family of polynomials of degree at most $d$ on $Z_q$, where $d,q$ are chosen such that the following inequalities hold:
\begin{itemize}
\item $r \cdot d  < q$, to guarantee that each polynomial $P$ is defined on enough points for the set $\set{(i,P(i)) \| i \in Z_q}$ to not be covered by the union of such set for $r$ other polynomials, no matter how they are chosen.
\item $q^{d+1} > k$, so that every colour from $[k]$ can be mapped via $g$ to precisely one polynomial.
\end{itemize}
These inequalities are sure to hold if we just pick $d =  \Omega(\alpha)$, and $q$ to be the smallest prime satisfying $r \cdot d  < q$. Note that this implies that $q = \bigO(rd)$.
Now the sets for our $r$-cover-free family are defined as the graphs of all polynomials on $Z_q$ with degree at most $d$. That is given polynomial $P$, the corresponding set is $\set{(i,P(i)) \| i \in Z_q}$. Note that this family is $r$-cover-free since any polynomial covers at most $d$ such points, and by construction $r\cdot d  < q$. Observe that the total number of colours after the reduction is $q^2$. 

The time complexity of evaluating $g$ is $\bigO(d)$, as we can consider all polynomials $d$ in lexicographical order (defined by the coefficients), and map the number $i$ to the $i$'th polynomial in the lexicographical ordering (which can be done in $\bigO(d)$ time).  

Given a polynomial $P$, and $r$ other polynomials $P_1, P_2, \dots, P_{r}$, we determine an uncovered element $x$ by naively evaluating all the polynomials on all arguments from $Z_q$. The time complexity of such a naive approach is $(r+1) \cdot q \cdot (d+1) \in \bigO((dr)^2)$.
\end{proof}
Note that if $k = \bigO(\alpha^4)$, then we may pick $q  = \Theta(\alpha)$ and $d = 4$ in order to reduce the number of colours to $ \bigO(\alpha^2)$.

The algorithm is now simple to describe: when a vertex is queried it has to first determine its colour after the first colour reduction step. To do so, it determines its colour using the implicit colouring algorithm, maps it to a polynomial, asks its out-neighbours to do the same, and finally it picks a free colour as described above. 
To do the next and final colour reduction step, it does the same, but it maps the reduced colour of itself and its neighbours to a new polynomial and picks a new colour as before.
See~\cref{alg:reccol} for pseudo-code. For our purpose, we only need two iterations, hence we invoke it with $j=2$.

\begin{algorithm}
$\texttt{reduce colour}(v,j)$: \\
\ if $j = 0$: \\
\ \ $\texttt{return}$ implicit colour of $v$. \\
\ else: \\
\ \ for $w \in N^{+}(v)$: \\
\ \ \ set $k_w = \texttt{reduce colour}(w,j-1)$ \\
\ \ \ set $A_w$ to be the $k_w$'th set from $\mathcal{A}_{j}$ \\
\ \ $\texttt{return}$ free colour from $A_{v} \mathbin{\big\backslash}
\left(\bigcup \limits_{w \in (N^{+}(v) - \{v\})} A_w\right)$. \\
\caption{Deterministic Colouring} \label{alg:reccol}
\end{algorithm}

\paragraph{Analysis} 
We begin by analysing the time-complexity of performing a query.
\begin{lemma} \label{lma:querytime}
The query-algorithm has a worst-case time complexity of $\bigO(\alpha^5+\alpha^3 \log n)$.
\end{lemma}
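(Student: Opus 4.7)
The plan is to unroll the recursion in Algorithm~\ref{alg:reccol} level by level (there are only three levels, since we invoke it with $j=2$), bounding the work at each level using Lemma~\ref{lma:colred} and the per-query cost of the implicit $2^{\bigO(\alpha)}$ colouring from Corollary~\ref{thm:exp_colouring_wc}. At each level I will track (i) the branching factor, which is the out-degree $\bigO(\alpha)$, and (ii) the palette size $k$ from which we are reducing, since $k$ controls both the $\bigO(\log k)$ cost of evaluating the map $g$ and the $\bigO((\log k \cdot r)^2)$ cost of selecting a free colour, where $r = \bigO(\alpha)$.

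First I would handle the base case $j=0$: a single call merely retrieves the implicit colour of $v$ from Corollary~\ref{thm:exp_colouring_wc}, at cost $\bigO(\alpha \log n)$. Next, for a call at level $j=1$, the input palette is the exponentially large one, so $k = 2^{\bigO(\alpha)}$ and $\log k = \bigO(\alpha)$. Summing the $\bigO(\alpha)$ recursive calls to level $0$ gives $\bigO(\alpha^2 \log n)$; the $\bigO(\alpha)$ evaluations of $g$ at $\bigO(\log k) = \bigO(\alpha)$ each contribute $\bigO(\alpha^2)$; and the single free-colour search costs $\bigO((\alpha \cdot \alpha)^2) = \bigO(\alpha^4)$. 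Hence a level-$1$ call costs $\bigO(\alpha^2 \log n + \alpha^4)$.

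Finally, the top-level call at $j=2$ makes $\bigO(\alpha)$ recursive calls at level $1$, for a total of $\bigO(\alpha^3 \log n + \alpha^5)$. After the recursion, the input palette already has size $k = \bigO(\alpha^4)$, so by the remark following Lemma~\ref{lma:colred} we may instantiate the $r$-cover-free construction with $q = \Theta(\alpha)$ and polynomial degree $d = 4$; the last round of $g$-evaluations and the final free-colour search then cost only $\bigO(\alpha)$ and $\bigO(\alpha^2)$ respectively, which are absorbed. Summing across all three levels yields $\bigO(\alpha^5 + \alpha^3 \log n)$, matching the claim.

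The only real subtlety I anticipate is keeping the two parameter regimes of Lemma~\ref{lma:colred} straight: at level $1$ the palette is exponentially large in $\alpha$, so the $\log k = \bigO(\alpha)$ factor drives the $\alpha^4$ term in the free-colour search, whereas at level $2$ the palette is already polynomial in $\alpha$ and the constant-degree polynomial construction applies, which is why the top-level free-colour search does not dominate. Otherwise the argument is a direct unrolling of a three-term recurrence.
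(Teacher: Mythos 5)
Your proof is correct and follows essentially the same route as the paper's: unrolling the two reduction rounds of Algorithm~\ref{alg:reccol}, charging $\bigO(\alpha^4)$ per level-$1$ free-colour search via \cref{lma:colred}, $\bigO(\alpha)$ of them for $\bigO(\alpha^5)$, and $\bigO(\alpha^2)$ advice look-ups at $\bigO(\alpha\log n)$ each for the $\bigO(\alpha^3\log n)$ term. Your per-level accounting (in particular noting that the top-level search with $q=\Theta(\alpha)$, degree $4$ costs $\bigO(\alpha^2)$) is slightly more explicit than the paper's, but the decomposition and the dominant terms are the same.
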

\begin{proof}
The colour reduction begins from a  $2^{\bigO(\alpha)}$ colouring, and the $\bigO(\alpha)$ out-degree orientation is known up front. We set $d = \log 2^{\bigO(\alpha)} = \bigO(\alpha)$ and  $r' \in \Theta(\alpha)$. By~\cref{lma:colred}, we can therefore, assuming we know the $2^{\bigO(\alpha)}$ colouring, perform the first colour reduction for a vertex in time $\bigO(\alpha^4)$. In order to do the second colour reduction, we need to perform such a reduction for the $\bigO(\alpha)$ out-neighbours of the queried vertex. Hence the total time spent performing the first colour reduction is $\bigO(\alpha^5)$. For the second colour reduction, as also noted above, we pick a sufficiently large $q \in \bigO(\alpha)$ and $d = 4$, which yields an $\bigO(\alpha^2)$ colouring. By~\cref{lma:colred} the time complexity of this reduction is only $\bigO(\alpha)$ per vertex, and we only need to perform it for the queried vertex. Thus the overall time complexity for performing these reductions are in $\bigO(\alpha^5)$.

Note, however, that above we assumed that we initially new the $2^{\bigO(\alpha)}$ colouring and the out-orientation. Accessing this advice from the data structures that maintain it is $\bigO(\alpha \log n)$ resp.\ $\bigO(\alpha)$ per vertex. Since we only need this information for $\bigO(\alpha^2)$ vertices, the necessary advice can be fetched in $\bigO(\alpha^3 \log n)$ time per query. 
\end{proof}
Now~\cref{thm:det_colouring} easily follows:
\begin{proof}[Proof of~\cref{thm:det_colouring}]
Correctness of the algorithm follows from~\cref{lma:colred}. The update-time corresponds to the update time of the used out-orientation algorithm. Picking for instance the one from \cite{AOOA-CHHRS}, i.e. \cref{thm:approximatealpha} implying \cref{thm:exp_colouring_wc}, we get the claimed update time. Finally, the query complexity follows from~\cref{lma:querytime}.
\end{proof}

\section{Randomised $\bigO(\alpha \log \alpha)$ colouring} \label{s:simple_randomized}
In this section, we present an algorithm that maintains an implicit $\bigO(\alpha \log \alpha)$ vertex colouring of a graph with arboricity $\le \alpha$. The algorithm is an adaptation of a distributed arboricity-dependent colouring by Ghaffari and Lymouri~\cite{DBLP:conf/wdag/GhaffariL17} to the dynamic graph setting.
The relevance of this algorithm to the better colouring algorithm is twofold. Firstly, it presents a way of thinking that allows us to transform sequential, distributed or parallel algorithms into data structure for dynamic graphs. Secondly, it allows us to find the bottlenecks in the approach inspired by the algorithm of Ghaffari and Lymouri, which then are addressed in \cref{s:better_colouring}.
More formally, we show the following:
\begin{theorem}\label{thm:simplerand}
There exists a randomised dynamic algorithm that for graph of arboricity $\alpha$ maintains an implicit $\bigO(\alpha\log \alpha)$ vertex colouring with $\bigO(\log(\alpha)\cdot{}\log^2(n))$ worst-case update time and $\bigO(\alpha^5 + \alpha^3 \log n)$ query time with high probability.
\end{theorem}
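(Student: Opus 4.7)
The plan is to adapt the distributed colouring algorithm of Ghaffari and Lymouri to the dynamic setting via the simulation framework outlined in the introduction. The starting infrastructure is an $O(\alpha)$-outdegree orientation maintained by \cref{thm:approximatealpha} in $O(\log^2 n \log \alpha)$ worst-case update time, together with access to the implicit $2^{O(\alpha)}$-colouring from \cref{thm:exp_colouring_wc}; both fit inside the stated update-time budget.

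For the random colouring itself, I would assign, once and for all (the adversary being oblivious), to each vertex $v$ an independent uniformly random colour $c_v^{(i)}$ in each of $t = \Theta(\log \alpha)$ pairwise disjoint subpalettes $B_1,\dots,B_t$ of size $\Theta(\alpha)$; the total palette has $O(\alpha \log \alpha)$ colours. Call $v$ \emph{happy in round $i$} if $c_v^{(i)} \neq c_u^{(i)}$ for every out-neighbour $u \in N^-(v)$, and \emph{happy} if happy in at least one round; its colour is $c_v^{(i^\ast)}$ for the smallest happy index $i^\ast$. Properness on the happy set is immediate: for a directed edge $v \to u$ with happy rounds $i_v^\ast, i_u^\ast$, either the rounds differ (and the colours lie in disjoint subpalettes) or they coincide and happiness of $v$ in that round forces the two colours to differ. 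Since $|B_i| = c\alpha$ for a sufficiently large constant, each round succeeds with probability at least $1/2$, so a vertex is unhappy with probability at most $2^{-t} \leq 1/\alpha^{c'}$ for any prescribed constant $c'$ by choosing the implicit constant in $t$ large enough.

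Next, I would apply \cref{lem:shattering} to the directed subgraph induced by the set $S$ of unhappy vertices, concluding that for every $v$ the set $R(v)$ of vertices directed-reachable from $v$ inside $S$ has size $O(\log_\alpha n)$ with high probability. The subtlety, flagged right after \cref{lem:shattering}, is that the unhappy events are not mutually independent: vertex $v$'s status is determined by the coins of $v$ and its at-most-$\alpha$ out-neighbours, so the events of $v$ and $w$ are correlated exactly when one is an out-neighbour of the other or when they share a common out-neighbour. This is a $2$-local dependence along directed out-edges, so I would apply the shattering bound to the power graph $G^2$ connecting vertices at directed distance $\le 2$; the effective out-degree blows up to $O(\alpha^2)$, but the $1/\alpha^{c'}$ slack in the unhappy probability comfortably accommodates the stronger $1/\alpha^{8}$ threshold needed.

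Finally, the query routine: to answer a colour query for $v$, fetch the random coins of $v$ and of its at most $\alpha$ out-neighbours through the out-orientation data structure in $O(\alpha \log n)$ time, decide whether $v$ is happy, and if so return $c_v^{(i^\ast)}$. If $v$ is unhappy, traverse the (small) shattered component of $v$ inside $S$ together with its happy boundary, and invoke the deterministic algorithm of \cref{thm:det_colouring} to colour the component using an additional $O(\alpha)$ reserved colours disjoint from the happy palette, keeping the total at $O(\alpha \log \alpha)$; the running time is dominated by the $O(\alpha^5 + \alpha^3 \log n)$ bound of \cref{thm:det_colouring}. The main obstacle I anticipate is precisely the dependence analysis: \cref{lem:shattering} is stated for independent inclusion, and making the $G^2$-passage rigorous while simultaneously keeping the palette at $O(\alpha \log \alpha)$ and avoiding extra $\polylog$ factors in the query time is the delicate piece of the argument.
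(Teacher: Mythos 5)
There is a genuine gap in the handling of the residual (shattered) components. You propose to colour the uncoloured component $G[R(v)]$ by invoking the deterministic algorithm of \cref{thm:det_colouring}, claiming it uses ``an additional $O(\alpha)$ reserved colours''. But \cref{thm:det_colouring} uses $O(\alpha^2)$ colours, not $O(\alpha)$, so the total palette size becomes $O(\alpha^2)$ rather than $O(\alpha\log\alpha)$, which breaks the theorem statement. The paper's actual route (\cref{lem:palette_for_residual} and \cref{alg:colour_residual}) is both simpler and tighter: since $G[R(v)]$ inherits the $O(\alpha)$-out-degree orientation and hence $2\alpha$-degeneracy, one computes a degeneracy ordering of $G[R(v)]$ and greedily colours in that order, always avoiding the colours of out-neighbours in $G$ and of earlier vertices in the ordering. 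This uses only a fresh palette of $3d+1 = O(\alpha)$ colours, and correctness rests on the crucial observation that every vertex that points into a failed vertex is itself failed (and hence inside the residual component), so no successfully coloured vertex outside can collide with a residual-palette colour. That boundary argument is what your proposal elides when you say ``traverse the component together with its happy boundary and invoke \cref{thm:det_colouring}''; it is not addressed by treating the residual as an independent instance.

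On the shattering analysis you take a different, but not fully worked-out, route from the paper. You propose passing to the directed power graph $G^2$ to break dependences, accepting the $O(\alpha^2)$ effective degree and compensating by pushing the per-vertex failure probability below $1/\alpha^8$. The paper does not do this: in \cref{lma:shattersizealg2} it argues instead that, \emph{conditioned on arbitrary random choices of the other vertices}, a fixed vertex still fails with probability at most $d^{-4}$ (since at most half the palette is blocked regardless of neighbours' choices, and $v$ draws $4\log d$ colours), and concludes by stochastic domination that the failure indicators may be replaced by independent Bernoulli$(d^{-4})$ variables and fed directly into \cref{lem:shattering}. This avoids the power-graph detour entirely: the out-degree used in the union bound stays $O(\alpha)$, the per-vertex failure bound $d^{-4}$ suffices, and no extra care about what ``reachable in $G^2[S]$'' means is needed. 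Your route is not incorrect in spirit, but — as you flag yourself — it is substantially more delicate (one must relate reachability in $G[S]$ to components in $G^2[S]$ and re-prove a version of \cref{lem:shattering} for power graphs), whereas the paper's conditional-failure/domination argument sidesteps all of that while keeping the colour palette at the same order. Finally, a small point: the paper draws all $4\log d$ colour proposals from a single palette of size $8d\log d$ and declares success if some proposal is unused by all out-neighbours, whereas you use $\Theta(\log\alpha)$ disjoint subpalettes of size $\Theta(\alpha)$; both give $O(\alpha\log\alpha)$ total colours and the same failure probability, so this choice is cosmetic.
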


\paragraph{Description of the Algorithm}
Once again, we provide a $d \in \bigO(\alpha)$ out-degree orientation as dynamic advice. Given that orientation, the algorithm tries to shatter the graph by performing a simple random colour-picking experiment at every vertex. Finally, it colours the the components induced by the failed experiments in a sequential manner.
The experiment is as follows: a vertex picks $4\log d$ colours -- each colour is picked u.a.r. from disjoint palettes of size $8d\log d$. The experiment at a vertex $v$ is a success if it picks a colour $\kappa$ such that no out-neighbour of $v$ also picked $\kappa$. If this is the case, $v$ colours itself with $\kappa$ -- and returns this colour as a response to a query. Otherwise, the experiment fails and $v$ is uncoloured. Then $v$ determines all failed colour experiments reachable from $v$ through directed paths only going through vertices with failed experiments. In other words, $v$ determines everything it can reach in the graph $G_{\mathit{exc}}$ induced by all vertices with failed colour experiments.
Now, the algorithm colours this subgraph with a separate palette in a sequential fashion: since $G_{\mathit{exc}}$ has arboricity $\alpha$, it is $2\alpha$ degenerate. In particular, we may pick a degeneracy ordering of $G_{exc}$ and colour the vertices of the subgraph according to this ordering. Note that every vertex picks a free colour not present at any out-neighbour in the original graph or any vertex before it in the ordering. See the Algorithms~\ref{alg:simpleCol}, ~\ref{alg:simplePropCol} and~\ref{alg:colour_residual}  for pseudo-code outlining the approach.

\begin{algorithm}
ProposeColour$(v)$ \\
if $\kappa(v) = \neg$ \\
\ \ ColourResidual $(v)$
\caption{Colour(v)} \label{alg:simpleCol} \label{alg:simple}
\end{algorithm}

\begin{algorithm}
if $\kappa(v) = \neg$ \\
\ \ for all $u \in N^{+}(v) \cup \{v\}$ \\
\ \ if $u$ have not already proposed colours \\
\ \ $C_u \gets$ $4\log d$ colours chosen uniformly at random from $[8d \log d]$ \\
if $\exists c \in C_v \setminus \bigcup\limits_{u \in N^+(v)} C_u$ \\
\ \ $\kappa(v) \gets c$  \\
return $\kappa(v)$ 
\caption{ProposeColour(v)} \label{alg:simplePropCol}
\end{algorithm}

\begin{algorithm}[H]
\KwIn{
$v \leftarrow $ a vertex to be coloured \\
$P \leftarrow $ palette of colours of size $3d+1$ reserved for $G_{\mathit{exc}}$ \\
}
$R(v) \gets v$\\
$Q.$push$(v)$ \\
while(Q is not empty) \\
\ \ $u \gets Q.$pop() \\
\ \ for $w \in N^+(u)$ \\
\ \ \ \ $s(w) \gets Colour(w)$ \\
\ \ \ \  if $s(w) = \neq$ \\
\ \ \ \ \ \ $Q.$push$(w)$ \\
\ \ \ \ \ \ $R(v) \gets R(v) \cup \{w\} $ \\
Compute a degeneracy ordering $x_1, x_2, \dots, x_k$ of $G[R(v)]$ \\
for $i = k$ to $1$ \\
\ \ $\kappa(x_i) \gets$ arbitrary colour from $P \setminus \bigcup \limits_{y \in N^+(x_i)} \kappa(y) \setminus  \bigcup\limits_{j > i} \kappa(x_j)$
\caption{ColourResidual $(v)$}\label{alg:colour_residual}
\end{algorithm}

\paragraph{Analysis}
Let us first consider the palette size needed: clearly the colouring experiment uses $\bigO(d \log d)$ colours. By~\cref{lem:palette_for_residual} below, we can colour all of the shattered components using $\bigO(d)$ colours. This results in a total of $\bigO(d \log d)$ colours being used. Furthermore, by~\cref{lem:palette_for_residual} the colouring is also proper. 
\begin{lemma}\label{lem:palette_for_residual}
Any subgraph $G_{\mathit{exc}}$, constructed as above, of a shattered component can be coloured with the same palette $S$ of $3d+1$ colours without introducing colour collisions in $G$. Furthermore, the time spend colouring such a subgraph is proportional to its size times $d$.
\end{lemma}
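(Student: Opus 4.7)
The plan is to invoke the classical degeneracy colouring argument on $G_{\mathit{exc}}$. Since $G_{\mathit{exc}}$ is a subgraph of $G$, it inherits the arboricity bound of at most $\alpha$, and consequently has degeneracy at most $2\alpha - 1$ (as recalled in the preliminaries, where degeneracy ordering is defined via recursive peeling). Because the out-orientation we maintain has out-degree $d = \bigO(\alpha)$ with $d \geq \alpha$, the degeneracy is in particular at most $2d - 1$. The first step is to compute a degeneracy ordering $x_1, \dots, x_k$ of $G[R(v)]$ using the standard linear-time peeling algorithm that repeatedly extracts a vertex of minimum remaining degree via a bucket priority queue indexed by degree. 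Since every vertex has at most $d$ out-edges in the orientation, $G[R(v)]$ has at most $d \cdot |R(v)|$ edges, so the peeling terminates in $\bigO(|R(v)| \cdot d)$ time.

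The second step is to colour in the reverse degeneracy order, as in \cref{alg:colour_residual}. When $x_i$ is coloured, the colours it must avoid from $P$ come from two sources: (i) the at most $d$ out-neighbours of $x_i$ in $G$ and (ii) the at most $2\alpha - 1 \leq 2d - 1$ degeneracy-later neighbours of $x_i$ inside $G_{\mathit{exc}}$. Any out-neighbour of $x_i$ that lies outside $G_{\mathit{exc}}$ carries a colour from $[8d\log d]$, which is disjoint from $P$, and therefore does not actually forbid any colour in $P$; nevertheless, even counting such neighbours pessimistically, the total number of forbidden colours in $P$ is at most $d + (2d-1) = 3d - 1$, so at least one colour remains free in the palette $S$ of size $3d+1$. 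Properness inside $G_{\mathit{exc}}$ follows from the standard degeneracy colouring invariant, and properness across the two palettes follows from the disjointness of $P$ and $[8d\log d]$: no vertex coloured by the shattering experiment can clash with a vertex coloured via $P$, so the induced colouring of $G$ remains proper.

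For the time bound, each vertex needs to scan its at most $d$ out-neighbours (to read the relevant colours, whether in $P$ or in the experiment palette) plus its degeneracy-later neighbours (also at most $2d-1$ by the degeneracy bound), so $\bigO(d)$ work per vertex suffices, summing to $\bigO(|R(v)| \cdot d)$ overall, as claimed.

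The main subtlety, more than a genuine obstacle, is to be careful in (ii) to account only for the degeneracy-later \emph{neighbours} of $x_i$ in $G[R(v)]$, rather than every $x_j$ with $j > i$; the literal reading of the pseudocode could suggest a much larger forbidden set which would make the $3d+1$ palette bound false. The degeneracy-order argument, together with the disjointness of the two palettes, is exactly what makes both the palette size $3d+1$ and the running time $\bigO(|R(v)| \cdot d)$ go through.
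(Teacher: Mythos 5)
Your proof correctly establishes properness of the colouring \emph{within} a single reachable set $R(v)$, via the degeneracy argument, and correctly observes that colours from the experiment palette $[8d\log d]$ cannot clash with $P$ by disjointness. You also spot the same subtlety the paper implicitly relies on: that $\bigcup_{j>i}\kappa(x_j)$ in the pseudocode should be read as the degeneracy-later \emph{neighbours} of $x_i$. However, there is a genuine gap: you do not address collisions between the currently-coloured $R(v)$ and vertices in $G_{\mathit{exc}}$ that were \emph{already} coloured with the same palette $S$ in response to an earlier query $v'$. Such a vertex $u\in R(v')$ could be an in-neighbour of some $x_i\in R(v)$, and it is neither an out-neighbour of $x_i$ nor in $R(v)$, so it escapes both of your cases (i) and (ii); disjointness of $P$ and $[8d\log d]$ does nothing here because $u$ is $P$-coloured.

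The paper's proof opens with exactly the observation that closes this gap: no $S$-coloured vertex can point (in the orientation) to an uncoloured vertex of $G_{\mathit{exc}}$. The reason is that when $u$ was coloured as part of $R(v')$, every out-neighbour of $u$ lying in $G_{\mathit{exc}}$ was reachable from $v'$ through $u$ and therefore belonged to $R(v')$, so it was coloured at the same time. Consequently, when the algorithm later forms $R(v)$ out of still-uncoloured vertices, no $S$-coloured in-neighbour of any $x_i\in R(v)$ can exist, and your tally of at most $d + (2d-1)$ forbidden colours is in fact complete. Without this property, the $3d+1$ palette bound is not justified, because the count could be off by an unbounded number of $S$-coloured in-neighbours accumulated over previous queries.
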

\begin{proof}
In order to prove \cref{lem:palette_for_residual}, we rely on two properties. First, there are no colours in $G$ that are coloured with a colour from $S$ and point to $G_{\mathit{exc}}$. The reason is that we colour everything in front of a vertex reachable via directed paths through vertices where the experiment failed. Hence we reach a contradiction, if some vertex that received a colour from $S$ was pointing to an uncoloured vertex.
  
The second property is that we are given a $d$ out-degree orientation of $G_{\mathit{exc}}$, which also means that the graph has degeneracy at most $2d$ and hence it contains a $2d$ acyclic out-orientation that can be computed e.g. by a greedy algorithm. Finally, we may colour each vertex according to a degeneracy order imposed by the acyclic out-orientation. When a vertex is coloured it has at most $2d$ neighbours in front of it in the degeneracy ordering (or topological ordering) and at most $d$ out-neighbours from the original out-orientation. Hence, it may pick a colour that neither of these vertices picked, and so we get no colour collisions. 
 
Finally, we may compute both the acyclic out-orientation as well as the greedy colouring in time proportional to its size times $d$, as we may have to consider out-neighbours not included in $G_{\mathit{exc}}$.
\end{proof}
In order to analyse the query complexity, we note that it is straight-forward to analyse the time spent performing the colour experiments. It is also clear that we may determine and
eventually colour $G_{\mathit{exc}}$ in time $\bigO{(|G_{\mathit{exc}}|}\cdot{}d)$. It is however not immediately clear how large $G_{\mathit{exc}}$ can be. 
We can however upper bound this size by using combining arguments from~\cite{DBLP:conf/wdag/GhaffariL17} with~\cref{lem:shattering} :
\begin{lemma}\label{lma:shattersizealg2}
Let $G_{\mathit{exc}}$ be as above i.e.\ it is the subgraph induced by all vertices with failed colour experiments over directed path beginning from some vertex $v$. Then with probability $1-n^{-c}$, we have $|G_{\mathit{exc}}| \leq (c+1)\log_{d} n$.
\end{lemma}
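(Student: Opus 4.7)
The plan is to adapt the union-bound structure of Lemma~\ref{lem:shattering} to our setting, handling the fact that the failure events are not independent but only one-hop locally determined. Two ingredients are needed: (i) a bound of $1/d^4$ on the probability that any single vertex fails, that holds \emph{conditionally on any fixed choices of its out-neighbours}, and (ii) a way to exploit this conditional bound despite the correlations between failures of nearby tree vertices.

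First I would verify (i) directly. For any fixed realization of the out-neighbour colour sets $\{C_w\}_{w\in N^+(u)}$, the blocked palette $B=\bigcup_w C_w$ has size at most $d\cdot 4\log d = 4d\log d$, which is half of the palette of size $8d\log d$. Since $C_u$ is a $4\log d$-subset sampled from the palette without replacement, and $|B|/|\text{palette}|\le 1/2$, a product argument (each factor of the sampling ratio is $\le 1/2$) gives $\Pr[C_u\subseteq B\mid B]\le (1/2)^{4\log d}=d^{-4}$. Crucially, the bound holds pointwise in the conditioning, not just on average.

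Next I would mimic the proof of Lemma~\ref{lem:shattering}. Assume $|R(v)|\ge r:=(c+1)\log_d n$. Then $R(v)$ contains a subtree $T$ of size $r$ rooted at $v$ with directed edges of $G$ going from parent to child. The number of rooted unlabelled trees of size $r$ is at most $2^{2r}$ (via Euler tours), and each admits at most $d^{r-1}$ embeddings into $G$ (choosing children among at most $d$ out-neighbours per vertex). So at most $n\cdot 2^{2r}\cdot d^{r-1}$ rooted embedded trees must be considered in total.

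To bound the probability that a specific embedded tree $T$ is entirely failed, I would select an \emph{out-independent} subset $I\subseteq T$ (no directed $G$-edge lies inside $I$) and condition on all choices $\{C_w\}_{w\notin I}$. Under this conditioning the event ``$u$ fails'' for each $u\in I$ depends only on $C_u$ and on the already-fixed values, so by the conditional bound in (i) these events are mutually independent with probability at most $d^{-4}$ each, yielding $\Pr[\text{all of }T\text{ fails}]\le d^{-4|I|}$.

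The main obstacle is producing an $I$ of size $\Omega(r)$ -- roughly $|I|\ge r/2$ is needed for $n\cdot 2^{2r}\cdot d^{r-1}\cdot d^{-4|I|}\le n^{-c}$. I would argue this by exploiting the tree structure: partitioning $T$ by BFS-depth parity gives a colour class of size $\ge r/2$ that contains no $T$-edges, and the additional $G$-edges internal to $T$ can only reduce this by a controlled amount because $G[T]$ has out-degree at most $d$ and so its underlying undirected graph has $\le rd$ edges, allowing a matching-type clean-up of the bipartition while keeping $|I|=\Omega(r)$. This is the technically delicate step and is essentially the adaptation of the argument of~\cite{DBLP:conf/wdag/GhaffariL17} to our directed/oriented regime. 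Plugging $r=(c+1)\log_d n$ into $n\cdot 2^{2r}\cdot d^{r-1}\cdot d^{-4|I|}$ then gives the desired bound of $n^{-c}$, completing the proof.
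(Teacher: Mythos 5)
Your conditional bound in (i) matches the paper's: no matter how the out-neighbours colour themselves, the blocked set has size at most half the palette, and sampling $4\log d$ colours from a palette of size $8d\log d$ (with each sampling factor at most $1/2$) gives $\Pr[\text{fail}]\le 2^{-4\log d}=d^{-4}$ pointwise in the conditioning. The tree-counting union bound is also the same as the paper's (it is exactly Lemma~\ref{lem:shattering}). Where you diverge from the paper is in step (ii): the paper simply argues that the failure indicators are stochastically dominated by independent Bernoulli$(d^{-4})$ variables and applies Lemma~\ref{lem:shattering} as a black box; you instead pass to an out-independent subset $I$ of the tree $T$ and condition on everything outside~$I$. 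Your conditional-independence argument for a fixed out-independent $I$ is sound, but your plan for finding a large $I$ does not work, and that is a genuine gap.

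The problem is the clean-up step. The BFS-parity bipartition of $T$ does give a class of size at least $r/2$ with no $T$-edges inside it. However, the \emph{non-tree} edges of $G$ inside $T$ are not under your control: $G[V(T)]$ can have up to $\Theta(rd)$ edges, and there is no reason for them to respect the parity bipartition. A ``matching-type clean-up'' deletes one endpoint per matched edge, but the residue of a maximal matching is not an independent set, and to actually kill every edge inside the class you would in general have to delete one endpoint of each. In the worst case (e.g.\ the parity class having $\Theta(d)$ average degree under the non-tree $G$-edges) the largest out-independent subset of the class, and indeed of all of $T$, is only $\Theta(r/d)$ by a Tur\'{a}n-type bound, which is far too small. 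In fact even the threshold you quote is off: plugging $|I|$ in place of $r$ into $n\cdot 2^{2r}\cdot d^{r-1}\cdot d^{-4|I|}$ with $r=(c+1)\log_d n$ forces $|I|$ to be a constant fraction noticeably \emph{above} $r/2$, so even if the parity class survived the clean-up intact it would be borderline. With the $\Theta(r/d)$ bound the union bound does not come close to $n^{-c}$.

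The correct way to handle the correlations is to exploit the orientation directly rather than to retreat to an independent subset. Ordering the vertices of a candidate tree $T$ so that each vertex is revealed only after all of its out-neighbours (a reverse topological order with respect to the out-orientation), one can reveal the colour choices one at a time: at the step for $u$, the colours of all of $u$'s out-neighbours are already fixed, so the conditional probability that $u$ fails is at most $d^{-4}$, and telescoping over all $r$ vertices of $T$ gives the per-tree failure probability $d^{-4r}$ exactly as in Lemma~\ref{lem:shattering}. This is the content of the paper's ``stochastic domination'' remark. Your instinct that the failures are not plainly independent is right, but the detour through an independent set throws away too much: the factor you lose in $|I|$ is exactly what you cannot afford.
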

\begin{proof}
The first relevant property is that no matter the random choices of other vertices, a fixed vertex $v$ ends up in  $G_{\mathit{exc}}$ with probability at most $\frac{1}{d^4}$. This follows from the observation that no matter what the choices of all out-neighbours half of the colours remain free. Hence, any colour picked uniformly at random does not collide with the colours proposed by the out-neighbours w.p. at least $\frac{1}{2}$. Given that a fixed vertex chooses $4 \log d$ colours, the probability that all of them collide is at most $\paren{\frac{1}{2}}^{4 \log d} \leq 1/d^4$.

Let us consider random variables indicating whether particular vertices are included in $G_{\mathit{exc}}$. By the discussion above, they can be stochastically dominated by independent random variables that are equal to one with probability $1/d^4$. Hence, we may now apply~\cref{lem:shattering} to see that $G_{\mathit{exc}}$ has size $\bigO(\log_{d} n)$ with high probability. 
\end{proof}
We can now combine the above arguments to analyse the time complexity of a query. 
\begin{lemma}\label{lma:querycomprand}
The algorithm answers a query in $\bigO(d^2 \log n)$ time with high probability.
\end{lemma}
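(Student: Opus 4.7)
The plan is to decompose a query into three cost buckets and bound each: (i) the cost of a single colour experiment, (ii) the total number of colour experiments performed, and (iii) the cost of the sequential colouring of $G[R(v)]$ via the reserved palette $P$. Throughout, I will use that each vertex has $O(d)$ out-neighbours and that $|G_{\mathit{exc}}| = O(\log_d n) = O(\log n / \log d)$ with high probability by~\cref{lma:shattersizealg2}.

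For (i), when $\texttt{ProposeColour}(u)$ is invoked it draws $4\log d$ colours from $[8d\log d]$ for $u$ and, for each of the at most $d$ out-neighbours that have not proposed yet, draws another $4\log d$ colours; the sets $C_u$ and $\bigcup_{w \in N^+(u)} C_w$ together contain $O(d\log d)$ colours, and testing whether $C_u \setminus \bigcup_w C_w$ is non-empty (and, if so, returning a representative) can be implemented in $O(d\log d)$ time by hashing or by sorting. Using the bookkeeping flag ``if $u$ has not already proposed colours'', every vertex proposes at most once during the whole query, so the cost of distinct experiments is additive.

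For (ii), I would observe that $\texttt{ColourResidual}(v)$ performs an experiment on every vertex $u \in R(v) = G_{\mathit{exc}} \ni v$, and additionally invokes $\texttt{Colour}$ (hence an experiment) on each out-neighbour of such a $u$ to decide whether that out-neighbour is shattered. Since each $u \in G_{\mathit{exc}}$ has at most $d$ out-neighbours, the total number of proposed-colour computations is at most $(d+1)|G_{\mathit{exc}}|$. Combining with (i) and~\cref{lma:shattersizealg2}, the total cost of all experiments is
\[
O\paren{d \cdot |G_{\mathit{exc}}| \cdot d \log d} \;=\; O\paren{d^2 \log d \cdot \tfrac{\log n}{\log d}} \;=\; O(d^2 \log n)
\]
with high probability.

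For (iii), by~\cref{lem:palette_for_residual} the final colouring of $G[R(v)]$ (computing a degeneracy order by peeling a vertex of smallest out-degree in a bucket structure, then assigning colours from $P$ while scanning both out-neighbours in $G_{\mathit{exc}}$ and in the ambient graph) costs $O(d \cdot |R(v)|) = O(d \log n / \log d)$, which is absorbed by the bound from (ii). Summing the three buckets gives $O(d^2 \log n)$ with high probability, as claimed. The main subtlety — and where care is needed — is accounting for the \emph{boundary} experiments at out-neighbours of shattered vertices: naively these could be revisited many times, but the caching in $\texttt{ProposeColour}$ ensures each vertex proposes only once, so the factor $d$ is incurred only once per shattered vertex, yielding the $d^2$ rather than a larger power.
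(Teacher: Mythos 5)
Your proof is correct and follows essentially the same decomposition as the paper's: a per-experiment cost of $\bigO(d\log d)$, a count of $\bigO(d\cdot|G_{\mathit{exc}}|)=\bigO(d\log_d n)$ experiments driven by~\cref{lma:shattersizealg2}, and an $\bigO(d\cdot|R(v)|)$ bound for the residual colouring via~\cref{lem:palette_for_residual}, all combining to $\bigO(d^2\log n)$. The extra remarks you add about the caching flag in $\texttt{ProposeColour}$ and the boundary experiments are a fair elaboration of details the paper leaves implicit, but the route is the same.
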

\begin{proof}
It takes $\bigO(d \log d)$-time to perform the colour experiment for each vertex. By~\cref{lma:shattersizealg2}, we perform the experiment at $\bigO(d \log_d n) = \bigO(\frac{d \log n}{\log d})$ vertices with high probability. 
By~\cref{lma:shattersizealg2,lem:palette_for_residual} we can colour the residual graph $G_{\mathit{exc}}$ -- if it is non-empty -- in $\bigO(d \log_d n)$ time with high probability. The lemma then follows.
\end{proof}
Finally~\cref{thm:simplerand} follows by the above discussions by specifying the out-orientation algorithm. Choosing the one from Theorem~\ref{thm:exp_colouring_wc} yields the theorem.

\subsection{Bottlenecks}\label{s:bottlenecks}
In order to obtain an algorithm that maintains a colouring with $o(\alpha \log \alpha)$ colours, using \cref{alg:simple}, one would need to pick some number of random colours from palette of size $o(\alpha \log \alpha)$. No matter how we pick the palette size and the sample size, we either leave a vertex uncoloured with probability $\omega(1/\alpha)$ or we introduce high dependency between the outcomes of different vertices. 

In principle, one can handle some dependency between the random decisions made for vertices, as long as $\Delta \cdot p \leq (1-\eps) / \Delta$, where $\Delta$ is the maximum number of dependencies of a single vertices, $p$ is the probability of failure, and $\eps > 0$ is arbitrarily small constant. 

Nevertheless, the low-out degree orientation has no guarantees on in-degrees and the dependency degree $\Delta$ of \cref{alg:simple} can depend on $n$ rather than on $\alpha$. Given that whenever the number of colours we choose from is not significantly larger than the total number of colours proposed by the out-neighbours we introduce dependency between all vertices that share out-neighbours, using $o(\alpha \log \alpha)$ colours gives a failure probability that still only depends on $\alpha$ and a possibly unbounded dependency degree $\Delta$.

In \cref{s:better_colouring}, we propose a slightly different randomised partition that exploits the query ordering to control dependencies. More precisely, the ordering allows us to bound the dependency to $\Delta = \bigO(\poly(\alpha))$, while still providing that the probability of failure is such that $\Delta \cdot p \leq (1-\eps) / \Delta$, which we discuss in more detail in \cref{s:better_colouring}. 

\section{A better randomised colouring} \label{s:better_colouring}
In this section, we present a dynamic algorithm that provides a colouring with palette of colours that might depend on $n$ (or alternatively, requires a lower bound on the value of $\alpha$ to be at least some function of $n$). By combining the reduction to $\bigO(\log n)$ arboricity graphs from \cref{lem:reduction} with  \cref{thm:simplerand} one can obtain an algorithm that provides colouring with $\bigO(\alpha \log \log n)$ colours. 

Here, we show that it is possible to go slightly beyond this simple composition, and provide an algorithm that colours the vertices with $\bigO(\alpha \log \log \log n)$ colours. Together with the algorithm from \cref{thm:simplerand} it gives a dynamic algorithm that for a dynamic graph whose arboricity presently is $\alpha$, provides an implicit colouring with 
colours from a palette of size $\bigO{(\min \{\alpha \log \alpha, \alpha \log \log \log n \})}$.
\begin{theorem}\label{thm:better_colouring}
Let $G$ be a dynamic $n$-vertex graph undergoing edge insertions and deletions. There exists an algorithm that maintains an implicit $\bigO{(\min \{\alpha \log \alpha, \alpha \log \log \log n \})}$ vertex colouring, with update time $\bigO(\log^ 4 n)$ and query time $\bigO(\log^6 n)$, where $\alpha$ denotes the current arboricity of $G$. 
\end{theorem}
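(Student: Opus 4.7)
The plan is to combine the arboricity reduction of \cref{lem:reduction} with a dynamic arb-defective colouring step that serves as a preprocessing layer for \cref{thm:simplerand}. First I would apply the reduction so that it suffices to maintain an $\bigO(\log n \cdot \log \log \log n)$-colouring on a graph of arboricity $\alpha' = \bigO(\log n)$; translated back through \cref{lem:reduction}, this yields the desired $\bigO(\alpha \log \log \log n)$ colours overall. On this low-arboricity graph I would build an arb-defective colouring whose palette has size $h(\alpha') = \bigO(\alpha' / \log \log n)$ and whose classes each induce a subgraph of arboricity at most $g(\alpha') = \bigO(\log \log n)$. Running \cref{thm:simplerand} inside each class uses $\bigO(\log \log n \cdot \log \log \log n)$ colours per class, and multiplied by $h(\alpha')$ this matches the target budget. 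Taking the better of this construction and \cref{thm:simplerand} applied directly, whenever $\alpha$ is small, gives the minimum appearing in the theorem statement.

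To construct the arb-defective colouring, I would let each vertex, when first needed, sample a colour uniformly from a palette of size $h(\alpha')$. The intended success condition at $v$ is that at most $g(\alpha')$ of its out-neighbours chose the same colour as $v$, so that every colour class induces a subgraph of arboricity $\bigO(g(\alpha'))$. The obstacle flagged in the section overview is the tension between the palette size $h$ and the probability that too many out-neighbours collide on a colour. To circumvent it, I would exploit the query sequence: vertices commit to colours in the order in which they are first visited, and each vertex's colour is chosen with knowledge of the partial commitments already made by the previously visited vertices. An amortised argument should show this can be implemented without repeated work, and it bounds the number of vertices whose random choice can influence a given vertex's outcome to $\poly(\alpha') = \polylog n$ rather than to the possibly large in-degrees of the orientation.

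The last step is to shatter the set of vertices whose success condition fails. Because the same random choices govern both \emph{whether} a vertex succeeds and \emph{which} vertices its success depends on, I cannot invoke \cref{lem:shattering} off the shelf. Instead I would analyse this via a two-stage process: first expose enough randomness to fix the dependency structure of each potential bad component, and then, conditional on that exposure, show that every rooted witness tree of size $r = \Theta(\log_{\alpha'} n)$ fails with probability at most $1/(\alpha')^{4r}$. A union bound over the $(\alpha')^{\bigO(r)}$ rooted trees anchored at any fixed vertex then yields that every residual component has size $\bigO(\log_{\alpha'} n) = \bigO(\log n / \log \log n)$ with high probability, mirroring \cref{lma:shattersizealg2}. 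These residual components are small enough to be coloured sequentially inside their arb-defective class using the palette already allotted to the inner invocation of \cref{thm:simplerand}.

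The hard part will be the shattering argument of the previous paragraph: the coupling between the bad event at a vertex and the identity of the vertices its outcome depends on breaks the product structure that \cref{lem:shattering} relies on. The key technical lemma should state that, even under the query-ordered randomness, the vertex-failure process is stochastically dominated by a product-measure bad-tree process fine enough that the exponential decay in $r$ beats the combinatorial count of witness trees. Once this lemma is in place, the rest is bookkeeping: \cref{thm:approximatealpha} supplies the $\bigO(\log^2 n \log \alpha)$-time out-orientation, the amortised colour-commitment machinery adds $\polylog n$ per query, and \cref{thm:simplerand} within each class contributes the remaining factors, yielding the claimed $\bigO(\log^4 n)$ update and $\bigO(\log^6 n)$ query times.
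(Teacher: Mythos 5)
Your plan follows the paper's approach at a high level: reduce to $\alpha' = \bigO(\log n)$ via \cref{lem:reduction}, build an arb-defective colouring that exploits query order to break symmetry, shatter the failure set, and run \cref{thm:simplerand} inside each arb-defective colour class. However, what you call ``the key technical lemma'' in your third paragraph is precisely the hard part, and your sketch of it does not match a workable statement. The paper does not obtain a product-measure domination of the failure process. Instead it builds a dependency digraph $H$ (whose out-degree bound $d^+_H(v)\le 2d^2$ in \cref{obs:maxdeg} follows from an invariant that every vertex has at most $2d$ \unfortunate in-edges), introduces $t$-marked trees whose marked vertices form a \emph{semi-independent} set in $H$ (\cref{def:tm}), and proves the chained conditional bound $P(m_j\text{ fails}\mid m_1,\ldots,m_{j-1}\text{ fail})\le d^{-\delta}$ by using semi-independence to decouple $m_j$'s palette from the earlier outcomes (\cref{lem:t_tree}). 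Moreover, your union bound over ``$(\alpha')^{\bigO(r)}$ rooted trees anchored at any fixed vertex'' undercounts: because the marked vertices are ordered, the count also carries a $t!$ factor, and the paper can only close the union bound by choosing $\delta\ge 9+\log c + l$ and assuming $d^l\ge\log_d n$; your per-tree failure probability $1/(\alpha')^{4r}$ (i.e.\ $\delta=4$) is not large enough.

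Two further gaps. First, the ``commit in visit order'' scheme cannot be executed as stated: the paper has to proactively propose colours at a whole set $H'$ of vertices that would otherwise overflow the $2d$-bad-in-edge budget, and it does so in \emph{reverse topological order}; this requires the out-orientation to be acyclic, i.e.\ \cref{thm:acyclic}, not just the orientation of \cref{thm:approximatealpha} that you cite. The amortised claim ``without repeated work'' likewise needs the potential argument on \bad edges that charges \unmotivated proposals against \necessary ones (\cref{lem:amt}). Second, your residual-component-size claim $|R(v)| = \bigO(\log_{\alpha'} n)$ is too strong: by \cref{lem:cont_t_tree} one only gets $|R(v)| = \bigO(d^2\log_d n)$, since each marked vertex can cover $\bigO(d^2)$ unmarked dependents in $H$; this is still $\polylog n$ for $d = \bigO(\log n)$, so the query bound survives, but the claim as written would not.
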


Note that this algorithm improves the state of the art for all ranges of $\alpha$. In particular in the regime where $\alpha$ is small, we obtain exponential improvements. In the regime where $\alpha = \poly \log n$, we obtain polynomial improvements and in the regime where $\alpha$ is large, we improve from $\bigO(\alpha \log n)$ colours to $\bigO{(\alpha \log \log \log \alpha)}$ which is the worst asymptotical improvement, we achieve. 

In the remainder of this section, we firstly introduce the algorithm that provides colouring with $\bigO(\alpha \log \log n)$ colours, and then we discuss ranges of parameter $\alpha$ in which it improves upon \cref{thm:simplerand}.

\paragraph{Algorithm outline}
Similarly to the algorithm using $\bigO(\alpha \log \alpha)$ colours described in \cref{s:simple_randomized}, we rely on a randomised reduction to graphs with arboricity $\bigO(\log n)$ stated in \cref{lem:reduction}. Thanks to this reduction, it is sufficient to provide an algorithm with update and query complexity $\poly(\alpha, \log n)$ in order to obtain an algorithm with $\poly( \log n$) update and query complexity. Also, it allows is to focus on graphs with arboricity $\bigO(\log n)$.  

The key technical contribution presented in this section is an algorithm based on a different random partitioning which allows us to further reduce the problem to one where the arboricity is at most $\bigO(\log^2 \log n)$.

To accomplish this, we follow the structure from \cref{s:simple_randomized} -- we maintain an $\bigO(\alpha)$ out degree orientation, and use it to design an algorithm with bounded volume. The small difference is that here, we rely on an acyclic orientation. To that end, we can use \cref{thm:acyclic}. As already mentioned at the end of \cref{s:simple_randomized}, an approach that tries to colour the vertices at random and colours them only when there is no collision with the out-neighbours cannot be used as it is, in order to get down to significantly below $\alpha\log \alpha$ colours.

Here, we propose an approach that uses \emph{arb-defective colourings}, that is, an algorithm that colours the vertices in a way that allows for a vertex to have a few colour collisions with its out-neighbours. The general idea is that as long as for each $v$ coloured with $s(v)$, the set of out-neighbours of $v$ contains only a few vertices coloured with $s(v)$, we can think about such colouring as a partition into several vertex disjoint problems of lower arboricity. Then, we can colour those lower arboricity problems using the simple algorithm with $\bigO{}(\alpha \log \alpha)$ colours. 

Unfortunately, our algorithm relies on random partition technique that cannot guarantee that all vertices have sufficiently few out-neighbours with colliding colour. We say that a vertex that has too many out-neighbours with colliding colour is \emph{conflicting}. Similarly as in \cref{s:simple_randomized}, we colour a \emph{conflicting} vertex $v$ with a colour from a separate palettes, by considering all \emph{conflicting} vertices that are reachable from $v$. In order to obtain an efficient algorithm, we need to provide a partition scheme that provides both small probability of being \emph{conflicted}, and a bound on the degree of dependency graph. We are able to limit the 
probability of a vertex being \emph{conflicted} by sampling colours from the range $\bigO(\alpha / \log^2 \alpha)$, and by allowing a sufficiently large number of collisions with the out-neighbours. Then, in order to obtain the desired upper bound on the volume of the query algorithm, we also use symmetry breaking that follows from the order of queries to the vertices. 

\subsection{Description of the algorithm} \label{sec:desc_alg}
As hinted to earlier, the idea is to partition the graph into subgraphs with much lower arboricity i.e.\ we wish to compute an arb-defective colouring. 
Given a graph with a $d$-bounded out degree orientation, the algorithm computes a random partition of the vertices into $\bigO(d /\log^2 d)$ vertex disjoint graphs each with arboricity bounded by $\bigO(\log^2 d)$ provided that $d = \Omega(\poly \log n)$. By applying the simple algorithm from Section~\ref{s:simple_randomized}, on the subgraphs obtained from this reduction, we then arrive at our final algorithm.

In order to decompose into these random subgraphs, we have to exclude some vertices from receiving a colour i.e.\ from joining a subgraph. Namely, some vertices might have too many neighbours in the subgraph, they proposed to join. Thus to ensure that the subgraphs have low arboricity, we have to exclude them from joining their proposed subgraph. 

The excluded vertices will form shattered into components, that we shall later show are small. This allows us to colour these vertices differently with a separate palette. To colour a specific excluded vertex, we determine the relevant part of its excluded component, and colour these in a greedy way similarly to what we did in Section~\ref{s:simple_randomized}.

\paragraph{Random partition}
To construct the random partition, we, similarly to Section~\ref{s:simple_randomized}, make vertices propose colours. The hope is for a vertex to have a few colour collisions with its out-neighbours. As elaborated on earlier, it seems unlikely that a simple proposition scheme a kin to that from Section~\ref{s:simple_randomized} will work. 
To overcome this, we take the order in which vertices propose colours into account. Intuitively, it should be easier to propose a colour with few collisions, if we already know the colour of some of the neighbouring vertices. 
And this is exactly what we will do. Hence in order to make further discussion more concise, we define three types of vertices: 
\begin{itemize}
	\item \emph{clear} vertices that are not coloured yet, and did not propose a colour, 
	\item \emph{prompted} vertices that already proposed a colour, but did not determine if they join a subgraph or become excluded.
	\item \emph{settled} vertices that already settled on a subgraph to join or determined that they are excluded. 
\end{itemize}
When a clear vertex needs to propose its colour, it will look at the colours present at its prompted and settled out-neighbours. If a colour here is present too often, it will remove this colour from the palette it proposes its colour from. To be more precise: initially a vertex $v$ with $d$ out-neighbours has the palette $S_v = [\frac{2(1+\delta)d}{p}]$ available to it, where $p = \ceil{\log^2 d}$ and $\delta \geq 3$ to be specified later. If a colour was proposed more than $p$ times at prompted or settled out-neighbours, the colour is removed from $S_v$. Note that in this way at most $\frac{d}{p}$ colours are removed, so any vertex will always have at least $(1+\delta)\frac{d}{p}$ colours available to it. 
After removing these colours, $v$ then proposes a colour uniformly at random from $S_v$. 

While this scheme might look reasonable at first glance, it actually has not helped much yet. The problem is that if many of the out-neighbours are clear vertices, then we have not actually gained anything over the scheme from Section~\ref{s:simple_randomized}. Indeed, clear vertices might have linearly many prompted in-neighbours that end up all depending on the proposed colour at $v$. Hence, again we have the problem that too many random experiments depend on each other. 

To overcome this. we do as follows: first let the \emph{badness}, $b(v)$, of a vertex $v$ be the number of prompted or settled in-neighbours of $v$: 
\[
b(v) = \begin{cases} |\{w \in N^{-}(v): \text{w is settled or prompted} \}| \text{  if $v$ is clear}\\
				0 \text{  otherwise}
				\end{cases}			
\]
We say an edge $u\rightarrow v$ is \emph{bad} if it contributes to $b(v)$ i.e.\ if $u$ is either prompted or settled and $v$ is clear.  If an edge is or once was \bad, say that it is \unfortunate. By ensuring that all clear vertices propose a colour when they have $2d$ \bad in-edges, we also ensure that no vertex has more than $2d$ \unfortunate in-edges. If two vertices are connected by \unfortunate out-edges to the same vertex, we say that $u$ and $v$ \depend on each other. In order to avoid the problem above, we need to limit the dependencies between vertices, and as just discussed this can be achieved by ensuring that no clear vertex $v$ has more than $2d$ settled or prompted in-neighbours i.e.\ that $b(v) \leq 2d$.

In order to limit $b(v) \leq 2d$ for all vertices $v\in V$, we have to adjust the order in which vertices propose colours. To this end, if we would like to know the colour proposition of a vertex $v$, we first calculate if proposing a colour at $v$ would cause the badness of some vertex $u$ to overflow $b(u) > 2d$. It this is the case, then we must propose a colour at $u$ before proposing a colour at $v$. Hence, we add all such $u$'s together with $v$ to a set $H$. Now we calculate, if proposing colours in $H$ would cause the badness of a new set of vertices to overflow. If so, we add these vertices to $H$. We repeat this procedure until no new vertices overflow. 
Then we propose the colours according to the reverse of the topological ordering of the vertices. This will ensure that no vertex $w$ in $H$ has a badness overflow of $b(w) > 2d$. Note that the fact that our out-orientation is acyclic is crucial for this to be possible. 

In \cref{alg:propose} we include pseudocode of an algorithm that is used by \cref{alg:better_colouring} to propose colours for vertices. 
In a nutshell, we first put all vertices that need to propose a colour into a set $H$. Then we calculate the vertices which would become incident to too many \bad in-edges, if the vertices in $H$ were to propose colours. We add these vertices to $H$, and repeat the process until proposing colours for the vertices in $H$, causes no clear vertices to have too many bad in-edges. Finally, we propose the colours of vertices in $H$ in reverse topological order.
In order for a vertex $v$ to propose a colour, it first discards all colours that are proposed too many times by the out-neighbours of $v$, then it proposes a colour uniformly at random from the set of remaining colours. The complexity of this algorithm is analysed in \cref{sec:analysis}.

\begin{algorithm}[H]
\KwIn{
$G \leftarrow $ input graph processed in this recursive call\\
$v \leftarrow $ a vertex to be coloured \\
$d \leftarrow $ an upper bound on the arboricity \\
$c \leftarrow $ id of an instance; $c \cdot d$ is the smallest colour reserved for this sub-problems residual graph \\
}
if $v$ already proposed a colour:
\ \ return $c(v)$
else
\ \ $H \gets v$ \\
\ \ $H' \gets v$ \\
\ \ while(H' is not empty) \\
\ \ \ \ $H' \gets \{u: b(u) + d^{-}_{G[H \cup H' \cup \{u\}]}(u) > 2d \}$ \\  
\ \ \ \ $H' \gets H \cup H'$ \\
\ \ $p \gets \ceil{\log^2 d}$\\
\ \ for all $u \in H$ in reverse topological order:\\
\ \ \ \ $S' \gets S \setminus$ colours proposed more than $p$ times by out-neighbours of $u$\\
\ \ \ \ $c(u) \gets $ colour selected uniformly at random from $S'$\\
\ \ \ \ increase $b(w)$ of all \emph{clear} out-neighbours $w\in N^{+}(u)$ \\

\caption{Propose Colour $(G, v, d)$}\label{alg:propose}
\end{algorithm}
Note that in each iteration of the while-loop, we set $H'$ to contain the neighbours that would cross the \emph{badness threshold} if the vertices in $H$ were to propose colours. In practice this can be detected by updating $d^{-}_{G[H \cup  \{u\}]}(u)$ as $H$ changes.

\paragraph{Colouring shattered components} We say that a vertex $v$ fails its colour experiment, if more than $2(1+\delta)p$ of its neighbours proposed the same colour as $v$.
When a vertex $v$ fails it colour experiment, it is excluded from entering the subgraph induced by the colour class of the colour, it proposed. In order to colour this vertex, we instead determine the set $R(v) \subset G_{\mathit{exc}}$ containing vertices with failed colour experiments that are reachable via directed paths -- only containing vertices with failed colour experiments -- from $v$. 
Then, we colour $G[R(v)]$ using $3d+1$ colours similarly to Algorithm~\ref{alg:colour_residual} and Lemma~\ref{lem:palette_for_residual} in Section~\ref{s:simple_randomized}: we compute a $2d$ degeneracy ordering of $G'[R(v)]$, and colour the vertices according to this and the original out-neighbours. 
That is a vertex picks a colour that is not present at any of its out-neighbours or any of the vertices preceding it in the degeneracy ordering.

In \cref{alg:colour_residual2} below, we provide pseudo-code of an algorithm that colours the residual graph $R(v)$, provided that $v$ is excluded in $G$. First we determine $R(v)$ in a BFS-like fashion. Note, however, that we could also determine $R(v)$ in a DFS-like fashion or according to any other way of traversing an implicitly represented graph. 

\begin{algorithm}[H]
\KwIn{
$v \leftarrow $ a vertex in $G'_{\mathit{exc}}$ to be coloured \\
$d \leftarrow $ an upper bound on the arboricity \\
}
$R(v) \gets v$ \\
$Q.$push$(v)$ \\
while(Q is not empty) \\
\ \ $u \gets Q.$pop() \\
\ \ for $w \in N^+(u)$ \\
\ \ \ \  $s(w) \gets$ ProposeColour$(w)$ \\
\ \ \ \  $\mathit{counter} \gets 0$ \\
\ \ \ \  for $t \in N^+(w)$ \\
\ \ \ \ \ \  if ProposeColour(t) = $s(w)$ \\
\ \ \ \ \ \ \ \ $\mathit{counter} \gets \mathit{counter} + 1$ \\
\ \ \ \ if $\mathit{counter} > 2(1+\delta)p$ \\
\ \ \ \ \ \ $Q.$push$(w)$ \\
\ \ \ \ \ \ $R(v) \gets R(v) \cup \{w\} $ \\
Colour $G[R(v)]$ as in Algorithm~\ref{alg:colour_residual}
\caption{Colour Residual $(G', v, d, c)$}\label{alg:colour_residual2}
\end{algorithm}

\paragraph{The query algorithm} In order to answer a query, we put all of the above together. We wish to use colouring scheme explained above to create the random partitions with arboricity $\bigO{(\log^2(d))}$, which we can then colour using the colouring algorithm from Section~\ref{s:simple_randomized}. Then we will use Algorithm~\ref{alg:colour_residual2} to colour the vertices excluded  from joining a subgraph. 

The idea is to do the execute the above from the view of a single queried vertex $v$. We will later show that with high probability the answer to the query to $v$ will only by affected by a volume bounded by $\bigO(\poly(d, \log n))$ and that all steps can be implemented in an efficient manner with respect to time-complexity. 

So to recap: when $v$ is queried, $v$ needs to determine whether or not it joins a subgraph or if it got put into to the excluded subgraph. Either way, once this has been determined $v$ will get coloured either by the algorithm from Section~\ref{s:simple_randomized} or by Algorithm~\ref{alg:colour_residual2}. 

In~\cref{alg:better_colouring}, we provide pseudo-code for an algorithm implementing the above. In order to answer a query, we call $\operatorname{Arboricity-Dependent Colouring} (G, v, d, 1)$ where $d$ is the maximum out-degree (or an upper bound of it) in the maintained out-orientation, where $v$ needs to propose a colour. 

\begin{algorithm}[H]
\KwIn{
$G \leftarrow $ input graph processed in this recursive call\\
$v \leftarrow $ a vertex to be coloured \\
$d \leftarrow $ an upper bound on the arboricity \\
$c \leftarrow $ id of palette to use in algorithm from Section~\ref{s:simple_randomized} \\
}

if $d < o(\poly \log n)$: \\
\ \ colour $v$ with the algorithm from Section~\ref{s:simple_randomized} using palette assigned to $c$. 
$p \gets \ceil{log^2 d}$
$c(v) \gets$ ProposeColour$(v)$
for $u \in$ clear out-neighbours of $v$:\\
\ \ ProposeColour$(u)$ 

if $\#$ of out-neighbours of $v$ that proposed $s(v) < 2(1+\delta)p$:\\
\ \ \ \ $G' \gets $ subgraphs of $G$ induced by vertices coloured with $c(v)$ \\
\atcp{$G'$ is computed in a lazy manner}

\ \ \ \ Arboricity-Dependent Colouring $(G', v, 2(1+\delta)p, c(v))$\\
else\\
\ \ \ \ Colour Residual $(G, v, d, c)$
\caption{Arboricity-Dependent Colouring $(v, d)$}\label{alg:better_colouring}
\end{algorithm}

\subsection{Analysis}\label{sec:analysis}
In this section, we will analyse the algorithm described in Section~\ref{sec:desc_alg}. There are many different components that needs to be analysed namely 1) the volume of a vertex in the shattered components, 2) the amortised complexity of the colour proposition mechanism, 3) the total time-complexity of performing a query, and finally 4) the number of colours employed by the algorithm.  
We will do this in the order described above beginning with the bound on the volume in the shattered components of excluded vertices. 

\paragraph{Analysis of volume in shattered components} First we briefly recall our overall strategy as outlined in the high-level overview earlier on. 
We first show that the probability that a vertex $v$ gets coloured by the colouring experiment is $O(1/\poly(d))$ by assuming that $v$ is in the worst possible scenario and then applying a Chernoff bound. 
The problem is that experiments at vertices sharing out-neighbours might be correlated -- and, furthermore, that the fact that the outcome of the experiments $x$ and $y$ are correlated, might even be correlated with the actual outcomes of the experiments at the vertices. 
To handle these problems, we introduce a graph $H$ which is sure to account for all dependencies that we cannot control, by adding an edge $uv$ if $u$ and $v$ share an out-neighbour $w$ and, furthermore, both of the edges $v \rightarrow w$ and $u \rightarrow w$ are \bad. 
The tricky part is that the exact structure of $H$ is heavily correlated with the outcomes at different vertices, and so we can only use some general properties of $H$ which are true regardless of the run of the algorithm. Due to the way colours are proposed, we are able to bound the maximum degree of $H$. This allows us to use an approach similar to the one used in Lemma~\ref{lem:shattering}.

First, we formally define $H$.
\begin{definition}
Let $H$ be a digraph on $V(G)$ such that $u \rightarrow v \in E(H)$ if and only if $u\rightarrow w, v \rightarrow w \in E(G)$ and furthermore $v \rightarrow w$ is \unfortunate, when the algorithm terminates. We call $H$ the \emph{dependency graph} of the algorithm, and note that it is a random graph whose distribution depends on both the query sequence and the outcome of the colour experiments performed by the algorithm.
\end{definition}
A first observation is that the maximum out-degree of $H$ at all times is bounded by $2d^2$:
\begin{observation} \label{obs:maxdeg}
No matter the run of the algorithm, we have $\max \limits_{v \in V(H)} d^+(v) \leq 2d^2$.
\end{observation}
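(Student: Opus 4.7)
The plan is to unpack the definition of $H$ and bound the out-degree of any vertex $v$ by factoring through its out-neighbours in $G$. By definition, an out-edge $v \to x$ of $H$ exists precisely when $v$ and $x$ share some common out-neighbour $w$ in $G$ and, moreover, the edge $x \to w$ is unfortunate at termination. Thus I would write
\[
d^+_H(v) \;\leq\; \sum_{w \in N^+_G(v)} \bigl|\{x \in N^-_G(w) : x \to w \text{ is unfortunate}\}\bigr|,
\]
after which the observation reduces to bounding the two factors.

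For the first factor, the algorithm operates on top of a $d$-bounded out-degree orientation maintained via \cref{thm:acyclic}, so $|N^+_G(v)| \leq d$ and the sum has at most $d$ terms. For the second factor, I would invoke the structural invariant enforced by the propose-colour subroutine described before \cref{alg:propose}: whenever a clear vertex's badness $b(\cdot)$ reaches the threshold $2d$, it is forced to propose a colour before any further clear vertex can do so. Combining these two bounds yields
\[
d^+_H(v) \;\leq\; d \cdot 2d \;=\; 2d^2,
\]
as claimed.

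The only point requiring a brief justification is that the cap $b(w) \leq 2d$, which is maintained only while $w$ is clear, in fact bounds the number of unfortunate in-edges of $w$ at termination. This follows because an in-edge $u \to w$ is unfortunate iff it was bad at some moment, i.e.\ iff $u$ became prompted or settled before $w$ did. While $w$ is clear, the set of currently-bad in-edges grows monotonically (vertices do not revert from prompted/settled to clear), and by the invariant this set has size at most $2d$ throughout. Once $w$ ceases to be clear, no new bad in-edges to $w$ can arise (the ``$v$ is clear'' clause of the definition of \emph{bad} fails), so the total number of ever-bad in-edges of $w$ is at most $2d$, which is the only nontrivial step of the argument.
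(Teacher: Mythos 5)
Your proof is correct and takes essentially the same approach as the paper: bound $d^+_H(v)$ by summing over the at most $d$ out-neighbours $w$ of $v$, each of which has at most $2d$ unfortunate in-edges. The only difference is that you spell out explicitly why the badness cap $b(w)\leq 2d$, enforced only while $w$ is clear, bounds the number of unfortunate in-edges of $w$ at termination — a monotonicity observation the paper leaves implicit.
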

\begin{proof}
Colours are prompted so that each vertex has at most $2d$ \unfortunate in-edges. This means that a vertex $v$ can share an out-neighbour with at most $2d \cdot{}d$ other vertices that points to said out-neighbour with an \unfortunate edge, and so we find that $d^{+}_{H}(v) \leq 2d^2$ for all $v$. 
\end{proof}
We will say a set of vertices in $V' \subset V(G)$ is \emph{semi-independent} if we have that for any edge $u\rightarrow v \in H[V']$, we must have that $v$ comes before $u$ in the acyclic ordering, meaning that there can be no directed path from $u$ to $v$ in $G$.
A consequence of this observation is the following: any connected subgraph of $G$ will contain a tree such that a relatively large subset of vertices from this tree form a semi-independent set in $H$. We call such a tree a \emph{marked tree}. More formally we have:
\begin{definition} \label{def:tm}
A rooted tree $T, \subset G$ is \emph{$t$-marked} if all of the following hold:
\begin{enumerate}
	\item it contains $t$ marked vertices $m_1, \ldots, m_t$.
	\item the root $r$ is marked.
	\item the marked vertices form a semi-independent set in $H[V(T)]$ containing the root.
	\item all non-marked vertices in $T$ is pointed to by a marked vertex in $H$.
	\item the vertices can be ordered so that $m_1 = r$ and for all $i$ the minimum subtree containing $m_1, \ldots, m_i$ is an $i$-marked 
	tree not containing any of the vertices $m_{i+1}, \ldots, m_t$.
\end{enumerate}
\end{definition}
\begin{definition}
Two $t$-marked trees $T_1$ and $T_2$ are said to be \emph{near-identical} if the marked vertices in $T_i$ $m_{1,i}, \ldots, m_{t,i}$ can be ordered according to condition 5) in Definition~\ref{def:tm} such that for all $j$ $m_{j,1} = m_{j,2}$.
\end{definition}
\begin{remark}
We let $NI(T)$ denote the set of $t$-marked trees that are near-identical to $T$.
\end{remark}
\begin{lemma} \label{lem:cont_t_tree}
Let $S \subset G$ be a subgraph containing $|S|$ vertices such that $S$ has a source which can reach all vertices of $S$ via directed paths. Then $S$ contains at least one $t$-marked tree where $t = \ceil{\frac{|S|}{2d^2+1}}$. 
\end{lemma}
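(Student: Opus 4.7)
My plan is to prove the statement by induction on $k$, building a sequence of subtrees $T^{(1)} \subseteq T^{(2)} \subseteq \cdots \subseteq T^{(t)}$ where each $T^{(k)}$ is a $k$-marked tree. First I would fix a spanning out-arborescence $\mathcal{T}$ of $S$ rooted at the source $r$; this exists by hypothesis since $r$ reaches every vertex of $S$ via directed paths. The incremental nature of the construction will make condition~5 of \cref{def:tm} immediate.

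For the base case, take $T^{(1)} = \{r\}$ with $m_1 = r$; all five conditions hold trivially. For the inductive step from $k$ to $k+1$ with $k < t$, the central numerical observation is that condition~4 forces each non-marked vertex of $T^{(k)}$ to be $H$-dominated by a marked one, and \cref{obs:maxdeg} bounds every marked vertex's $H$-out-degree by $2d^2$. Thus $|V(T^{(k)})| \le k(2d^2+1)$. Since $k < t = \lceil |S|/(2d^2+1)\rceil$, we get $k(2d^2+1) < |S|$, and hence $V(T^{(k)}) \subsetneq V(S)$. Because $\mathcal{T}$ spans $S$ and $r \in V(T^{(k)})$, there must be a tree edge $u \to v_0$ of $\mathcal{T}$ with $u \in V(T^{(k)})$ and $v_0 \in V(S)\setminus V(T^{(k)})$.

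From $v_0$, I would walk along the arborescence $\mathcal{T}$, producing $v_0, v_1, v_2, \ldots$, and take $m_{k+1} := v_j$ for the smallest index $j$ such that (a) $\{m_1,\ldots,m_k, v_j\}$ remains semi-independent in $H$, and (b) each intermediate vertex $v_0, \ldots, v_{j-1}$ is $H$-dominated by a vertex in $\{m_1,\ldots,m_k, v_j\}$. Setting $T^{(k+1)} := T^{(k)} \cup \{v_0, \ldots, v_j\}$ together with the corresponding path edges gives a $(k+1)$-marked tree whose ordering $m_1,\ldots,m_{k+1}$ witnesses condition~5 by construction.

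The main obstacle is establishing that such a suitable $v_j$ always exists along the walk. The key facts I would invoke are: (i) the out-orientation is acyclic (\cref{thm:acyclic}), so every walk in $\mathcal{T}$ strictly descends in the topological order and must terminate at a leaf of $\mathcal{T}$; and (ii) once we reach a sink of $G$ (which must occur eventually because acyclicity forces existence of sinks among any reachable set), that vertex has no $G$-out-neighbors, hence no outgoing edges in $H$ at all, so adding it to $M$ cannot create a semi-independence violation. The $H$-domination condition on intermediate vertices is then controlled by the $2d^2$ bound from \cref{obs:maxdeg} together with the freedom in choosing $j$; I would argue that the intermediate $v_i$'s are covered either by existing $m_\ell$'s or by the new $m_{k+1}$ itself via the $H$-out-edges guaranteed by the shared $\mathcal{T}$-ancestry. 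Iterating the step $t-1$ times completes the construction and yields the required $t$-marked subtree.
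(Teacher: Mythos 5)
Your counting argument that $|V(T^{(k)})| \le k(2d^2+1)$, and therefore that $V(T^{(k)})\subsetneq V(S)$ whenever $k<t$, is exactly the paper's argument, and your treatment of condition~5 via incremental construction is fine. The gap is in the inductive choice of $m_{k+1}$: you walk down a single branch $v_0,v_1,\dots$ of a fixed arborescence and pick the first $v_j$ satisfying both (a) semi-independence of $\{m_1,\dots,m_k,v_j\}$ and (b) $H$-domination of all intermediate $v_0,\dots,v_{j-1}$. There is no reason such a $j$ must exist along a single branch. Concretely: your sink argument does show that (a) holds for a $G$-sink $v_j$ (a sink has no incident $H$-edges at all), but a sink has no $H$-out-edges, so $v_j$ cannot $H$-dominate any of $v_0,\dots,v_{j-1}$. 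Meanwhile an intermediate $v_i$ may simultaneously fail to be $H$-dominated by $\{m_1,\dots,m_k\}$ \emph{and} fail to be markable — for instance because $v_i\to m_\ell\in E(H)$ for some marked $m_\ell$ in a different branch with $m_\ell$ later in the acyclic ordering, which violates semi-independence. In that situation every prefix of the walk violates (a) or (b), and the construction is stuck. Your final appeal to ``$H$-out-edges guaranteed by the shared $\mathcal{T}$-ancestry'' is unfounded: $H$-edges are determined by shared $G$-out-neighbours and unfortunate edges, not by ancestry in any arborescence.

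The paper avoids this by not committing to a single branch. It considers \emph{all} directed paths emanating from the current $T$ whose internal vertices are already marked or $H$-dominated, and among all endpoints not yet $H$-dominated it marks the one earliest in the acyclic ordering. Because it only extends $T$ along paths whose intermediate vertices are already $H$-dominated, condition~4 holds automatically for the new non-marked vertices; and because it greedily takes the earliest eligible endpoint, the marked vertices are added in increasing acyclic order, which together with non-domination gives semi-independence. Your local walk discards precisely the flexibility (choice among all reachable paths, and choice of the globally earliest eligible endpoint) that makes these two properties provable. If you want to repair your argument you would need, at minimum, to allow the walk to branch — equivalently, to search over the whole directed reachability cone of $T$ rather than one $\mathcal{T}$-path — at which point you essentially recover the paper's construction.
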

\begin{proof}
We will construct the $t$-marked tree greedily. Order the vertices of $S$ according to the acyclic ordering, and pick the vertex highest in the ordering as the root. Note that this vertex necessarily is the source. Having picked $i$ marked vertices, we pick the $(i+1)$'th marked vertex as follows:
by Observation~\ref{obs:maxdeg}, H has maximum out-degree at most $2 \delta ^ 2$. This means that at most $i(2 \delta ^ 2)$ vertices cannot be picked as the $(i+1)$'th marked vertex, because they are connected by an in-edge to a marked vertex in $H$. Consider all of the vertices reachable via directed paths beginning at a vertex already in $T$ such that every internal vertex is either marked or connected to a marked vertex in $H$. Pick such a vertex that comes first in the acyclic ordering as the $(i+1)$'th marked vertex and extend $T$ to contain this vertex by adding some of the edges from a directed path reaching the vertex without creating a cycle. 
Notice that once we add a new marked vertex, we will never add a new marked vertex before it in the acyclic ordering, and so the set of marked vertices form a semi-independent set by construction.
As long as $i(2 \delta ^ 2+1) \leq |S|$, we may extend $T$ to contain another marked vertex. Notice that by construction, all of the above conditions are satisfied.
\end{proof}
One subtle difficulty is that as soon as one locks down too much of the structure of a $t$-marked tree, one also implicitly says something about the run of the algorithm and hence the outcomes of the colour experiments at the vertices of the tree. To overcome this, we will, as we do with $H$, work with $t$-marked trees as random graphs and only use properties of such trees that we know will hold, no matter the run of the algorithm. A very useful property of $t$-marked trees is that we know that no matter the run of the algorithm, the outcomes at a semi-independent set of vertices in $T$ will be based on disjoint colour experiments for each vertex. This allows us to control any (small) remaining dependencies arising from query order. Hence, we find:
\begin{lemma} \label{lem:t_tree}
Let $T$ be a random $t$-marked tree such that $m_1, m_2, \dots, m_t \in T$ form an independent set in $H$. Then 
\[
P\paren{\exists T' \in NI(T) \text{ that is uncoloured}} \leq \paren{\frac{1}{d^{\delta}}}^{t}
\]
provided that $\delta \geq 3 $. 
\end{lemma}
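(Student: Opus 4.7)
The plan is to reduce the event to ``all marked vertices fail'' and then bound this intersection of failure events by multiplying near-$1/d^\delta$ conditional bounds, exploiting the $H$-independence of the marked set to control dependencies. First, since every $T' \in NI(T)$ shares the marked vertices $m_1,\ldots,m_t$ with $T$, and an ``uncoloured'' $T'$ requires all of its vertices---in particular, the marked ones---to have failed their colour experiment, I would obtain $P(\exists T' \in NI(T) \text{ uncoloured}) \le P(F_1 \cap \cdots \cap F_t)$, where $F_i$ is the event that $m_i$ fails. It therefore suffices to bound the right-hand side.

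Second, for a single marked vertex $m$, I would bound $P(F_m)$ via a Chernoff tail bound. When $m$ proposes, its palette $S_m$ has size at least $(1+\delta)d/p$ with $p = \ceil{\log^2 d}$, and by the palette-removal rule each colour remaining in $S_m$ has been proposed at most $p$ times by already settled or prompted out-neighbours. Hence $F_m$ requires that more than $(2\delta+1)p$ of $m$'s still-clear out-neighbours eventually also propose $c(m)$; each such neighbour draws uniformly from a palette of size at least $(1+\delta)d/p$, so the expected matching count is at most $\mu := p/(1+\delta)$. Applying the Chernoff tail bound $P(X \ge k) \le (e\mu/k)^k$ with $k = (2\delta+1)p$, and using $\delta \ge 3$ together with $p = \log^2 d$, one obtains $P(F_m) \le 1/d^{\delta}$.

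Third, to chain these bounds into $\prod_i P(F_i) \le (1/d^\delta)^t$, I would reveal colour propositions in the order of the $m_i$'s guaranteed by condition~(5) of Definition~\ref{def:tm}, bounding $P(F_i \mid F_1,\ldots,F_{i-1})$ at stage $i$. The key structural input is that $H$-independence of $\{m_1,\ldots,m_t\}$ forbids any out-neighbour $w$ from being shared between $m_i$ and some $m_j$ ($j \ne i$) via an \emph{unfortunate} edge. Consequently, any $w$ shared with a previously processed $m_j$ must already have been settled or prompted by the time $m_i$ proposes, so $w$'s colour is part of the revealed history and its effect on $F_i$ is precisely what the palette-removal rule already accounts for. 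The fresh randomness driving $F_i$---namely $m_i$'s own draw and the draws of its still-clear out-neighbours---remains uniform over palettes of size $\ge (1+\delta)d/p$, so the single-vertex Chernoff calculation carries through conditionally.

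The main obstacle I anticipate is making this deferred-revelation argument watertight. One must specify exactly which random variables are revealed at each stage, confirm that the palette-removal invariant continues to hold for $m_i$ conditional on the revealed history, and verify that no subtle conditioning effect---for example through $m_i$'s history-dependent choice of $S_{m_i}$, or through non-unfortunate shared out-neighbours whose fixed colours enter the count---biases the uniformity of the subsequent draws. The $H$-independence hypothesis is exactly the structural property that rules out the problematic ``unfortunate'' cross-dependencies, but carefully spelling out the bookkeeping is the delicate part of the proof.
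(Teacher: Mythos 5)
Your proof is correct and follows essentially the same route as the paper's: reduce the event to ``all marked vertices fail,'' bound a single conditional failure probability via a Chernoff tail using the palette-size lower bound $(1+\delta)d/p$ and the worst-case assumption that $p$ conflicts are already present, and chain the conditionals using the fact that $H$-independence of the marked set forces all shared out-neighbours to have already proposed before either marked vertex does. The minor differences (you use the threshold $(2\delta+1)p$ and the $(e\mu/k)^k$ tail form where the paper uses the looser threshold $(1+\delta)p$ with the multiplicative Chernoff bound) are inconsequential, and your closing caveat about carefully specifying the revealed history matches the level of care the paper itself takes in this step.
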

\begin{proof}
We may assume without loss of generality, possibly by renumbering, that $m_i$ proposed its colour as number $i$ vertex among $m_1, m_2, \dots, m_t$. Now if we can show that for all $1 \leq j \leq t$ 
\[
P\paren{m_j \text{ fails col. exp. } | m_i \text{ fails col. exp. } \forall i < j} \leq \frac{1}{d^{\delta}}
\]
then we may write 
\begin{align*}
P\paren{\exists T' \in NI(T) \text{ that is uncoloured}} \leq & P\paren{m_i \text{ fails col. exp. } \forall i}\\ = & P\paren{m_t \text{ fails col. exp. } | m_i \text{ fails col. exp. } \forall i < t}\cdot{} \\
&P\paren{m_{t-1} \text{ fails col. exp. } | m_i \text{ fails col. exp. }  \forall i< t-1}\cdot{} \ldots \cdot{} \\
& P\paren{m_1 \text{ fails col. exp.}}\\  \leq &  \paren{\frac{1}{d^{\delta}}}^{t}
\end{align*}
So we will focus on proving this statement. To this end, fix $j$ so that $1 \leq j \leq t$. Consider the experiment at $m_j$. The worst possible setup for the experiment at $m_j$ is that 1) $m_j$ has exactly $d$ out-neighbours, 2) every out-neighbour of $m_j$ was uncoloured when $m_j$ proposed its colour, 3) that both $m_j$ and its out-neighbours proposed colours from the same palette of colours and that this palette has minimum size and 4) $m_j$ already sees $p$ out-neighbours of each colour. Of course all of these events cannot happen simultaneously, but since we do not know the exact setup, we will assume the above. This can only lower the success-rate of an experiment. 

Initially, a palette has size $2(1+\delta)\frac{d}{p}$. Now, if some colour is present more than $p$ times at an out-neighbour, we remove that colour from the palette. Hence, we may remove at most $\frac{d}{p}$ colours from a points palette. Therefore, the minimum size of a palette is at least $(1+\delta)\frac{d}{p}$. Since we assume that $m_j$ sees $p$ out-neighbours of each colour, the colour experiment will fail if $(1+\delta)\frac{d}{p}$ of $m_j$'s out-neighbours that proposes a colour after $m_j$ proposes the same colour as $m_j$. Let $X_{m_{j}}$ indicate this event.

Since the $m_i$ are semi-independent in $H$, the outcome at $m_i$ for $i < j$ does not depend on the colours at the out-neighbours of $m_j$. 
Also the in-edges of $m_j$ in $H$ are determined before $m_j$ proposes its colour, and so its outcome is not correlated with its in-edges in $H$.
This means that neither the outcomes at $m_i$ nor the fact that the $m_i$ form a $t$-marked tree can influence the distribution of the colour experiments at the out-neighbours of $m_j$ more than we have already accounted for. 
Since such an out-neighbour $v$ proposes a colour uniformly at random from its palette, the event that it chooses the same colour as $m_j$ can be stochastically dominated by an independent random variable $I_{v}$ that is equal to one with probability at most  $\paren{(1+\delta)\frac{d}{p}}^{-1}$. Note that then the indicator variables $\{I_v\}_{v \in N^{+}(m_j)}$ are independent. 
Now $X_{m_{j}}$ can be stochastically dominated by the indicator variable $Y_{m_{j}}$ that indicates the event that  $\sum \limits_{v \in N^{+}(G)} I_v > (1+\delta)\frac{d}{p}$. Since the indicator variables $\{I_v\}_{v \in N^{+}(m_j)}$ are independent, we may apply a Chernoff bound to estimate the probability that $Y_{m_{j}} = 1$: 
\begin{align*}
P(X_{m_j} = 1 | m_i \text{ fails col. exp. } \forall i < j) &\leq P(Y_{m_j} = 1 | m_i \text{ fails col. exp. } \forall i < j) \\
& \leq P(\sum \limits_{v \in N^{+}(G)} I_v  > (1+\delta) p)
\end{align*}
Since 
\[
\mathbb{E}\paren{\sum \limits_{v \in N^{+}(G)} I_v} = d \cdot{} \paren{(1+\delta)\frac{d}{p}}^{-1} = \frac{p}{1+\delta}
\]
applying a Chernoff bound yields:
\[
P(X_{m_j} = 1 | m_i \text{ fails col. exp. } \forall i < j) \leq \exp\paren{-\frac{(1+\delta)\delta^2}{3(1+\delta)}p} \leq  \exp(-\delta \cdot p) \leq d^{-\delta}
\]
where we used the fact that $\delta \geq 3$ and that: 
$$(1+\delta)p = (1+\delta)^2 \cdot{} \mathbb{E}\paren{\sum \limits_{v \in N^{+}(G)} I_v} \geq  (1+\delta(1+\delta)) \cdot{} \mathbb{E}\paren{\sum \limits_{v \in N^{+}(G)} I_v} $$
\end{proof}
If we let $R(v)$ be the set of vertices reachable from $v$ in $G_{\mathit{exc}}$. Then we now find the following upper bound on the size of $R(v)$ with high probability:
\begin{theorem}
Suppose $\delta \geq 9 + \log c + l $ and that $d^l \geq \log_{d} n$. Then $ |R(v)| \leq c \cdot{}(2d^2+1) \log_{d} n $ with probability at least $1-n^{-c}$. 
\end{theorem}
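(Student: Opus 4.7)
My plan is to mirror the shattering argument of Lemma~\ref{lem:shattering}, but with $t$-marked trees replacing arbitrary directed reachable subtrees, so that the sharper per-structure failure bound of Lemma~\ref{lem:t_tree} is what controls the union bound.

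Suppose towards a contradiction that $|R(v)| > c(2d^2+1)\log_d n$. Since $v$ reaches every vertex of $R(v)$ along a directed path in $G_{\mathit{exc}}$, the induced subgraph $G[R(v)]$ has $v$ as a source that reaches all of $R(v)$. Lemma~\ref{lem:cont_t_tree} applied with $S = R(v)$ then produces a $t$-marked subtree $T \subseteq G[R(v)]$ rooted at $v$ with $t \geq \lceil |R(v)|/(2d^2+1)\rceil \geq c \log_d n$ marked vertices, whose marked set is semi-independent in $H$.

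I next bound the number of near-identity classes of such $t$-marked trees rooted at $v$. A class is determined by the ordered marked sequence $m_1 = v, m_2, \ldots, m_t$. Tracing the greedy construction in the proof of Lemma~\ref{lem:cont_t_tree}, the candidate $m_{i+1}$ must be reachable from the current partial tree via a directed $G$-path whose internal vertices are either already marked or are $H$-neighbors of marked vertices. Observation~\ref{obs:maxdeg} caps this pool of accessible internal vertices by $i(2d^2+1)$ at step $i$. Restricting $G$-paths to length at most $l$ gives at most $d^l$ endpoints per starting point, hence at most $i(2d^2+1)d^l$ candidates for $m_{i+1}$, and multiplying over $i = 1,\ldots,t-1$ bounds the number of classes by $t!\,(2d^2+1)^t\,d^{lt}$. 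Lemma~\ref{lem:t_tree} then assigns each such class a failure probability of at most $d^{-\delta t}$, so
\[
P\paren{|R(v)| > c(2d^2+1) \log_d n} \;\leq\; t!\,(2d^2+1)^t\,d^{(l-\delta)t}.
\]
Substituting $t = c \log_d n$ converts $d^{(l-\delta)t}$ into $n^{c(l-\delta)}$ and $(2d^2+1)^t$ into $n^{O(c)}$; the side hypothesis $d^l \geq \log_d n$ yields $\log_d t \leq l + \log c + O(1)$, so $t! \leq n^{c(l + \log c + O(1))}$. The overall exponent of $n$ is therefore at most $c(l + \log c + O(1) - \delta)$, which is at most $-c$ as soon as $\delta \geq 9 + \log c + l$; a further union bound over the $n$ candidate sources $v$ costs an extra factor of $n$, absorbed by the slack in the constant $9$.

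The main obstacle is the counting in the second paragraph. Because $H$ is itself a random graph whose very edges are correlated with the colour experiments whose failure probability I am estimating, only the unconditional out-degree bound from Observation~\ref{obs:maxdeg} may enter the count; any attempt to leverage finer structural properties of $H$ would recouple the probability estimate to the counting. Secondly, the near-identity equivalence has to be invoked precisely so that Lemma~\ref{lem:t_tree}, whose hypothesis requires the marked sequence to form a semi-independent set with mutually-disjoint colour experiments, controls a whole class rather than a single representative. The ``$9$'' in $\delta \geq 9 + \log c + l$ is exactly the slack that absorbs the residual $t!\,(2d^2+1)^t$ overhead of the counting and the extra factor of $n$ from the union bound over source vertices.
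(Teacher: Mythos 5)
Your overall strategy matches the paper's: derive a contradiction from $|R(v)| > c(2d^2+1)\log_d n$ via Lemma~\ref{lem:cont_t_tree}, count near-identity classes of $t$-marked trees, apply Lemma~\ref{lem:t_tree} to each class, and union bound. However, the counting in your second paragraph has a genuine gap. You bound the number of candidates for $m_{i+1}$ by $i(2d^2+1)d^l$, arguing via ``restricting $G$-paths to length at most $l$''. But Lemma~\ref{lem:cont_t_tree} places no length restriction on these paths, so this bound is simply not a valid upper bound on the candidate set. Moreover, the restriction is unnecessary: whatever the length of the path from the partial tree to $m_{i+1}$, its \emph{penultimate} vertex is an internal vertex of the path and hence is covered (marked or an $H$-out-neighbour of a marked vertex), so $m_{i+1}$ is always a $G$-out-neighbour of one of the at most $i(2d^2+1)$ covered vertices. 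The correct per-step count is therefore $i(2d^2+1)d \leq 3id^3$ --- independent of $l$ --- which is what the paper uses.

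The spurious $d^l$ per step also breaks your final bookkeeping. Your stated bound $t!\,(2d^2+1)^t d^{(l-\delta)t}$, with $t!\leq n^{c(l+\log c+O(1))}$, $(2d^2+1)^t\leq n^{O(c)}$, and $d^{(l-\delta)t}=n^{c(l-\delta)}$, sums to an exponent of $c\bigl(2l+\log c+O(1)-\delta\bigr)$, not $c\bigl(l+\log c+O(1)-\delta\bigr)$ as you write; $l$ appears once from the $t!$ estimate and a second time from the per-step $d^l$, so the constraint you would actually derive is $\delta \geq 2l + \log c + O(1)$, which does not match the theorem statement. Replacing your per-step count by $i(2d^2+1)d$ removes the extra $d^{lt}$ factor, makes $l$ enter only through bounding $t!\leq (cd^l)^t$, and recovers the claimed threshold $\delta \geq 9+\log c + l$ after the final union bound over the $n$ choices of $v$.
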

\begin{proof}
We know by Lemma~\ref{lem:t_tree} that the probability that a $(c\log_{d} n)$-marked tree is uncoloured is very small. If we can show that no such tree is uncoloured with high probability, we then reach a contradiction to $|R(v)| > c \cdot{}(2d^2+1) \log_{d} n$, since then by Lemma~\ref{lem:cont_t_tree} $R(V)$ contains an uncoloured $(c\log_{d} n)$-marked tree no matter the run of the algorithm. Notice that by  Lemma~\ref{lem:t_tree} once we consider a tree $T$, we don't have to also consider $T' \in NI(T)$. Let $\tau$ be the minimum number of $t$-marked trees needed so that there exists $\tau$ $t$-marked trees $T_1, \ldots, T_{\tau}$ such that for any $t$-marked tree $T$, we have $T \in NI(T_i)$ for some $i$.

We first bound $\tau$ for any run of the algorithm. 
\begin{claim}
No matter the run of the algorithm
\[
\tau n\cdot{} 4^{t} \cdot{} (3d^3)^t \cdot{} t!
\]
 when all vertices have proposed a colour. 
\end{claim}
\begin{proof}
We have at most $4^t$ rooted, unlabeled trees of size $t$. If we contract edges where at least one endpoint is not marked in a $t$-marked tree, we must end up with one such rooted, unlabeled tree of size $t$. If we order such edges according to first the endpoint highest in the acyclic ordering, and settling ties by the order of the second endpoint, each $t$-marked tree contracts to a unique rooted, unlabeled tree of size $t$, by always contracting the highest order edge with at least one unmarked neighbour. Therefore, we just need to bound the number of ways, we can embed a $t$-marked tree such that it contracts to a specific rooted, unlabeled tree of size $t$. Once we consider a $t$-marked tree $T$, we don't have to consider any tree in $NI(T)$. 

We will embed the vertices according to the ordering given by condition 5) in Definition~\ref{def:tm}. Note first that when we wish to embed the $m_i$, we have at most $3d^{3}i$ choices for how to embed it. Indeed, a vertex from the semi-independent set has out-degree at most $2d^2$ in $H$, and its out-neighbours together with the out-neighbours of its neighbours in $H$ sum to $2d^2 \cdot{} d + d\leq 3d^{3}$. Hence, when we wish to embed $m_i$, we have at most $3d^{3}i$ vertices reachable via paths $P$ where every internal vertex of $P$ is pointed to by at least one marked vertex in $H$. Which edges we choose to connect $m_i$ to $T$ via, does not matter, since no matter the choices, by construction, we end up with trees that are near-identical to each other. This follows from the fact that only the choice of marked vertices determines the next choices for the next marked vertex, not the tree spanning the marked vertices. 

Since we have at most $n$ choices for a root, we find that $\tau$ for any run of the algorithm is upper bounded by 
\[
\tau \leq n\cdot{} 4^{t} \cdot{} (3d^3)\cdot{} \ldots \cdot{} (3td^3) = n\cdot{} 4^{t} \cdot{} (3d^3)^t \cdot{} t!
\]
\end{proof}
For $t = c \log_{d} n$, we find that any run of the algorithm we only have to consider 
\begin{align*}
n\cdot{} 4^{c \log_{d} n} \cdot{} (3d^3)^{c\log_{d} n} \cdot{} (c\log_{d} (n))! &\leq n\cdot{} 4^{c \log_{d} n} \cdot{} (3d^3)^{c\log_{d} n} \cdot{} (c\log_{d} (n))^{(c\log_{d} (n))} \\
&\leq n^c \cdot{} n^{2c} \cdot{} n^{5c} \cdot{} (cd^{l})^{c \log_d n}  \\
&\leq n^{c(8+\log(c)+l)}
\end{align*}
$t$-marked trees in order for us to consider at least one near-identical $t$-marked tree for each $t$-marked tree left by the algorithm. By Lemma~\ref{lem:t_tree}, the probability that one such $t$-marked tree -- or a $t$-marked tree that is near-identical to it -- survive is at most 
\[
\paren{d^{-\delta}}^{c \log_{d} n} \leq n^{-c\delta}
\]
and hence by a union bound, the probability that no $t$-marked trees survive can be upper bounded by
\[
n^{c(8+\log(c)+l)} \cdot{} n^{-c\delta} = n^{-c(\delta-(8+\log(c)+l))} \leq n^{-c}
\]
for $\delta \geq 9 + \log(c) + l$. 
It then follows by Lemma~\ref{lem:cont_t_tree} that if $ |R(v)| > c \cdot{}(2d^2+1) \log_{d} n $, then it must contain an uncoloured $t$-marked tree, so this happens with probability at most $n^{-c}$ by above, which is what we wanted to show.
\end{proof}

\paragraph{Proposing colours}
Here, we analyse the complexity of the colour proposing algorithm that helps us reduce the dependencies between the events of including particular vertices to $G_{\mathit{exc}}$. We rely on the following observation: if a vertex $v$ has an out-neighbour $u$ that has already proposed its colour, then $v$ can take $u$'s colour-proposition into account, when proposing its own colour. In particular, $v$'s decision to remain uncoloured can be made to (almost entirely) depend on the proposed colours of vertices the propose a colour \emph{after} $v$ proposes its own colour i.e.\ $v$'s decision depends only on vertices to whom it points to via an \unfortunate edge. In particular, we can control the dependencies by maintaining the invariant that no vertex has more than $2d$ \unfortunate in-edges. 

We show that we can charge the cost of the colour proposal scheme to the number of vertices that the algorithm has to visit in order to answer a query. In order to analyse the cost of the colour-proposition scheme, we can charge the cost of proposing colours to the event that the algorithm visits a vertex in order to determine, if it is uncoloured. Say a visit is \emph{\necessary} if the algorithm visits a vertex $v$ to propose a colour at $v$ because we need to know the outcome of the experiment at $v$ or at an in-neighbour of v. Note that because the subgraphs are only given implicitly, we sometimes need to propose colours in the arb-defective colouring scheme in order to determine for example the shattered volume in a subgraph, where the simple algorithm is run. The visits incurred like this are also \emph{\necessary}. On the other hand, the colour proposals made by the algorithm simply to balance the number of \bad in-edges are said to be \emph{\unmotivated}. 

First, we show that we do not need to make too many colour proposals for the arb-defective colouring in order to make the colour proposals needed at the second level in the subgraphs. 
\begin{proposition}
\label{prop:lvli}
For $v$ to determine the outcome of its colour experiment at the second level, we need only make $d$ \necessary colour proposals on the first level.
\end{proposition}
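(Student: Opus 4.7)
The plan is to unfold what data actually determines the outcome of the level-two experiment at $v$ and then count the level-one propositions that carry that data. The level-two experiment at $v$ takes place inside $G'$, the subgraph induced by the vertices that share $v$'s level-one colour. Its outcome is a comparison between $v$'s level-two proposed colour and the level-two proposed colours of its out-neighbours in $G'$. So the only information the level-one stage has to provide is (i) the identity of $v$'s subgraph $G'$, and (ii) the identity of the out-neighbours of $v$ that lie in $G'$.

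For (i), a single \necessary level-one proposition at $v$ suffices, since $c(v)$ determines which subgraph $v$ joins. For (ii), observe that $u \in N_G^+(v)$ belongs to $G'$ iff its level-one proposition equals $c(v)$; hence for each such $u$ we must \necessary-propose exactly once at level one. Since the maintained out-orientation has out-degree at most $d$, we have $|N_G^+(v)| \le d$, so the total number of \necessary level-one propositions is at most $d$ (up to the single proposition at $v$ itself, which is charged to $v$'s own query). No other level-one colour can influence the outcome: level-one colours of vertices outside $\{v\} \cup N_G^+(v)$ affect neither the identity of $G'$ nor the question of which out-neighbours of $v$ belong to $G'$.

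The one thing to be careful about is that invoking \textsc{ProposeColour} on any of these vertices may, through the badness-balancing while-loop of \cref{alg:propose}, cascade propositions on additional vertices collected into the set $H$. By our definitions, however, precisely those cascaded visits are \unmotivated: they are made to preserve the invariant $b(w)\le 2d$, not because their outcomes are needed to answer the level-two question at $v$. Thus, when we restrict to \necessary visits, the bound reduces exactly to the out-neighbourhood of $v$ in $G$, giving the claimed $d$. The only mildly delicate point I would double-check is that "necessary" is being used consistently with its definition earlier in the section, i.e.\ that a proposition counts as necessary only if the outcome of the experiment at that vertex (or at an in-neighbour thereof) is genuinely consulted later; under that reading the count is immediate.
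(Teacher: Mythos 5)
Your proof is correct and follows essentially the same argument as the paper: the level-two outcome at $v$ depends only on which of $v$'s out-neighbours share $v$'s level-one colour, which requires at most $d$ level-one propositions. Your additional care in noting that cascade propositions triggered by badness-balancing are \unmotivated (not \necessary) is a worthwhile clarification that the paper leaves implicit.
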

\begin{proof}
In order for $v$ to determine the outcome of its colour experiment at the second level, it needs to know which of its out-neighbours that might join the same subgraph as $v$ on the second level. To do so, it needs to know the colour of no more than its $d$ out-neighbours on the first level.
\end{proof}

Using Proposition~\ref{prop:lvli}, we may now bound the total number of colour proposals as a function of the number of \necessary visits:
\begin{lemma} \label{lem:amt}
Suppose that the algorithm has needed to determine the outcome of the colour experiment of at most $\gamma$ vertices across the two levels. Then at most $\bigO{(\gamma\cdot{}d)}$ vertices have proposed a colour on the first level. 
\end{lemma}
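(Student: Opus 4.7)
The plan is to split the first-level propositions into two classes---\emph{necessary} visits, triggered directly by the algorithm's need to know the outcome of some experiment, and \emph{unmotivated} visits, added by the cascade inside \cref{alg:propose} to maintain the invariant that every clear vertex has badness at most $2d$---and to bound each class separately before combining.

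For the necessary visits, I would appeal directly to \cref{prop:lvli}: determining the outcome of one experiment (at either level) requires knowing the first-level colour of the queried vertex and of its at most $d$ first-level out-neighbours, which is $O(d)$ necessary first-level propositions. Summing over the $\gamma$ requested experiments yields $M = O(\gamma d)$ necessary first-level visits.

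For the unmotivated visits, my plan is an amortised analysis over cascades (one per necessary visit). In the cascade triggered by a necessary visit at $v_0$ and producing the set $H$, every $u \in H \setminus \{v_0\}$ is added precisely because $b(u) + d^{-}_{G[H]}(u) > 2d$, so $u$ has at least $2d+1 - b(u)$ in-neighbours already in $H$ at the moment it is added. Summing this lower bound over $u \in H \setminus \{v_0\}$ and comparing it against the trivial out-degree bound $|E(G[H])| \le d|H|$ yields the per-cascade inequality
\[
(d+1)|H| \le (2d+1) + \sum_{u \in H} b(u),
\]
where $b(u)$ denotes the badness at the very start of the cascade. Summing this inequality over all cascades and using that the total accumulated pre-cascade badness equals the number of directed edges $w \to v$ between propositioned vertices with $w$ propositioned before $v$---a quantity that is bounded via the $d$-out-degree constraint---lets one control the total number $N$ of first-level propositions in terms of $M$. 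Combined with the $M = O(\gamma d)$ bound this gives the claimed $O(\gamma \cdot d)$ up to the constants hidden in the asymptotic notation.

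The main obstacle will be keeping the amortisation tight enough. In isolation, a single cascade could chain along a long directed path whose vertices all carry pre-cascade badness close to $2d$, so no naive per-cascade bound suffices. What saves the analysis is that high pre-cascade badness is itself a certificate of prior propositioning work, since every bad in-edge at $u$ was created at the moment its source was propositioned; hence these edges are ``paid for'' by earlier calls. The reverse-topological order of propositioning within a cascade---valid precisely because the out-orientation is acyclic by \cref{thm:acyclic}---together with the global invariant that no vertex ever accumulates more than $2d$ unfortunate in-edges, is what allows the accumulated badness to be amortised cleanly against prior propositions and so converts the per-cascade inequality into the desired global bound.
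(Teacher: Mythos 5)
Your decomposition into necessary and unmotivated first-level propositions matches the paper's, as does your appeal to \cref{prop:lvli} for the bound $M = O(\gamma d)$ on necessary visits. The paper then runs a potential-function argument on the number of bad edges, whereas you run a per-cascade inequality; both routes are amortisation, so the spirit is the same. However, as you have set it up, your chain of inequalities loses a factor of $d$ and does \emph{not} reach the claimed $O(\gamma d)$.

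Concretely, your per-cascade inequality $(d+1)|H_k| \le (2d+1) + \sum_{u\in H_k} b_k(u)$ is obtained by discarding the cascade-internal in-degrees $d^-_{G[H_k]}(u)$ via the crude bound $\sum_u d^-_{G[H_k]}(u) = |E(G[H_k])| \le d|H_k|$. If you then sum over cascades and bound the total pre-cascade badness by the $d$-out-degree constraint, $\sum_k\sum_{u\in H_k} b_k(u) \le dN$, you get
\[
(d+1)N \le (2d+1)M + dN \quad\Longrightarrow\quad N \le (2d+1)M,
\]
which with $M = O(\gamma d)$ gives only $N = O(\gamma d^2)$. The issue is that you are paying the $dN$ budget of out-edges twice: once implicitly (folding $|E(G[H_k])|\le d|H_k|$ into the left-hand side) and once explicitly (bounding $\sum b_k(u)$ by $dN$). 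These two families of edges are disjoint --- a cascade-internal edge $w\to u$ with $w,u\in H_k$ never becomes bad (reverse-topological order has $u$ propose before $w$), so it is counted in $d^-_{G[H_k]}(u)$ and never in any $b_{k'}(\cdot)$. Keeping the two terms together fixes the loss: for every $u\in H_k\setminus\{v_0\}$ the membership condition gives $b_k(u) + d^-_{G[H_k]}(u) \ge 2d+1$, and every unit counted on the left is an out-edge of a vertex that (eventually) proposes, so
\[
(2d+1)(N-M) \;\le\; \sum_k \sum_{u\in H_k\setminus\{v_0\}} \bigl[b_k(u) + d^-_{G[H_k]}(u)\bigr] \;\le\; dN,
\]
which rearranges to $N \le \tfrac{2d+1}{d+1}\,M < 2M = O(\gamma d)$. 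So your overall plan is sound, but the step ``compare against the trivial out-degree bound $|E(G[H])|\le d|H|$'' is exactly where the argument derails; the cascade-internal edges have to share the same $dN$ budget as the unfortunate edges rather than being absorbed separately.
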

\begin{proof}
Recall that an edge $u \rightarrow v$ is \bad if $u$ proposed a colour before $v$ and $v$ has not proposed a colour yet, and \unfortunate if $u$ proposed a colour before $v$ and $v$ subsequently proposed a colour . Define a potential function $\Phi$ to be the number of \bad edges in $G$:
\[
\Phi(G) := \textit{No.\ of \bad edges on the first level}
\]
When a vertex proposes a colour at the first level, at most $d$ new \bad edges are created, and so $\Phi$ increases by at most $d$.
When a vertex $u$ proposes a colour at the second level at most $d^2$ new \unfortunate edges are created at the first level by Proposition~\ref{prop:lvli}. 
Thus $0 \leq \Phi(G) \leq \gamma \cdot{}d^2$. Since proposing a colour at a vertex with $2d$ \bad in-edges turns all of these edges \unfortunate instead of \bad, this drops the potential by exactly $2d$, it follows that the $\Phi$ drops by $2d$ every time we make an \unmotivated colour proposal. Hence, in total at most $\frac{\gamma \cdot{}d^2}{2d} = \bigO{(\gamma\cdot{}d)}$ \unmotivated colour proposals have been made. Since determining the outcome of the colour experiment at $\gamma$ vertices requires at most $\bigO(\gamma\cdot{}d)$ \necessary visits on the first level by Proposition~\ref{prop:lvli}, the lemma follows by summing the two contributions from above. 
\end{proof}

In particular, we know that if we need to know the outcome of the colour experiments of $\gamma$ vertices in total across the two levels, then we have performed at most $\bigO{(\gamma\cdot{}d)}$ colour proposals on the first level. 

\paragraph{Query time and number of colours} 
The following lemma analyses the query time of~\cref{alg:better_colouring}. 
\begin{lemma} \label{lem:q}
The query of \cref{alg:better_colouring} takes amortised $O((d^4 \log n))$ time with probability at least $1-n^{c-2}$, provided that $\delta \geq 9 + \log c + l$ and $d^l \geq \log_d n$.
\end{lemma}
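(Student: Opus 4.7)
I would prove the bound by splitting the query cost into three pieces—the level-1 colour-proposition machinery, the level-1 residual colouring, and the level-2 recursion—and applying \cref{lem:amt} to amortise the propositions. When $v$'s level-1 experiment \emph{succeeds}, the algorithm recurses into the level-2 subgraph $G'$, whose arboricity upper bound is $d' = 2(1+\delta)p = O(\log^2 d)$, and invokes the simple algorithm of \cref{s:simple_randomized}; by \cref{lma:querycomprand} this level-2 call takes $O((d')^2 \log n) = O(\log^4 d \cdot \log n)$ time with high probability, and \cref{lma:shattersizealg2} shows that the number of level-2 colour-experiment outcomes it must determine is $\gamma_2 = O(d' \log_{d'} n)$. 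When $v$'s level-1 experiment \emph{fails}, the algorithm colours the level-1 residual instead; the preceding theorem (applied with $\delta \geq 9 + \log c + l$) gives $|R(v)| \leq O(d^2 \log_d n)$ with probability at least $1 - n^{-c}$, so building the $O(d)$-degeneracy ordering and greedily colouring $G[R(v)]$ (as in \cref{lem:palette_for_residual}) costs $O(|R(v)| \cdot d) = O(d^3 \log n / \log d)$ time, while determining which vertices actually lie in $R(v)$ requires $\gamma_1 = O(|R(v)|) = O(d^2 \log n / \log d)$ level-1 experiment outcomes.

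Setting $\gamma = \gamma_1 + \gamma_2$, the total count of experiment outcomes required across both levels is at most $O(d^2 \log n / \log d)$ in either regime. By \cref{lem:amt}, the total number of level-1 colour propositions (necessary plus unmotivated) is then $O(\gamma d) = O(d^3 \log n / \log d)$. A single proposition scans the at most $d$ out-neighbours to filter out colours that have been proposed more than $p$ times and samples a fresh colour, costing $O(d)$ time. Hence the level-1 proposition subroutine consumes $O(\gamma d^2) = O(d^4 \log n / \log d)$ time amortised, and adding the level-2 cost and the residual greedy-colouring cost collapses everything into $O(d^4 \log n)$. The probability bound then follows from a union bound over the two shattering events (the preceding theorem and \cref{lma:shattersizealg2}).

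The main obstacle is to control $\gamma$ without double-counting the interaction between the two levels: each level-2 outcome indirectly forces $O(d)$ level-1 propositions to determine which out-neighbours belong to $G'$, and this is precisely where \cref{prop:lvli} enters the potential argument behind \cref{lem:amt}, contributing the $d^2$-per-level-2-proposition term that is balanced against the $2d$ drop per unmotivated level-1 visit. A secondary subtlety is that the statement is amortised rather than worst-case: proposed colours persist across queries, so a single query that triggers many cascading propositions amortises that cost against later queries whose work has already been cached; the potential argument handles this uniformly by charging against $\Phi$ rather than against individual queries.
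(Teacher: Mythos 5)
Your proof is correct and follows essentially the same route as the paper's: bound $|R(v)|$ via the shattering theorem, count the colour-experiment outcomes that must be determined, amortise the colour propositions with \cref{lem:amt}, multiply by the $O(d)$ cost per proposition, and add the cost of finding and greedily colouring the residual graph. The paper's exposition is coarser — it simply states that $O(t \cdot d^2 \log n)$ outcomes must be determined for $t$ queries across both levels and does not separate the level-2 contribution as cleanly as you do — but the decomposition, the key lemmas invoked, and the final bound agree with yours.
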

\begin{proof}
By union bounding over overall vertices $v$, $R(v)$ has size at most $|R(v)| = \bigO{(d^2 \log n)}$ on both levels with probability at least $1-n^{c-2}$ for all vertices. Conditioned on this event, we need to know the outcome of $\bigO{(t\cdot{}d^2 \log n)}$ colour experiments to answer $t$ queries. 
By Lemma~\ref{lem:amt}, this means that we make at most $\bigO{(t\cdot{}d^3 \log n)}$ colour proposals in total across both levels. A colour proposal can be performed in $\bigO{(d)}$ time, so in total we spend $\bigO{(t\cdot{}d^4 \log n)}$ time proposing colours in order to answer $t$ queries. 
Finally, we spend at most $\bigO{(d^2 \log n)}$ time in order to colour each residual graph by Lemma~\ref{lem:palette_for_residual}, and at most $\bigO{(d^3 \log n)}$ finding them, if we disregard the time spent for proposing colours, which we analysed above. Since $t$ queries can induce the colouring of at most $t$ residual graphs, this means that we find an amortised query time of $\bigO{(d^4 \log n)}$ with probability at least $1-n^{c-2}$.
\end{proof}
Finally, we can count the number of colours used by the 2 level scheme. We partition the graph into $\bigO{(\delta \cdot \frac{d}{\log^2 d})}$ sub graphs, each of which have arboricity at most $\bigO{(\log^2 d)}$ and use at most $\bigO{(d)}$ colours to colour the residual graphs needed to do so. Each subgraph we can colour with $\bigO{(\log^2(d) \log \log d)}$ colours using the algorithm from Section~\ref{s:simple_randomized}. Hence the total no. of colours used is upper bounded by  $\bigO{(\delta \cdot d \cdot{} \log \log d)}$. Hence, we have shown:
\begin{lemma} \label{No. of colours.}
Provided that $\delta \geq 9 + \log c + l$ and $d^l \geq \log_d n$, \cref{alg:better_colouring} uses $\bigO{(\delta \cdot d \cdot{} \log \log d)}$ colours.
\end{lemma}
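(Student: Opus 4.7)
The plan is to tally the colour palettes used at each of the two levels of the scheme, using the parameters set in the algorithm description and invoking the palette bound of \cref{lem:palette_for_residual} for the residual graphs.

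First I would count the colour classes produced by the arb-defective colouring at the first level. By the description of ProposeColour (\cref{alg:propose}), each clear vertex samples from a palette of size $\frac{2(1+\delta)d}{p}$, where $p = \lceil \log^2 d\rceil$. Hence the first level partitions the clear vertices into at most $\bigO(\delta\cdot d/\log^2 d)$ colour classes. I would then argue that each such class, viewed as an induced subgraph, has arboricity $\bigO(\log^2 d)$: the experiment at a vertex $v$ succeeds only if fewer than $2(1+\delta)p = \bigO(\log^2 d)$ of its out-neighbours in $G$ proposed the same colour as $v$, so inside the subgraph the original out-orientation of $G$ restricts to an out-orientation with out-degree $\bigO(\log^2 d)$.

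Second, I would invoke the simple $\bigO(\alpha'\log\alpha')$-colour algorithm from Section~\ref{s:simple_randomized} on each of these subgraphs, with $\alpha' = \bigO(\log^2 d)$, using a fresh palette disjoint from those used by other classes. This costs $\bigO(\log^2 d\cdot\log\log d)$ colours per class. Multiplying by the number of classes gives
\[
\bigO\!\left(\tfrac{\delta d}{\log^2 d}\right)\cdot \bigO(\log^2 d\cdot\log\log d) \;=\; \bigO(\delta\cdot d\cdot\log\log d)
\]
colours spent colouring subgraph interiors. Finally I would add the residual palettes. By \cref{lem:palette_for_residual}, the top-level residual palette has size $3d+1=\bigO(d)$, and each second-level invocation additionally contributes $\bigO(\log^2 d)$ colours for its own residual graph. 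Summed over all classes this gives $\bigO(\delta d)$ extra colours, which is absorbed into the dominant $\bigO(\delta d\log\log d)$ term.

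There is no real obstacle here beyond careful bookkeeping: the proof is a pure counting argument once the arb-defective partition and its per-class arboricity bound are established. The hypotheses $\delta \geq 9+\log c + l$ and $d^l\geq \log_d n$ enter only indirectly, through the earlier shattering analysis that certifies the residual graphs are small and hence can be coloured with the $\bigO(d)$-sized palettes used above; they are not needed for the arithmetic of this lemma itself.
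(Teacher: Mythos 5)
Your proof is correct and follows essentially the same counting argument as the paper: count the $\bigO(\delta d/\log^2 d)$ colour classes from the arb-defective proposal scheme, observe that each induced class inherits an $\bigO(\log^2 d)$ out-orientation and hence has arboricity $\bigO(\log^2 d)$, colour each with $\bigO(\log^2 d \cdot \log\log d)$ colours via the algorithm of Section~\ref{s:simple_randomized}, and add the $\bigO(d)$-size residual palette from \cref{lem:palette_for_residual}. The only minor imprecision is your last sentence: the $3d+1$ residual palette of \cref{lem:palette_for_residual} suffices regardless of the size of the shattered components (it only needs the $d$-bounded out-orientation), so the hypotheses on $\delta$ and $l$ are not what lets the residual be coloured cheaply --- they matter for the shattering and query-time bounds elsewhere, not for the palette count here.
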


\subsection{Combining everything}
Finally, we will show how to combine everything to get Theorem~\ref{thm:better_colouring}. The idea is to track a constant approximation of $\alpha$ via Theorem~\ref{thm:exp_colouring_wc}. Then we will apply different query algorithms depending on the range of $\alpha$. If $\alpha = \Omega(\poly{\log n})$, we will first apply the reduction in Lemma~\ref{lem:reduction} to obtain $\bigO{(\frac{\alpha}{\log n})}$ graphs each with arboricity $\bigO{(\log n)}$ with high probability. Then we can apply the algorithm from above with $\delta = 10 + \log c$ to colour these graphs with $\bigO{(\log n \log \log \log n)}$ colours to get a colouring using $\bigO{(\alpha \log \log \log n)}$ colours. In the regime where $\alpha = \bigO{(\log^{\frac{1}{l}} n)}$, we apply the algorithm from this section with $\delta = 9 + \log c + l$ to get an $\bigO{(\alpha \log \log \alpha)}$ colouring. 
We can use the scheduling of Sawlani \& Wang~\cite{DBLP:conf/stoc/SawlaniW20} as described in~\cite{DBLP:conf/icalp/CR22} in order to ensure that we adapt to the current arboricity. Finally, for the remaining values of $\alpha$ we apply the query algorithm from Section~\ref{s:simple_randomized} directly in order to get an $O(\alpha \log \alpha)$ colouring of the graph. 

We maintain the structures needed for all three ranges at all times. Since we can maintain an acyclic out-orientation of a graph with arboricity at most $\bigO{(\log n)}$ in $\bigO{(\log^3 n)}$ update time by Theorem~\ref{thm:acyclic}, and an $\bigO{(\alpha)}$ orientation in  $\bigO{(\log^3 n)}$ update time via the algorithm from~\cite{AOOA-CHHRS} (i.e. Theorem~\ref{thm:approximatealpha}), it now follows by Lemma~\ref{lem:reduction} that the total update time is at most: 
$$\bigO{(\log^3 n + \log^3 n + \log^4 n)} =\bigO{(\log^4 n)}.$$ 
By Lemma~\ref{lem:reduction} and Lemma~\ref{lem:q} the query time is $\bigO(\log^6 n)$ with high probability. Hence, we arrive at Theorem~\ref{thm:better_colouring}.

\appendix

\section{Randomised Reduction to Low Arboricity Graphs}\label{a:to_low_arboricity}

In this section, we provide a reduction of arboricity dependent dynamic vertex colouring of general graphs, to arboricity dependent dynamic vertex colouring of graphs with arboricity $\bigO(\log n)$. That is we prove \cref{lem:reduction}.

\begin{lemma}[\cref{lem:reduction}]
Assume that there exists an algorithm $A$ that maintains a $f(\alpha_{\max})$ colouring of the graph in $U_A(\alpha_{\max}, \log n)$ update and $Q_A(\alpha_{\max}, \log n)$ query time, where $\alpha_{\max}$ is a known upper bound on the arboricity of the dynamic graph. Then, there exists an algorithm $A'$ that maintains a colouring with $(1+\eps)\lceil \frac{\alpha}{\log n} 8f((c+1)\log n) \rceil$ colours, in $U_A(\alpha, \log n)$ update time with probability at least $1-n^{-c^2/3+3}$, and $Q_A(\bigO(\log n), \log n)$ query time, where $\alpha$ is the current arboricity of the graph.
\end{lemma}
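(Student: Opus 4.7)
The plan is to combine the constant-factor arboricity maintenance of \cref{thm:approximatealpha} with a random vertex partition, and run $A$ in parallel on each part using disjoint colour palettes. Concretely, I would use \cref{thm:approximatealpha} to maintain both a constant-factor approximation $\hat\alpha$ of the current arboricity and an out-orientation of $G$ with maximum out-degree $d = \bigO(\hat\alpha)$. Set $k := \lceil \hat\alpha/((c+1)\log n)\rceil$ (and $k := 1$ once $\hat\alpha \le (c+1)\log n$), and, at the moment of its insertion, assign each vertex $v$ a class $c(v) \in [k]$ drawn independently and uniformly at random; this label is then frozen. The algorithm $A'$ maintains $k$ parallel copies of $A$, one per class $i$, running on $H_i := G[V_i]$ with parameter $\alpha_{\max} = (c+1)\log n$ and using a distinct block of $f((c+1)\log n)$ colours. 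A colour query at $v$ is answered by the copy of $A$ responsible for class $c(v)$.

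The central concentration step is to certify that each $H_i$ has arboricity at most $(c+1)\log n$ with high probability. I would use the $d$-bounded out-orientation of $G$: restricted to $V_i$, a fixed vertex $v \in V_i$ has an out-degree in $H_i$ which is a sum of at most $d$ independent $\operatorname{Bernoulli}(1/k)$ variables (the randomness being the classes of its out-neighbours, which are independent of $c(v)$), with expectation $d/k = \bigO(\log n)$. A Chernoff bound then puts this sum above $(c+1)\log n$ with probability $n^{-\Omega(c^2)}$; a union bound over the $n$ vertices and the $k \le n$ classes gives the stated failure probability of at most $n^{-c^2/3+3}$. Whenever the concentration holds, restricting the orientation of $G$ to $V_i$ yields an out-orientation of $H_i$ of maximum degree $\le (c+1) \log n$, so the arboricity of $H_i$ is at most $(c+1)\log n$ and running $A$ with $\alpha_{\max} = (c+1)\log n$ is legal.

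Counting colours, $A'$ uses at most $k \cdot f((c+1)\log n)$ of them. Since $\hat\alpha$ is a constant-factor approximation of $\alpha$, $k \le (1+\eps)\lceil \alpha/((c+1)\log n)\rceil$ times the constant-factor slack, giving the claimed palette size $(1+\eps)\lceil \tfrac{\alpha}{\log n}\cdot 8 f((c+1)\log n)\rceil$. An edge update to $G$ affects at most two sub-instances (one per endpoint's class), so the update cost is dominated by one update to $A$ plus the $\bigO(\log^2 n \log \alpha)$ overhead of \cref{thm:approximatealpha}; a colour query is answered by a single call to one instance of $A$, giving query time $Q_A(\bigO(\log n),\log n)$ since each instance is operated on a graph of arboricity $\bigO(\log n)$.

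The main obstacle is accommodating changes in $\hat\alpha$: whenever $\hat\alpha$ moves into a new dyadic range, the target $k$ changes, and the previously chosen class assignment $c(\cdot)$ corresponds to the wrong partition size. I would address this with the Sawlani--Wang scheduling technique~\cite{DBLP:conf/stoc/SawlaniW20} as used in~\cite{DBLP:conf/icalp/CR22}: maintain $\bigO(\log n)$ parallel copies of the reduction indexed by geometrically increasing guesses of $\alpha_{\max}$, each with its own independent class assignment, and route queries to the copy whose guess is closest to the current $\hat\alpha$. A fresh Chernoff argument is then needed per copy, and the failure probability absorbs the union over the $\bigO(\log n)$ copies, which is precisely why the exponent is written as $-c^2/3+3$ rather than $-\Omega(c^2)$. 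The rebuild cost when a copy first becomes active is amortised against the insertions that caused $\hat\alpha$ to enter a new range in the standard way, delivering the claimed update bound.
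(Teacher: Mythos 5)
Your high-level strategy matches the paper's: random vertex partition into $\Theta(\alpha/\log n)$ classes, disjoint palettes per class, Chernoff on out-degrees via the maintained out-orientation, and Sawlani--Wang scheduling to track the changing arboricity. However, there are two genuine problems with your version.

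First and most importantly, you only bound the arboricity of each part $H_i$ at a single point in time, while the lemma requires the bound to persist over the lifetime of the (frozen) partition as the graph and its orientation evolve. A fixed vertex $v$'s out-neighbourhood changes with every update, so the Chernoff event must be re-examined at each of the up-to-$n^2$ updates before a rebuild. The paper's proof union bounds over $n$ vertices \emph{and} over $n^2$ choices of graph-plus-orientation, which is exactly what produces the exponent $-c^2/3+3$, and it explicitly invokes the oblivious-adversary assumption to justify that the sequence of graphs can be fixed in advance (otherwise an adaptive adversary could observe the class labels and insert edges concentrated inside one class, blowing up its arboricity). You never mention the oblivious adversary, and you attribute the $+3$ to a union bound over $n$ vertices, $k \le n$ classes, and $\bigO(\log n)$ copies --- but for a fixed vertex only its own class matters, so that count does not produce $n^3$, and it does not address the temporal issue at all. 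As written, your concentration argument only certifies the partition at one instant, which is not enough.

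Second, your parameter choice breaks the Chernoff bound. With $k = \lceil\hat\alpha/((c+1)\log n)\rceil$ and an out-orientation of degree $d = \Theta(\hat\alpha)$, the expected within-class out-degree is $d/k = \Theta((c+1)\log n)$, i.e.\ already on the order of the threshold $(c+1)\log n$; the probability of exceeding the threshold is then constant, not $n^{-\Omega(c^2)}$. To get the $n^{-c^2/3}$ tail you need the mean to be $\Theta(\log n)$ and the deviation factor to be $\Theta(c)$, which the paper obtains by using $\Theta(\alpha/\log n)$ classes (the $8$ in $8\alpha/\log n$ absorbs the constant in the orientation). Setting $k \approx \hat\alpha/\log n$ instead would repair this. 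A smaller slip: an edge update lands inside at most one sub-instance per Sawlani--Wang copy (only if both endpoints share a class), not two.
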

\begin{proof}
Given a graph of sufficiently large arboricity $\alpha \in \Omega(\log n)$, we can partition its vertices into $y \in 8\alpha / \log n$ sets $V_1, V_2, \dots, V_y$ uniformly at random. Then, the arboricity of a graph induced by $V_i$, for all $i$, is $\bigO(\log n)$, with high probability. 

Indeed; fix a vertex $u$ and let $V(u) = V_i$ be the set that $u$ belongs to that is $u \in V_i$. Now pick any graph $H$ with vertex set $V$ and endow it with any $8\alpha$ bounded out-orientation. Note that $G$ together with any $8\alpha$ out-orientation would be one such valid choice. Now consider the out-degree of $u$ in $H[V(u)]$. Observe first that we may assume that $u$ has $8\alpha$ out-neighbours -- otherwise $d^{+}_{G[V_i]}(u)$ is stochastically dominated by the random variable obtained by artificially adding the remaining out-neighbours. We find that 
\[
\mathbb{E}[d^{+}_{G[V_i]}(u)] = \mathbb{E}(\sum \limits_{v \in N^{+}(u)} [v \in V_i]) = \sum \limits_{v \in N^{+}(u)} P(v \in V_i) = 8\alpha \cdot{}\frac{\log n}{8\alpha} = \log n
\] 
Since each vertex is partitioned independently, we can apply a standard Chernoff bound to find that:
\[
P(d^{+}_{G[V_i]}(u) > (c+1)\log{n}) \leq \exp{(-\frac{c^2}{3}\log n)} = n^{-c^2/3}
\]
So the probability that $u$ has at most $(c+1)\log n$ out-neighbours in this particular graph is at most $n^{-c^2/3}$. Now we may union bound over all n vertices in $V$, and over $n^2$ choices of $H$ together with the out-orientation to find that with probability at least $1-n^{-c^2/3+3}$ all sets $V_i$ induce graphs with arboricity $\bigO{\log n}$ for all the $n^2$ choices of $H$. In particular, we may let our dynamic graph $G_{i}, \dots, G_{n^2+i}$ together with the maintained out-orientation play the role of $H$ and bounded out-orientation of $H$. Note that we are using the oblivious adversary assumption to arrive at this conclusion. 

Hence, with high probability our vertex partition works for $n^2$ updates in a row. This is more than sufficient to completely rebuild a new random partition and the required data structures in each partition in the back ground. Using standard techniques this can be deamortised. 

The only remaining problem is that $\alpha$ might change throughout the updates. But here we can use the algorithm of Sawlani \& Wang~\cite{DBLP:conf/stoc/SawlaniW20} to schedule the updates to $O(\log n)$ copies of $G$ in such a manner that we at all times have a pointer to a fully updated copy and an $\bigO(1)$ approximation of the current arboricity.
More precisely, Sawlani \& Wang show how to maintain copies with an estimated arboricity of $y_i = \alpha_i / \log n$ for arboricities $\alpha_i = 2^i$ for $i \in [\ceil{\log n}]$ as well as a pointer to a copy with an $\bigO(1)$ approximation of the current arboricity that is fully updated. This can be done in worst-case update time $\bigO(log^{4} n)$. 
For each copy of $G$ we maintain the above random partition. 

Let $G_{i,j}$ be the graph induced by $V_j$ in the partition into $y_i$ sets. When we actually insert or delete an edge from the graph for a particular partition, we verify whether both endpoints of an edge $\set{u,v}$ are in the same random subset, i.e. we check whether $V(u) = V(v)$. If that is the case, we insert the edge into to the graph $G_{i,j}$, for $j$ such that $V_j = V(u) = V(v)$ in the $i$'th partition. Overall, we need to perform edge insertion for $\bigO(\log n)$ different values of $i$, hence the overall worst-case cost is $\bigO(\log n) \cdot Q_A(\bigO(\log n), \log n)$.

Now in order to answer a query, we once again rely on the fact that we can track $\bigO(1)$ approximation of the current arboricity and that we have a pointer to a graph where the estimate realises this approximation. Whenever we get a query, we query the relevant copy, or the original graph if the current arboricity was to small to guarantee that the partition induces subgraphs with arboricity $\Theta(\log n)$. Given this partition, the colour of a vertex $u$ is $(j,c)$, where $j$ is the index such that $V_j = V(u)$, and $c$ is the colour of $u$ in the graph induced by $V_j$. Given that the arboricity of this graph is $\bigO(\log n)$, the overall query complexity is at most as large as the query complexity on a graph with arboricity $\bigO(\log n)$.
\end{proof}

\bibliographystyle{alpha} 
\bibliography{ref}

\end{document}